\journal{Journal of Multivariate Analysis}
\theoremstyle{plain}% Theorem-like structures provided by amsthm.sty
\newtheorem{theorem}{Theorem}
\newtheorem{lemma}{Lemma}
\newtheorem{corollary}{Corollary}
\theoremstyle{definition}
\newtheorem{definition}{Definition}
\newcommand\revision[1]{{\color{black}{{#1}}}}
\newcommand\minorrevision[1]{{#1}}
\newcommand\varnum{r}
\DeclareMathOperator{\tr}{tr}
\def\ps@pprintTitle{%
  \let\@oddhead\@empty
  \let\@evenhead\@empty
  \let\@oddfoot\@empty
  \let\@evenfoot\@oddfoot
}
\begin{document}

\begin{frontmatter}

\title{A method for sparse and robust independent component analysis}

\author[1]{Lauri Heinonen\corref{mycorrespondingauthor}}
\author[1]{Joni Virta}

\address[1]{Department of Mathematics and Statistics, University of Turku, 20014 Turku, Finland}
% \address[2]{Address of Author Two in his country's language and rules}
% \address[3]{Address of Author Two in his country's language and rules}

\cortext[mycorrespondingauthor]{Corresponding author. Email address: \url{lauri.k.heinonen@utu.fi}}

\begin{abstract}

This work presents sparse invariant coordinate selection, SICS, a new method for sparse and robust independent component analysis. SICS is based on classical invariant coordinate selection, which is presented in such a form that a LASSO-type penalty can be applied to promote sparsity. Robustness is achieved by using robust scatter matrices. In the first part of the paper, the background and building blocks: scatter matrices, measures of robustness, ICS and independent component analysis, are carefully introduced. Then the proposed new method and its algorithm are derived and presented. This part also includes consistency and breakdown point results for a general case of sparse ICS-like methods. The performance of SICS in identifying sparse independent component loadings is investigated with \revision{multiple} simulations. The method is illustrated with an example in constructing sparse causal graphs \revision{and we also propose a graphical tool for selecting the appropriate sparsity level in SICS}.
%{\color{red} Follows a list of a few keywords in alphabetical order, and then Classification Codes, available for free from MathSciNet; see \url{mathscinet.ams.org/mathscinet/freeTools.html?version=2}.}
\end{abstract}

\begin{keyword} %alphabetical order

independent component analysis \sep
invariant coordinate selection \sep
LASSO \sep
robustness \sep
scatter matrix \sep
sparsity
\MSC[2020] Primary 62H25 \sep
Secondary 62F35 \sep 62J07
\end{keyword}

%	62H25  	Factor analysis and principal components; correspondence analysis

%	62F35  	Robustness and adaptive procedures (parametric inference)

%	62J07  	Ridge regression; shrinkage estimators (Lasso)

\end{frontmatter}

\section{Introduction}\label{sec:introduction}

%Puuttuvia asioita

% \begin{itemize}
%     %\item Abstrakti
%     %\item Intron pätkä
%     %\item Diskussio
%     %\item Kommentteissa ja punaisella mainitut asiat
%     %\item Käännösvirheet
%     %\item Lähdeluettelossa tuplajuttuja
% \end{itemize}

%\joni{ICA katsaus DONE (melko napakasti)}

%\joni{Tässä ekassa kappaleessa voisi mainita scatter-matriisit ja ICS:n paperimme pääasiallisena tapana tehdä ICAa. Oletan siis myöhemmissä intro-kappaleissa että nämä ovat lukijan tiedossa. DONE (melko napakasti)}

%\joni{Jonkun pehmittelyn voisi laittaa intron ensimmäiseksi kappaleeksi, ennen kuin ryntää määrittelemään ICAa. "Usein halutaan hajottaa vaihtelu pienen määrän tulkittavia faktoreita avulla ... Me hoidamme robustilla harvalla ICAlla" DONE}

% The basic ICA-model can be written as
% \begin{align*}
%     x_i = \mu + \Omega z_i,
% \end{align*}
% or, equivalently, $X = 1_n \mu' + Z \Omega'$.

% The basic independent component (IC-) model can be written as
% \begin{align*}
%     x = \mu + \Omega z,
% \end{align*}
% where the random source vector $z$ is assumed to have independent components. The goal is to estimate an unmixing matrix $\Omega^{-1}$ which can be used to get the source vector from the observed vector $x$.

A common goal in statistics and data science is to break down the variation in data to some factors. In this article this problem is tackled with sparse independent component analysis where the variables in data are decomposed into a linear combination of independent latent components (sources). The components are built so that only a subset of the original variables are linked to a particular component (sparsity). Reasons for sparsity are easier interpretability, as there are less coefficients to interpret, and trying to avoid overfitting.

In independent component analysis (ICA) the observed random vector $x$ is assumed to be an unknown deterministic mix of unknown random and independent sources \citep{hyvarinen2001independent}. This (basic linear) independent component model can be written as
\begin{align*}
     x = \Omega z + \mu,
\end{align*}
where $x$ is the observed vector, $\mu \in \mathbb{R}^p$ is some deterministic mean, $\Omega \in \mathbb{R}^{p \times p}$ is the mixing matrix and $z$ is the source vector with independent components. The objective of ICA is to, given a random sample $x_1, \ldots, x_n$ from the model, find an unmixing matrix (estimate of $\Omega^{-1}$) that can be used to obtain the values of the independent sources from the observations. Many different ICA methods have been proposed, for example JADE \citep{cardoso1993JADE} and FastICA \citep{hyvarinen1999fastICA}, see also the review in \cite{nordhausen2018independent}.

Our chosen way to solve the ICA problem comes from the invariant coordinate selection (ICS) \citep{tyler2009invariant}. In ICS, one has two scatter matrices $S_1$ and $S_2$, which are in certain sense generalizations of the covariance matrix. An example of such a matrix is the FOBI-matrix based on the fourth moments, see Section \ref{sec:ica_and_scatter} for details. Now, one jointly diagonalizes the matrices by solving $S_2 v = \lambda S_1 v$. This procedure essentially first removes the variation measured by $S_1$ from the data and then finds a coordinate system (of which $v$ is a basis vector) that maximizes the variation measured by $S_2$. This way we can, in a sense, ``compare'' the variations measured by the scatter matrices, calculated on the same data, and get a new view on the data which best contrasts these two forms of variation, see \cite{tyler2009invariant} for examples. Several sub-cases of ICS have separate names and one example of such a method is the classical ICA method FOBI \citep{cardoso1989FOBI}, where the scatter matrices are the covariance matrix and the FOBI-matrix. Also ICS with other scatter matrices can lead to the solution of the ICA problem and we discuss this relation closer in Section \ref{sec:ica_and_scatter}.

The purpose of this work is to develop, using ICS as a basis, a general methodology for sparse and robust independent component analysis, where by ``robust'' we mean that the method does not break down under the presence of outlying data points. Both topics have been separately pursued in the literature and we next review the most relevant works but, as far as we are aware, the combination of sparsity and robustness is entirely novel. 

Independent component analysis and sparsity can be combined in two natural, but fundamentally different, ways. In \textit{sparse component analysis}, the source vectors $z$ are assumed to be sparse, see, e.g., \cite{malioutov2005sparse, babaie2006sparse, georgiev2007sparse, boukouvalas2018sparsity}. Sparse component analysis is at its most useful in specialized applications where we have reason to expect that a certain subset of the sources is ``inactive'' (taking only the value zero) at any given time. Whereas, in \textit{sparse independent component analysis}, the unmixing matrix $\Omega^{-1}$ (or, sometimes, its inverse) is assumed to be sparse. In this work, we focus on this variant, sparse independent component analysis, due to its generality; the sparsity of $\Omega^{-1}$ leads to highly interpretable independent components, making the method a useful tool regardless of the application area.

Several non-robust sparse ICA methods have been proposed earlier in the literature, the most prevalent approach being formulating a likelihood function for the observations and using a sparsity-inducing penalty function: \cite{hyvarinen2002imposing} took a Bayesian approach and formulated the penalization through conjugate priors, allowing fitting their model via standard ICA applied to an augmented sample; \cite{zhang2006ica} used a SCAD-penalty in combination with post-estimation thresholding; \cite{zhang2009ica} proposed using either adaptive LASSO or optimal brain surgeon, choosing their tuning parameters using AIC/BIC; \cite{palsson2014sparse} used directly a $\ell_0$-penalty; \cite{chen2019sparse} used a combination of standard and group LASSO penalties; \cite{harada2020estimation} used adaptive LASSO with a relaxed form of orthogonality; \cite{ng2023identifiability} maximized a constrained Gaussian likelihood under specific structural assumptions on $\Omega$. Outside of likelihood-based approaches, \cite{abrahamsen2018sparse} derived high-dimensional error bounds under a specific Gaussian form of ICA where the sparsity of $\Omega$ ensures the identifiability of the model, and \cite{ng2023identifiability} proposed a second-order decomposition method for a specific class of structured mixing matrices $\Omega$. \revision{Table \ref{tab:sICA_comparison} in \ref{sec:comparison_table} presents a more detailed comparison of these methods' properties.}

In this work, we take a different viewpoint from the above and implement sparsity using the framework in \cite{li2007sparse} that was inspired by the seminal LASSO-based sparse PCA method by \cite{zou2006sparse}. While \cite{li2007sparse} focused exclusively on sparse sufficient dimension reduction, their framework is directly applicable to the scatter matrix formulation of ICA (and also ICS in general), a fact that appears not have been noticed earlier in the literature. Using this approach to achieve sparse and robust ICA is natural, as the robustness can be implemented to the procedure via the choice of scatter matrices, essentially separating the two aspects (sparsity and robustness) and allowing controlling them through individual tuning parameters.

Of earlier approaches to robust ICA, our work is most similar with \cite{nordhausen2008robust} who likewise used robust scatter matrices to achieve robust estimation of independent components. Besides this, the previous literature on robust ICA includes maximizing robust measures of shape \citep{baloch2005robust} or divergence \citep{chen2013robust}, and using rank \citep{ilmonen2011semiparametrically} and signed rank \citep{hallin2015r} based estimators to circumvent moment assumptions and to achieve semiparametric efficiency. None of the previous references allow for sparsity in the estimation of the independent components.

The main contributions and novelty in the current paper are the following:
\begin{itemize}
    \item We propose a general framework of robust and sparse ICA. Unlike many of the previous works on sparse ICA \citep{hyvarinen2002imposing, zhang2006ica, zhang2009ica, palsson2014sparse, chen2019sparse, harada2020estimation}, our model is semiparametric (and not likelihood-based), meaning that we require neither the pre-specification nor the estimation of the densities of the latent sources. Moreover, both the level of sparsity and the level of robustness of the estimation are controlled via individual tuning parameters, making the method transparent and simple to use in practice.
    \item We establish the asymptotic convergence rate of the resulting estimator as a function of the convergence rates of the used scatter matrices. This result is not limited to ICA but actually applies to the joint diagonalization of any two scatter matrices. As such, it gives convergence rates also in the sufficient dimension reduction context \cite{li2007sparse} and in the fully general context of sparse ICS (SICS). Moreover, when compared to earlier works on asymptotics of sparse ICA \cite{abrahamsen2018sparse, chen2019sparse}, our results do not make distributional assumptions and allow robustness.
    \item \minorrevision{We characterize the breakdown point of SICS in terms of the breakdown points of the two scatter matrices.}
    \item We extensively study the impact of sparsity and robustness on the finite-sample properties of the estimator using simulations. The results confirm that robust methods are needed when the data is contaminated and that sparse methods are clearly beneficial when the underlying situation is sparse. The results also suggest that the sparse and robust variants of the proposed method generally perform extremely well in the tested scenarios.
    \item \revision{We propose a visual tool that allows selecting the appropriate sparsity level in a data-driven way and apply it to a dataset on diabetes patients. The tool is inspired by the classical stability selection \cite{meinshausen2010stability}.}
    
    % \item SFOBI is not likelihood-based, but rather semiparametric. This means that, unlike \cite{hyvarinen2002imposing, zhang2006ica, zhang2009ica, palsson2014sparse, chen2019sparse, harada2020estimation}, SFOBI requires neither the pre-specification nor the estimation of the densities of the latent sources.
    % \item The algorithm for SFOBI is based on well-understood optimization steps, leading to efficient and stable performance. In particular, SFOBI does not need to resort to computationally intensive penalized maximum likelihood computations \citep{chen2019sparse, harada2020estimation} or the use of $\ell_0$-penalty \citep{palsson2014sparse}. The connection between SFOBI and elastic net also leads to easily interpretable tuning parameters which directly control the sparsity levels of the individual ICs.
    % \item \joni{Tähän lisää SFOBI:n hyviä puolia. Esim. myöhempien simulaatioiden pohjalta.}
\end{itemize}

%The paper is organized as follows.

In Section~\ref{sec:ica_and_scatter}, we review the concepts of a scatter matrix and independence property, and recall how they produce a solution to the ICA problem. Also several robust scatter matrices from the literature are presented. Section~\ref{sec:main_method} sees us combining robust scatter matrices with the sparsity framework of~\cite{li2007sparse} to obtain our proposed method. We also establish the convergence rate and the breakdown point of the method in this section. In Section~\ref{sec:simulations}, we compare the performance of the method under different levels of sparsity and robustness using simulations, \revision{including also several competing methods in the evaluations.} A real data example of estimating non-Gaussian causal graphs using sparse ICA is given in Section~\ref{sec:real_data}. \revision{A graphical tool for selecting the sparsity level is proposed in Section \ref{sec:selection_tool},} and we conclude with discussion in Section~\ref{sec:discussion}. \minorrevision{The following notation will be used for different norms: $\| \cdot \|$ denotes the Euclidean vector norm, $\| \cdot \|_1$ the $L_1$-norm for vectors, $\| \cdot \|_F$ the Frobenius norm for matrices and $\| \cdot \|_2$ the spectral norm for matrices.}

% Sparsity in linear dimension reduction is considered currently also in \cite{pfeiffer2023efficient}.

% & \joni{Tähän SFOBI:n kuvaus ja pääpiirteet? Perään myös lyhyt kerronta siitä millaisia muunnelmia FOBI:sta on tähän mennessä tehty. Näitä ovat ainakin \cite{matilainen2015new, virta2017independent, li2019functional, virta2020independent}, mutta tässä riittänee viitata paperiin \cite{nordhausen2019overview}.}

\section{Theoretical background}\label{sec:ica_and_scatter}

% \begin{itemize}
%     \item ICA-ongelman esittely x = Az <- Tämä on jollain tasolla jo aiemmin, mutta on syytä tehdä tässä uudestaan yleisemmille $T$, $S_1$ ja $S_2$
%     \item Scatter-matriisin määritelmä ja symmetrisoinnin määritelmä ja riippumattomuusominaisuuden määritelmä <- tehty
%     \item ICA-ongelman esittely kahden symmetrisoidun scatter-matriisin avulla (voi laittaa lauseen muotoon Taskisen paperista päätuloksen). Taskinen et al. (200X) <- jossain mielessä tehty
%     \item Selitä "two independent copies" antamalla otosversio (kaikki $n^2$ pairttaista erotusta) <- tehty
% \end{itemize}

\subsection{ICA solution using two scatter matrices}

First let us define a location vector, a scatter matrix, the independence property and symmetrized scatter matrices.
\begin{definition}
Let $x \in \mathbb{R}^{p}$ be a random vector. A vector $T \in \mathbb{R}^p$, calculated from $x$, is a \emph{location vector} if it is affine equivariant, in the sense that
\[T(Ax+b)=AT(x)+b\]
for all full rank matrices $A \in \mathbb{R}^{p{\times}p}$ and vectors $b \in \mathbb{R}^p$.
\end{definition}
\begin{definition}
Let $x \in \mathbb{R}^{p}$ be a random vector. A matrix $S \in \mathbb{R}^{p{\times}p}$, calculated from $x$, is a \emph{scatter matrix} if it is positive definite and affine equivariant, in the sense that
\[S(Ax+b)=AS(x)A^\top\]
for all full rank matrices $A \in \mathbb{R}^{p{\times}p}$ and vectors $b \in \mathbb{R}^p$.
\end{definition}
Scatter matrices are a generalization of the regular covariance matrix $\Sigma := \mathrm{Cov}(x)$, which clearly satisfies the desired equation. Some scatter matrices, for example the covariance matrix, have an additional independence property.
\begin{definition}
A scatter matrix $S$ is said to have the \emph{independence property} if $S(x)$ is diagonal for all random vectors $x$ with independent components.
\end{definition}
Now, a question rises, how can we find scatter matrices with the independence property? Besides the covariance matrix, another well-known example is the FOBI-matrix (used in an ICA method named fourth order blind identification \cite{cardoso1989FOBI}) $S_{\mathrm{FOBI}} = \mathbb{E}[ \Tilde{x} \Tilde{x}^\top \Sigma^{-1} \Tilde{x} \Tilde{x}^\top]$, where $\Tilde{x} := x - \mathbb{E}(x)$ and $\Sigma$ denotes the covariance matrix of $x$. Moreover, further scatter matrices with the independence property can be obtained via the process call symmetrization, and we next state \citep[Theorem 1]{oja2006scatter} defining \emph{symmetrized scatter matrices}
\begin{theorem}
\citep{oja2006scatter} Let $S$ be a scatter matrix and $x_1$, $x_2$ two independent copies of a random vector $x$. Then the symmetrized scatter matrix $S_s(x) := S(x_1-x_2)$ has the independence property.
\end{theorem}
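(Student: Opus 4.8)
The plan is to exploit the affine equivariance of $S$ together with a distributional symmetry that symmetrization manufactures. The crucial observation is that when $x$ has independent components, the symmetrized vector $y := x_1 - x_2$ not only has independent components but is also marginally symmetric about the origin in each coordinate. First I would write $x = (x^{(1)}, \dots, x^{(p)})^\top$ and note that the coordinates of $y$ are $y^{(j)} = x_1^{(j)} - x_2^{(j)}$; since the coordinates of $x$ are independent and $x_1, x_2$ are independent copies, the $y^{(j)}$ are independent. Moreover $x_1^{(j)}$ and $x_2^{(j)}$ are identically distributed, so $y^{(j)} = x_1^{(j)} - x_2^{(j)} \stackrel{d}{=} x_2^{(j)} - x_1^{(j)} = -y^{(j)}$, i.e.\ each coordinate of $y$ is symmetric about zero.

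Next I would introduce sign-change matrices. Let $J = \operatorname{diag}(\varepsilon_1, \dots, \varepsilon_p)$ with each $\varepsilon_j \in \{-1, +1\}$; such a $J$ is full rank. Combining the coordinatewise symmetry with the independence of the coordinates, the joint law of $y$ is invariant under arbitrary coordinate sign flips, i.e.\ $Jy \stackrel{d}{=} y$. Because a scatter matrix is a functional of the distribution of its argument, this gives $S(Jy) = S(y)$. On the other hand, affine equivariance applied with $A = J$ and $b = 0$ yields $S(Jy) = J S(y) J^\top = J S(y) J$, the last equality using that $J$ is diagonal. Combining the two identities, $J S(y) J = S(y)$ for every sign-change matrix $J$.

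Finally I would extract diagonality from this invariance. Fix indices $i \neq j$ and take $J$ to flip only the $i$-th coordinate (so $\varepsilon_i = -1$ and $\varepsilon_k = 1$ for $k \neq i$). Reading off the $(i,j)$ entry of $J S(y) J = S(y)$ gives $-S(y)_{ij} = S(y)_{ij}$, forcing $S(y)_{ij} = 0$. Since $i \neq j$ were arbitrary, all off-diagonal entries of $S(y) = S_s(x)$ vanish, so $S_s(x)$ is diagonal whenever $x$ has independent components, which is precisely the independence property. (Affine equivariance of $S_s$ itself, making it a genuine scatter matrix, is immediate from that of $S$ via $(Ax_1+b)-(Ax_2+b) = A(x_1-x_2)$, but is not required by the stated claim.)

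I expect the only genuinely delicate point to be the justification that the symmetrized vector inherits full sign-flip invariance of its joint law, rather than merely coordinatewise symmetry — this is exactly where the independence of the coordinates of $x$ is indispensable. Once that invariance is established, the affine equivariance argument with single-coordinate sign flips is entirely routine.
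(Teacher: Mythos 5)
Your proof is correct, and it is essentially the standard argument: the paper itself states this result as a citation to \cite{oja2006scatter} without reproducing the proof, and the proof in that reference proceeds exactly as you do, by noting that $x_1-x_2$ has independent, symmetrically distributed components, hence its law is invariant under sign-change matrices $J$, so that affine equivariance gives $JS(y)J = S(y)$ and forces all off-diagonal entries to vanish. No gaps; your identification of the sign-flip invariance of the joint law (requiring both independence and marginal symmetry) as the crux of the argument is exactly right.
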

On the sample level, the symmetrized scatter matrix $S_s$ can be calculated by applying $S$ to the sample of all $n^2$ pairwise differences $x_i-x_j$ (where $i,j=1,2,\dots,n$) of the original observations. \revision{As remarked in \cite{nordhausen2015cautionary}, apart from symmetrized scatter matrices, no other scatter matrices are known to possess the independence property, see also \cite{virta2016one}. Note that $\mathrm{Cov}(x)$ and $S_{\mathrm{FOBI}}$ can be written in a symmetrized form, see \cite{nordhausen2015cautionary}.}

Next, we define the independent component (IC) model using a general location vector and scatter matrix.
\begin{definition}\label{def:ic_model}
Let $T$ be a location vector and let $S_1$ and $S_2$ be two scatter matrices with the independence property. A random vector $x \in \mathbb{R}^p$ is said to have IC-model with respect to $T$, $S_1$ and $S_2$ if
\[
x = \Omega z + \mu,
\]
where the random vector $z \in \mathbb{R}^p$ has independent components, $T(z)=0$, $S_1(z)=I_p$ and $S_2(z)$ is a diagonal matrix with elements $d_1 > \dots > d_p > 0$.
% \textcolor{red}{Lisäsin alarajan $> 0$ koska tämä olennaisesti vaaditaan Li (2007) -tuloksessa.}
\end{definition}
One could relax the assumptions to allow non-strict inequalities for the $d_j$, but this would lead to non-identifiable components, and we assume their strictness throughout this work.
Now \citep[Theorem 2]{oja2006scatter} gives us a way to perform ICA, i.e., to estimate the source vector $z$, in such a case using matrix decompositions. This procedure is equivalent to ICS \cite{tyler1987distribution} with the choice $S_1, S_2$ of scatter matrices.

\begin{theorem}\label{theo:ic_solution_1}
\citep{oja2006scatter} Let $x$ have an IC-model with respect to $T$, $S_1$ and $S_2$. Let us define
\[
B(x) = S_1(x)^{-1/2} \left[U_2\left(S_1(x)^{-1/2}x\right)\right],
\]
where $U_2(x)$ is the matrix of eigenvectors of $S_2(x)$ (in the order of decreasing eigenvalues). Then
\[
B(x)^\top\{ x-T(x) \}=Jz
\]
for some a diagonal matrix $J$ with elements $\pm 1$.
%\textcolor{red}{Transponoin matriisin $B$ tässä, jotta se täsmää muun notaation kanssa.}
\end{theorem}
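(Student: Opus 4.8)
The plan is to exploit the affine equivariance of $T$, $S_1$ and $S_2$ to push every computation down to the level of the source vector $z$, where the defining properties of the IC-model supply the diagonal structure we need. First I would apply equivariance to the representation $x = \Omega z + \mu$: since $T(z) = 0$, $S_1(z) = I_p$ and $S_2(z) = D := \mathrm{diag}(d_1, \dots, d_p)$, equivariance immediately yields $T(x) = \Omega T(z) + \mu = \mu$, so that $x - T(x) = \Omega z$, together with $S_1(x) = \Omega \Omega^\top$ and $S_2(x) = \Omega D \Omega^\top$. I would also record that $\Omega$ must be invertible, as otherwise $S_1(x) = \Omega\Omega^\top$ could not be positive definite; this fact is used in the final step.

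The crux is to understand the spectral structure of $S_2$ applied to the whitened vector $y := S_1(x)^{-1/2} x$. Writing $W := S_1(x)^{-1/2} = (\Omega \Omega^\top)^{-1/2}$, which is symmetric, and setting $V := W \Omega$, a one-line computation gives $V V^\top = W \Omega \Omega^\top W = I_p$, so $V$ is orthogonal. Equivariance of $S_2$ then produces $S_2(y) = W S_2(x) W = V D V^\top$, which I would recognize as an eigendecomposition of $S_2(y)$ with eigenvalues $d_1 > \dots > d_p$ and eigenvectors given by the columns of $V$.

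Here lies the main obstacle, and the reason the model assumes the strict ordering $d_1 > \dots > d_p$: because the eigenvalues are distinct, the associated unit eigenvectors are uniquely determined up to sign. Consequently the eigenvector matrix $U_2(y)$, returned in order of decreasing eigenvalues, must equal $V J$ for some diagonal matrix $J$ with entries $\pm 1$. I would state this carefully, noting that a repeated eigenvalue would introduce a genuine rotational ambiguity within the corresponding eigenspace and invalidate the conclusion; the sign indeterminacy captured by $J$ is precisely what survives under distinctness.

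It then remains to substitute back and simplify. By definition $B(x) = W U_2(y) = W V J = W^2 \Omega J = (\Omega \Omega^\top)^{-1} \Omega J$, so, using $J^\top = J$ and the symmetry of $\Omega\Omega^\top$, we get $B(x)^\top = J \Omega^\top (\Omega \Omega^\top)^{-1}$. Finally, invertibility of $\Omega$ gives $\Omega^\top (\Omega \Omega^\top)^{-1} \Omega = \Omega^\top (\Omega^\top)^{-1} \Omega^{-1} \Omega = I_p$, whence $B(x)^\top \{ x - T(x) \} = B(x)^\top \Omega z = J z$, which is the claimed identity.
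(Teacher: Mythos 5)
Your proof is correct: the equivariance reduction $T(x)=\mu$, $S_1(x)=\Omega\Omega^\top$, $S_2(x)=\Omega D \Omega^\top$, the observation that $V=(\Omega\Omega^\top)^{-1/2}\Omega$ is orthogonal so that $S_2\bigl(S_1(x)^{-1/2}x\bigr)=VDV^\top$ is an eigendecomposition whose distinct eigenvalues pin down the eigenvectors up to sign, and the final cancellation $\Omega^\top(\Omega\Omega^\top)^{-1}\Omega=I_p$ are all sound. Note that the paper does not prove this statement itself but imports it from \cite{oja2006scatter}; your argument is exactly the standard equivariance-based proof of that reference, and it is the same technique the paper uses in its own proof of Theorem \ref{theo:ic_solution_2}, where the identity $S_2\bigl(S_1(x)^{-1/2}x\bigr)=S_1(x)^{-1/2}S_2(x)S_1(x)^{-1/2}$ plays the identical role.
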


In Theorem \ref{theo:ic_solution_1}, the location vector $T$ is used simply to fix the location of the independent components. Even without $T$, the ICs could be estimated up to location as $B(x)^\top x$ and, as such, we focus solely on the scatter matrices $S_1$ and $S_2$ in the following, disregarding the location estimation.
Independent component analysis based on Theorem~\ref{theo:ic_solution_1} using symmetrized scatter matrices is discussed in \citep{taskinen2007independent}. Finally, we note that Theorem \ref{theo:ic_solution_1} indeed states that any two scatter matrices with the independence property can be used to solve the IC problem on the population level. However, these scatter matrices might involve various assumptions and their finite-sample properties can still be different. Later we focus on a particular class of scatters that allows us to solve the problem in an outlier-resistant way.

% \begin{itemize}
%     \item Miten robustisuutta mitataan (valitse ne mittarit joilla seuraavan kohdan matriisit ovat robusteja).
%     \item Muutaman robustin scatter-matriisin esittely (myös muutaman robustin lokaation). Ks. R-paketeista jotka on sähköpostissa (- fastM/MVTMLEsymm
% - ICSNP/symm.huber) ja esitellään niissä implementoidut matriisit ainakin.
%     \item Lähteitä: R-paketit + ICS
% \end{itemize}

% Let $Q$ be a probability distribution, which in practice is the empirical distribution.

% {\color{red} ($\rho$:lle olemassa kasvavuusehto \citep{dumbgen2015m} (tulee estimointiyhtälöstä) ja integraaliehto \citep{dumbgen2016new}, ei minulle selvää sisältääkö toinen toisen tai onko toinen turha...)}

%fastM/MVTMLEsymm \citep{dumbgen2016new}.
%Let us define the scatter M-functional. 

\subsection{Regression formulation for scatter matrix ICA}

Fix next two scatter matrices, $S_1$ and $S_2$, both having the independence property. Computing the respective IC solution is simple to do via eigendecompositions as specified in Theorem \ref{theo:ic_solution_1}, but we next present still an alternative way of obtaining the same solution, originally presented in the context of sufficient dimension reduction in \cite{li2007sparse}. The reason for introducing this auxiliary (and more complex) way of obtaining the solution has the benefit that it can be combined with sparsity in a natural way.

For $k \in \{ 1, \ldots, p \}$, let $U_k \in \mathbb{R}^{p \times k}$ denote the matrix comprising of the first $k$ columns of $U_2\left(S_1(x)^{-1/2}x\right)$.

\begin{theorem}\label{theo:ic_solution_2}
    Let $r_j \in \mathbb{R}^p$ denote the columns of $S_2(x)^{1/2}$. Then, the minimizers of
    \begin{align*}
        \sum_{j = 1}^p \| S_1(x)^{-1/2} r_j - A B^\top r_j\|^2
    \end{align*}
    over $A, B \in \mathbb{R}^{p \times k}$, $A^\top A = I_k$, are precisely the pairs $(A, B) = (U_k O_k, S_1(x)^{-1/2} U_k O_k)$, where $O_k$ is any $k \times k$ orthogonal matrix.
\end{theorem}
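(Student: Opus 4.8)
The plan is to recognise the objective as a weighted rank-$k$ matrix approximation problem, eliminate $B$ by an inner least-squares step, and reduce what remains to a trace maximisation handled by the Ky Fan / Rayleigh--Ritz theorem. Throughout I write $S_1 = S_1(x)$, $S_2 = S_2(x)$ and $W = S_2^{1/2}$, all symmetric positive definite; thus $W$ is symmetric, its columns are the $r_j$, and $\sum_{j=1}^p r_j r_j^\top = W^2 = S_2$. Collecting the summands, the objective equals $\| (S_1^{-1/2} - AB^\top) W \|_F^2$, which by the cyclic property of the trace can be written as $\tr\big( (S_1^{-1/2} - AB^\top)^\top (S_1^{-1/2} - AB^\top)\, S_2 \big)$.

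First I would fix $A$ with $A^\top A = I_k$ and minimise over $B$. Expanding the trace and using $A^\top A = I_k$ together with the symmetry of $S_1^{-1/2}$, the objective becomes an affine-plus-quadratic function of $B$ whose quadratic part is $\tr(B^\top S_2 B)$; since $S_2$ is positive definite this is strictly convex in $B$, so the stationarity condition identifies the unique minimiser $B = S_1^{-1/2} A$. Substituting this back and simplifying, the cross term $-2\tr(A^\top M A)$ and the quadratic term $\tr(A^\top M A)$ combine, and the objective collapses to $\tr(S_1^{-1} S_2) - \tr(A^\top M A)$, where $M := S_1^{-1/2} S_2 S_1^{-1/2}$. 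By affine equivariance $M = S_2(S_1^{-1/2} x)$, so its eigenvector matrix is exactly $U := U_2(S_1^{-1/2} x)$ and its eigenvalues are the strictly decreasing values $d_1 > \dots > d_p$.

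Since $\tr(S_1^{-1} S_2)$ is a constant, minimising the original objective over $A$ is equivalent to maximising $\tr(A^\top M A)$ subject to $A^\top A = I_k$. Here I would invoke the Ky Fan maximum principle: the maximum equals $d_1 + \cdots + d_k$ and, because the spectral gap $d_k > d_{k+1}$ is strict, it is attained exactly when the columns of $A$ form an orthonormal basis of the top-$k$ eigenspace $\mathrm{span}(u_1,\dots,u_k)$, i.e.\ precisely when $A = U_k O_k$ for some orthogonal $O_k$. Combining this with the inner solution gives $B = S_1^{-1/2} U_k O_k$, which is exactly the claimed set of minimizers.

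The main obstacle I anticipate is not the algebra but the bookkeeping needed to argue that the described pairs are \emph{precisely} the minimizers rather than merely some of them. The two delicate points are (i) that the inner minimiser in $B$ is unique for each admissible $A$, which is what legitimises profiling $B$ out and reducing to a problem in $A$ alone -- this rests on the strict convexity guaranteed by positive definiteness of $S_2$; and (ii) that the maximiser set of the trace problem is exactly $\{U_k O_k\}$, which requires the strict separation $d_k > d_{k+1}$ supplied by the strict ordering in Definition~\ref{def:ic_model}. Without this gap the top-$k$ eigenspace would not be unique and the sharp characterization would fail.
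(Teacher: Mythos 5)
Your proof is correct and follows essentially the same route as the paper's: profile out $B$ using the strict convexity coming from the positive definiteness of $S_2$, reduce to maximizing $\tr\bigl(A^\top S_1^{-1/2} S_2 S_1^{-1/2} A\bigr)$ over orthonormal $A$, and identify the maximizers via the eigendecomposition of $S_1^{-1/2} S_2 S_1^{-1/2}$ together with affine equivariance. The only difference is cosmetic — you invoke the Ky Fan / Rayleigh--Ritz principle with an explicit appeal to the spectral gap $d_k > d_{k+1}$ for the exact characterization of the maximizer set, where the paper cites Horn and Johnson, Corollary 4.3.39; your explicitness about why the gap is needed for the ``precisely'' claim is, if anything, a slight improvement in rigor.
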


\begin{proof}[Proof of Theorem \ref{theo:ic_solution_2}]
    Simplifying the objective function, we see that the problem is equivalent to minimizing 
    \begin{align*}
        -2 \mathrm{tr}(B^\top S_2(x) S_1^{-1/2}(x) A) + \mathrm{tr}(B^\top S_2(x) B).
    \end{align*}
    Differentiating this with respect to $B$, we get the gradient $-2 S_2(x) S_1^{-1/2} A + 2 S_2 B$. The positive definiteness of $S_2$ thus shows that the minimizing value of $B$ satisfies $B = S_1^{-1/2}(x) A$. Plugging this back in to the objective function, we see that the optimal $A$ maximizes the map $A \mapsto \mathrm{tr}(A^\top S_1^{-1/2}(x) S_2(x) S_1^{-1/2}(x) A)$
    over $A \in \mathbb{R}^{p \times k}$, $A^\top A = I_k$. By \cite[Corollary 4.3.39]{horn2012matrix}, all maximizers are thus of the form $A = V_k O_k$ where $V_k \in \mathbb{R}^{p \times k}$ contain first $k$ eigenvectors of $S_1^{-1/2}(x) S_2(x) S_1^{-1/2}(x)$ as its columns and $O_k$ is an arbitrary $k \times k$ orthogonal matrix. By the affine equivariance of scatter matrices, $U_2\left(S_1(x)^{-1/2}x\right)$ in Theorem \ref{theo:ic_solution_1} contains the $p$ eigenvectors of $S_2(S_1^{-1/2}(x) x) = S_1^{-1/2}(x) S_2(x) S_1^{-1/2}(x)$, showing that $V_k = U_k$ (up to sign), and concluding the proof.
\end{proof}

Two notes are in order: (i) Theorem \ref{theo:ic_solution_2} essentially converts the IC problem into minimization of a specific sum of squares. In the next section, we show how this minimization can be carried out using alternative least squares, allowing the sparsification of the solution through an $\ell_1$-penalty. (ii) All minimizers $B$ of the objective function in Theorem \ref{theo:ic_solution_2} satisfy $\mathrm{col}(B) = \mathrm{col}(B(x))$ where $\mathrm{col}(\cdot)$ denotes the column space and $B(x)$ is as in Theorem \ref{theo:ic_solution_1}. As such, the solution in Theorem~\ref{theo:ic_solution_2} is slightly less general than the original IC solution in Theorem~\ref{theo:ic_solution_1} in that the former captures only the space spanned by the first $k$ independent components, not their individual directions. Interestingly, this ambiguity is unavoidable and occurs even though our ICs have differing ``kurtoses'' $d_1, \ldots, d_p$ in Definition \ref{def:ic_model}. Naturally, this ambiguity vanishes when $k = 1$ (the only $1 \times 1$ orthogonal matrices are the scalars $\pm 1$), and to avoid it also in the case $k > 1$, later in Section \ref{sec:main_method}, we propose a correction (lines 12--13 in Algorithm \ref{alg:SICS}) that corrects for the presence of the orthogonal transformation $O_k$ in our algorithm.

Finally, we remark that this $O_k$-ambiguity applies also to Theorem 3 in \cite{zou2006sparse} and Propositions 1 and 2 in \cite{li2007sparse} but, likely due to an oversight, the authors have missed it. However, this omission was noted later in Remark II.1 in \cite{deng2019group}.

\subsection{Symmetrized robust scatter matrices}

Having observed how the ICA solution can be found using two scatter matrices, we now take a look at a specific class of robust scatter matrices, robust $M$-estimators. By robust we mean that small deviations from assumptions do not impair the model's performance too much and large deviations from model do not cause a catastrophe \citep{huber2011robust}. For us, the interesting deviation from assumptions is the existence of outliers. Since we will eventually focus exclusively on symmetrized scatter matrices, we assume, without loss of generality, that the true location of the data $x_1, \ldots, x_n$ is zero.

Now, given a function $\rho: \mathbb{R} \rightarrow \mathbb{R}$, the corresponding $M$-estimator scatter matrix $S$ is found by minimizing
% \[
% L(S; Q) = \int \{ \rho(x^\top S^{-1}x) - \rho(x^\top x) \} Q(dx) + \ln\det S.
% \]
\[
L(S) = \frac{1}{n} \sum_{i=1}^n \{ \rho(x_i^\top S^{-1}x_i) - \rho(x_i^\top x_i) \} + \ln\det S.
\]
This expression originally comes from the maximum likelihood estimation of an elliptic distribution with density proportional to $\mathrm{det}(S^{-1})\exp(-\rho(x^\top S^{-1}x))$. The function $\rho$ is typically assumed to satisfy various regularity conditions, see \cite{dumbgen2015m, dumbgen2016new}. One common choice for the function $\rho$ is $\rho_{\nu,p}(z) = (\nu+p)\ln(\nu+z)$ which corresponds to the density of the $p$-variate $t$-distribution with $\nu>0$ degrees of freedom. %{\color{red} esittele tätä ehkä enemmän}

%{\color{red} Jos haluttaisiin estimoida myös lokaatio, niin kikkaillaan sopiva $y$ ja $\Gamma$ ja laitetaan ne tuohon tehtävään \citep[luku 2.2.]{dumbgen2016new}. Onko tehtävä kuitenkaan ihan sama?} 

% ESITETÄÄN UUDESTAAN PAINOFUNKTIOILLA (JA SUMMA-/OTOSVERSIONA) MM. \cite{sirkia2007symmetrised} JA ALUNPERIN \cite{maronna1976robust} (VAIKKA MINÄ EN KYLLÄ SIINÄ ESITETYN YHTÄLÖPARIN YHTEYTTÄ TÄHÄN LAUSEKKEESEEN NÄE) TYYLIIN.

The above minimization problem leads to an estimating equation
\begin{align}\label{eq:estimating_eq}
\frac{1}{n} \sum_{i=1}^n w\left(\|S^{-1/2}x_i\|^2\right) \|S^{-1/2}x_i\|^{-2} S^{-1/2}x_ix_i^\top S^{-1/2}
= I_p,    
\end{align}
where $w(z)=\rho'(z)z$ is called a weight function. $M$-estimators are often defined in terms of the function $w$ instead of $\rho$. %{\color{red}(tuleeko joku kerroin eteen, koska aiemmassa ei ole determinantin logaritmin edessä kerrointa puoli)}.
% This can be generalized \citep{maronna1976robust} to
% \[
% \frac{1}{n} \sum_{i=1}^n w_1\left(\|S^{-1/2}x_i\|\right) \|S^{-1/2}x_i\|^{-2} S^{-1/2}x_ix_i'S^{-1/2}
% = \frac{1}{n} \sum_{i=1}^n w_2\left(\|S^{-1/2}x_i\|\right) I_p,
% \]
% where $w_1$ and $w_2$ are some weight functions.
Common choices for $w$ are \citep{sirkia2007symmetrised} $w(z)=z$ giving the regular covariance matrix, $w(z)=p$ giving Tyler's $M$-estimator \citep{tyler1987distribution}, and Huber's $M$-estimator which uses $w(z) = \mathbb{I}(|z| \leq c) |z|/\sigma + \mathbb{I}(|z| > c) c/\sigma$,
% \[
% w(z)=
% \begin{cases}
% |z|/\sigma, \ &\text{for } |z| \leq c \\
% c/\sigma, \ &\text{for } |z| > c,
% \end{cases}
% \]
where the tuning constant $c$ is chosen to control $\mathbb{P}(\chi^2_p \leq c^2/2)$ and the scaling factor $\sigma$ is chosen so that $\mathbb{E}[w(\|x\|^2)]=p$ for $x \sim N(0,I_p)$.

Now we can define symmetrized $M$-estimates by replacing the observations $x_i$ by pairwise differences $x_i-x_j$ (similarly, the averaging $\frac{1}{n}\sum_{i=1}^n$ in \eqref{eq:estimating_eq} is replaced by the double sum $\frac{2}{n(n-1)}\sum_{i<j}$). In this work, we use symmetrized versions of two of the robust scatter matrix families discussed above, the MLE of $t$-distribution \citep{dumbgen2016new} and Huber's M-estimator \citep{sirkia2007symmetrised} \revision{as they fulfill the two important preconditions: the independence property for doing ICA with ICS, and robustness}. These have been implemented in the R-packages \texttt{fastM} \citep{RfastM} and \texttt{ICSNP} \citep{RICSNP}, respectively.

\revision{Naturally, other choices of symmetrized scatters could also be equally viable in our context. Scatter matrices can be categorized into three groups \citep{tyler2009invariant}: Class I contains all non-robust scatters (such as $\mathrm{Cov}(x)$ and $S_{\mathrm{FOBI}}$), Class II contains scatters that are moderately robust and have relatively light computational complexity (such as $M$-estimators), and Class III contains scatters with high robustness and high computational complexity (such as the minimum covariance determinant matrix \citep{rousseeuw1985multivariate}). \cite[Section 6]{tyler2009invariant} state that using two Class II matrices is typically a reasonable choice in ICS when one wants to avoid the computational load of Class III scatters, which in our case would still be increased due to symmetrization. And, while class II contains also other matrices besides our pair of choice, such as D{\"u}mbgen's estimator \citep{dumbgen1998tyler} or one-step $W$-estimators \citep{tyler2009invariant}, our primary focus is on sparsity and thus we have not attempted to perform any extensive evaluations between different robust scatters here. However, several comparisons of symmetrized robust scatter matrices can be found in the literature, see \cite{sirkia2007symmetrised, taskinen2007independent, miettinen2016computation}.}

% TÄTÄ POHDITTU ARTIKKELEISSA .
% ICSNP/symm.huber \citep{sirkia2007symmetrised}

%\textcolor{red}{M-estimaattien breakdown pointit \citep{tyler2014breakdown, dumbgen2005breakdown} ja influence functionit (tai lähteet niihin) \citep{sirkia2007symmetrised}. Oleellista esittää breaking pointista versio scatter matriisille, joka siis ei ole reaaliarvoinen tunnusluku.} \textcolor{blue}{Jonin kommentti: Sama kaava toimii, jos itseisarvojen tilalle laitetaan sopiva normi?}

\subsection{Measuring robustness}

The two most common ways to measure robustness of a statistic are the breakdown point and influence function. Breakdown point describes how large proportion of observations can be outliers without the statistic having arbitrary large/small values. Let $T$ be a statistic taking values in some normed space $(\mathcal{T}, \| \cdot \|_{\mathcal{T}})$, $X$ a sample and $X_\varepsilon$ a sample with proportion $\varepsilon$ of the observation replaced with arbitrary values, i.e., an $\varepsilon$-contamination of $X$. Let 
\[
b(X_\varepsilon,T) = \sup_{X_\varepsilon}\|T(X)-T(X_\varepsilon)\|_{\mathcal{T}}
\]
be the maximal bias. Now we can define the finite-sample breakdown point of $T$ as in \citep{donoho1983notion}, as
\[
\varepsilon^* = \inf \{ \varepsilon \mid b(X_\varepsilon,T)=\infty \}.
\]
As an example of the breakdown point in the case of simple estimators, consider the sample mean and sample median. Sample mean has the lowest possible breakdown point because replacing just one observation with arbitrarily high values can increase the mean without any limit. On the other hand, to increase the sample median to arbitrarily high value requires replacing half of the sample, making its breakdown point $1/2$.
%{\color{red} (Huber itse on sitä mieltä, että tämä on paras määritelmä. Minä olen sitä mieltä. Nyt vain herää kysymys onko tämä ekvivalentti asymptoottisen/``tavallisen'' määritelmän kanssa? Artikkeli sanoo ``It can be defined without recourse to probability model'' ja antaa tämän määritelmän... Itseasiassa kai se on pakko esittää populaatioversio, koska mm. M-estimaateista ja influence functionista annetaan sellainen.)}

It is known that, in general, $M$-estimators have no greater breakdown points than $1/(p+1)$ \citep{maronna1976robust}. Still \cite{tyler2014breakdown} shows that when the contaminating points are assumed to not lay in any low-dimensional hyperplane (called coplanar contamination), $M$-estimates can have a breakdown point close to $1/2$. In general symmetrization can be expected to lower the breakdown point of a scatter matrix, see \cite{dumbgen2005breakdown}, but, as our simulations later in Section \ref{sec:simulations} show, robust M-estimators, even when symmetrized, tolerate outliers particularly well.

Another measure of robustness is the influence function for a statistic $T$. Let $Q$ be a probability distribution and $Q_\varepsilon = (1-\varepsilon) Q + \varepsilon \Delta_x$ be an $\varepsilon$-contaminated version of $Q$, where $\Delta_x$ is a distribution with all probability mass concentrated to the point $x$. Then the influence function of $T$ for a point $x$ can be defined as
\[
IF(x;T,Q) = \lim_{\varepsilon \to 0} \frac{T(Q_\varepsilon)-T(Q)}{\varepsilon}.
\]
The influence function measures how much a (infinitely) small change in the distribution at a point $x$ can change the value of the statistic $T$. One way to characterize a robust statistic is to require that it has a bounded influence function which means that no single contamination at any point can influence the statistic by an unlimited amount.

It follows from Theorem 3 in \cite{sirkia2007symmetrised} that symmetrized $M$-estimator has a bounded influence function when the weight function $w$ is bounded. This is the case for symmetrized Huber's $M$-estimator and the MLE of $t$-distribution with $w(z)=\rho'(z)z=(\nu+p)(1+\frac{\nu}{z})$.

\section{Sparse and robust ICA}\label{sec:main_method}

\subsection{Algorithm}

Based on the concepts in the previous section, we next propose an algorithm for obtaining a robust and sparse ICA solution. Let $S_1$ and $S_2$ be two robust scatter matrices with the independence property, $r_j \in \mathbb{R}^p$ denote the columns of $S_2(x)^{1/2}$ and $\lambda_m \geq 0$ be the sparsity penalization parameter corresponding to the $m$th independent component. Then, the matrix $B = (\beta_1,\dots,\beta_k)$ part of the minimizer  $A,B \in \mathbb{R}^{p \times k}$ $(A^\top A~=~I_k)$ of
\begin{align}\label{eq:main_optimization}
    \sum_{j = 1}^p \| S_1(x)^{-1/2} r_j - A B^\top r_j\|^2 + \sum_{m=1}^{k} \lambda_m \|\beta_m\|_1
\end{align}
is the sparse invariant coordinate selection estimate, up to multiplying from right by some $k \times k$ orthogonal matrix $O_k$ (see Theorem \ref{theo:ic_solution_2}).  In practice, one usually directly chooses the desired numbers of non-zero components $\varnum$ (usually taken to be the same for all components) and the penalization parameters $\lambda_j$ are then determined implicitly based on~$\varnum$.

\begin{algorithm}
\caption{Sparse invariant coordinate selection}\label{alg:SICS}
\KwData{Scatter matrices $S_1, S_2 \in \mathbb{R}^{p{\times}p}$}
\Parameter{Number of invariant coordinates $k$\\
The numbers of non-zero coordinates $(r_1,\dots,r_k)$}
\KwResult{Matrix of the coefficient vectors $B=(\beta_1,\dots,\beta_k) \in \mathbb{R}^{p{\times}k}$}
Let $A =(\alpha_1,\dots,\alpha_k) \in \mathbb{R}^{p{\times}k}$ be the usual non-penalized ICS estimate with respect to $S_1$ and $S_2$; \\
$B = S_1^{-\frac{1}{2}}A$; \\
$B=\text{fix\_signs}(B)$; \\
\RepeatUntil{$\|B-B_{temp}\|_{F}<10^{-6}$}{
$B_{temp}=B$; \\
\For{$j \in \{1,2,\dots,k\}$}{
$\beta_j = \arg\min_{\beta_j} \left\{\sum \|S_2^{\frac{1}{2}}S_1^{-\frac{1}{2}}\alpha_j - S_2^{\frac{1}{2}}\beta_j\|^2 + \lambda_j\| \beta_j\|_1\right\}$;
}
$B=\text{fix\_signs}(B)$; \\
Calculate the singular value decomposition $S_1^{-\frac{1}{2}} S_2 B = UDV^\top$; \\
$A_0=UV^\top$; \\
Calculate the eigendecomposition $A_0^\top S_1^{-\frac{1}{2}} S_2 S_1^{-\frac{1}{2}} A_0 = O_k\Delta O_k^\top$; \\
$A=A_0 O_k$;
}
\Return{$B$};
\end{algorithm}

A procedure for minimizing \eqref{eq:main_optimization} is shown in Algorithm \ref{alg:SICS}. The command fix\_signs (lines 3, 9) flips the signs of the columns $\beta_j$ of $B$ so that the first non-zero element in each column is positive. This is done to eliminate the effect of signs changing between iterations of the repeat-until-loop. Line number 7 comes from being able to represent \eqref{eq:main_optimization} (for a fixed $A$) as
\[
\tr((I_p-AA^\top)S_1^{-1}S_2) + \| S_2^{\frac{1}{2}}S_1^{-\frac{1}{2}}A - S_2^{\frac{1}{2}}B \|_F^2 + \sum_{m=1}^{k} \lambda_m \|\beta_m\|_1,
\]
and we use the function \texttt{solvebeta} from the R-package \texttt{elasticnet} \citep{Relasticnet} to solve this LASSO-type problem (this function allows us to choose the number of non-zero components $\varnum$ instead of the penalization parameters $\lambda_j$).

The estimate for $A$ calculated on lines 10--11 comes from representing \eqref{eq:main_optimization} (for fixed $B$, so without needing the penalty term) as
\begin{align*}
    \sum_{j = 1}^p \| S_1^{-1/2} r_j - A B^\top r_j\|^2 = \| S_1^{-1/2} S_2^{1/2} - A B^\top S_2^{1/2}\|_F^2 = \| S_2^{1/2} S_1^{-1/2} - S_2^{1/2} S_1^{-1/2} S_1^{1/2} B A^\top\|_F^2
\end{align*}
and using the reduced rank Procrustes rotation \citep[Theorem 4]{zou2006sparse} to minimize this expression with respect to $A$. As this estimate is only calculated up to the rotation $O_k$, we choose $O_k$ (lines 12--13) so that $A$ diagonalizes $S_1^{-\frac{1}{2}} S_2 S_1^{-\frac{1}{2}}$. This choice thus corresponds to how the solution to the joint diagonalization behaves in the regular ICS. %{\color{red} Tämä koska ICS:ssä asia on näin eli ratkaisu on diagonaalinen molempien matriisien suhteen}

Algorithm \ref{alg:SICS} thus produces a sparse matrix $B$ (whose sparsity level is controlled by the parameters $\varnum_1, \ldots, \varnum_k$), using which the independent components are obtained as $B^\top x_i$.

\subsection{Consistency of SICS}

We next establish conditions under which the proposed sparse and robust ICA yields consistent estimates of the independent components. The result is stated in terms of two arbitrary scatter matrices $S_1$ and $S_2$, implying that it actually applies not just in ICA, but in the wider framework of ICS, including also the sufficient dimension reduction methodology of \cite{li2007sparse}. For simplicity, we restrict ourselves in the following results to the case $k = 1$, which means that we estimate the first IC (or invariant coordinate, in the general case) only. Given sample estimates $S_{n1}, S_{n2}$ of the two scatter matrices, we let $b_{n, \lambda_{1n}} \in \mathbb{R}^p$ denote the minimizing value of $\beta_1$ (the first column of $B$) in the sample version of the optimization problem~\eqref{eq:main_optimization} when the penalization parameter is $\lambda_1 \equiv \lambda_{n1}$. Similarly, we let $b \in \mathbb{R}^p$ denote the minimizing value of $\beta_1$ in the corresponding population level problem with no penalization. By Theorems \ref{theo:ic_solution_1} and \ref{theo:ic_solution_2}, when $S_1$ and $S_2$ have the independence property, the vector $b$ thus equals the first row of the unmixing matrix $\Omega^{-1}$, up to sign.  As is typical in convergence results related to LASSO-like methods \citep{fu2000asymptotics, chatterjee2011strong}, a necessary condition for the convergence of $b_{n, \lambda_{1n}}$ to $b$ is that the penalty parameter $\lambda_{n1}$ vanishes at an appropriate rate as $n \rightarrow \infty$. Theorem~\ref{theo:main} below is proven in several steps in \ref{sec:proofs} and follows from Theorem \ref{theo:main_result} therein, by selecting $a_n := 1/c_n$.

\begin{theorem}\label{theo:main}
    Let $S_{n1}, S_{n2}$ be two $p \times p$ sample scatter matrices satisfying
    \begin{align*}
        c_n (S_{n1} - S_1) = \mathcal{O}_p(1) \quad \mbox{and} \quad c_n (S_{n2} - S_2) = \mathcal{O}_p(1)
    \end{align*}
     for some sequence $c_n$ satisfying $c_n \rightarrow \infty$ and some positive definite $S_1, S_2$. Let $\lambda_{n1} \rightarrow 0$ be such that $\lambda_{n1} c_n \rightarrow 0$. Then,
    \begin{align*}
        c_n \| s_n b_{n, \lambda_n} - b \| = \mathcal{O}_p(1),
    \end{align*}
    for some sequence $s_n \in \{ -1, 1\}$ of signs.
\end{theorem}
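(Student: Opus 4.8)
The plan is to eliminate the auxiliary vector (here the unit vector $\alpha \in \mathbb{R}^p$, since $k=1$) by exact profiling, and then run a standard two-stage M-estimation argument: first consistency, then an upgrade to the rate via a basic inequality combined with a local quadratic lower bound. For fixed $\beta$ the term $\beta^\top S_{n2}\beta$ and the penalty do not involve $\alpha$, and minimizing $-2\beta^\top S_{n2}S_{n1}^{-1/2}\alpha$ over $\|\alpha\|=1$ has the closed form $\alpha = S_{n1}^{-1/2}S_{n2}\beta/\|S_{n1}^{-1/2}S_{n2}\beta\|$ whenever $\beta\neq 0$. Substituting reduces the problem to minimizing
\[
G_n(\beta) = -2\,\|S_{n1}^{-1/2}S_{n2}\beta\| + \beta^\top S_{n2}\beta + \lambda_{n1}\|\beta\|_1,
\]
with unpenalized population counterpart $G(\beta) = -2\|S_1^{-1/2}S_2\beta\| + \beta^\top S_2\beta$. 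Setting $\nabla G = 0$ shows that the critical points of $G$ are exactly the eigenvectors of $S_1^{-1}S_2$ normalized so that $\beta^\top S_1\beta = 1$, and that $b$, the eigenvector for the largest eigenvalue $d_1$, is the global minimizer; since $G$ is even, $\pm b$ are the only minimizers, which is the source of the sign $s_n$.

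Next I would record the local curvature of $G$ at $b$. Along the radial direction one computes $G(tb) = d_1(t^2 - 2|t|)$, a strict parabola with minimum at $t=1$; in the directions rotating $b$ towards another eigenvector $v_j$ the curvature is governed by the gap $d_1 - d_j > 0$, which is strictly positive precisely because Definition \ref{def:ic_model} imposes $d_1 > d_2 > \cdots > d_p$. Hence the Hessian of $G$ at $b$ is positive definite and there exist $\kappa, \eta > 0$ with
\[
G(\beta) - G(b) \ge \kappa\|\beta - b\|^2 \quad \text{whenever } \|\beta - b\| \le \eta.
\]
I expect this nondegeneracy step, which is where the eigenvalue-gap assumption is essential, to be the main obstacle, since it requires handling the smooth but non-convex objective and correctly combining the radial (normalization) and tangential (eigendirection) contributions.

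To localize, I would prove consistency. On the event that $S_{n1}$ is positive definite (probability tending to one), $G_n \to G$ uniformly on compact sets, because $S_{ni} \to S_i$ in probability, the maps $S\mapsto S^{-1/2}$ and products thereof are continuous at the positive definite limits, and the penalty satisfies $\sup_{\|\beta\|\le K}\lambda_{n1}\|\beta\|_1 \le \lambda_{n1}\sqrt{p}\,K \to 0$. The quadratic term makes $G_n$ coercive, so the minimizer $\hat\beta_n := b_{n,\lambda_n}$ stays in a fixed compact set and, by the usual argmin argument, converges to $\{b,-b\}$; I then take $s_n$ to be the nearer sign, so that $\|\hat\beta_n - s_n b\| \to 0$ in probability and the curvature bound above applies with probability tending to one.

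Finally I would run the rate argument. The basic inequality $G_n(\hat\beta_n) \le G_n(s_n b)$ rearranges, with $R_n := G_n - G$ and $G(s_n b) = G(b)$, to $G(\hat\beta_n) - G(b) \le R_n(s_n b) - R_n(\hat\beta_n)$. The left side is at least $\kappa\|\hat\beta_n - s_n b\|^2$ by the curvature bound. For the right side, the smooth part of $R_n$ is $C^1$ near $b$ (the norm term is differentiable there since $b \neq 0$) with gradient of order $\|S_{ni} - S_i\| = \mathcal{O}_p(1/c_n)$, giving a mean-value increment $\mathcal{O}_p(1/c_n)\|\hat\beta_n - s_n b\|$, while the penalty part is bounded by $\lambda_{n1}\sqrt{p}\,\|\hat\beta_n - s_n b\| = o(1/c_n)\|\hat\beta_n - s_n b\|$ because $\lambda_{n1}c_n \to 0$. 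Writing $\delta_n = \|\hat\beta_n - s_n b\| = \|s_n\hat\beta_n - b\|$, these combine to $\kappa\delta_n^2 \le \mathcal{O}_p(1/c_n)\,\delta_n$, whence $\delta_n = \mathcal{O}_p(1/c_n)$ and therefore $c_n\|s_n b_{n,\lambda_n} - b\| = \mathcal{O}_p(1)$, as claimed.
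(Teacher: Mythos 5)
Your proposal is correct, but it takes a genuinely different route from the paper's. The paper splits the error into two pieces: penalized versus unpenalized sample minimizers (Lemma \ref{lemma:difference_of_I_and_II}, Lemma \ref{lem:minimizer_of_I} and Corollary \ref{cor:from_I_to_II_strong}), handled by an explicit perturbation lower bound with a delicate two-case analysis and hard-coded constants, and unpenalized sample versus population minimizers (Lemma \ref{lemma:from_II_to_III}), handled by exploiting affine equivariance to reduce to the case $S_1 = I_p$, $S_2$ diagonal and then invoking external eigenvector perturbation results; a triangle inequality combines the two. You instead profile out the unit vector $a$ in closed form (the paper does the same inside Lemma \ref{lemma:difference_of_I_and_II} via Cauchy--Schwarz, but keeps the two-block structure) and then run a standard M-estimation argument on the profiled objective: identification of $\pm b$ as the unique population minimizers through the first-order conditions, a local quadratic lower bound from the Hessian --- whose positive definiteness does reduce exactly to the spectral gap, since in the coordinates $S_1 = I_p$, $S_2 = \mathrm{diag}(d_1,\dots,d_p)$ the Hessian at $b = e_1$ is diagonal with entries $2d_1$ and $2d_j(d_1-d_j)/d_1$ for $j \geq 2$ --- then consistency via uniform convergence plus coercivity, and finally the rate from the basic inequality, a mean-value bound of order $\mathcal{O}_p(1/c_n)$ on the smooth part of $G_n - G$ near $\pm b$, and the penalty bound $\lambda_{n1}\sqrt{p} = o(1/c_n)$. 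Your route buys modularity and brevity: both error sources (sampling and penalization) are absorbed simultaneously by local strong convexity, no case analysis or explicit constants are needed, and, unlike Lemma \ref{lemma:from_II_to_III}, you never use affine equivariance of the scatter functionals, only convergence of the matrices, so your argument is in that respect slightly more general. The paper's route buys explicit non-asymptotic perturbation constants and intermediate statements of independent interest, such as the closeness of the penalized and unpenalized sample solutions for any vanishing $\lambda_n$. One point to make explicit in a full write-up: your curvature constant $\kappa$ requires simplicity of the leading eigenvalue (the assumption $\rho > \psi$ stated in the paper's proof section, not in the theorem itself), and your argument, like the paper's, genuinely needs it --- without the gap, $b$ is not identified and the tangential Hessian eigenvalues $2d_j(d_1-d_j)/d_1$ degenerate to zero.
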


In a typical case, the convergence rate of the scatter estimates would be $c_n = \sqrt{n}$, see e.g. \cite{ilmonen2010characteristics, miettinen2015fourth}, meaning that the ``optimal'' (most sparsity inducing) choice of the penalization parameter is $\lambda_{n1} = n^{-1/2 - \varepsilon}$ for some arbitrarily small $\varepsilon > 0$. By Theorem \ref{theo:main}, this choice then leads to $b_{n, \lambda_n}$ also inheriting the convergence rate $\sqrt{n}$ from the two scatter matrices.

We still give a short summary of the techniques of proof used in showing Theorem \ref{theo:main}: We break the problem down in the tasks of separately controlling the errors between $b_{n, \lambda_n}$ and $ b_{n, 0}$ and between $ b_{n, 0}$ and $b$. The latter task does not depend on the sparsity parameter and is based on manipulating the matrix form of the generalized eigendecomposition of $S_{n2}$ w.r.t. $S_{n1}$ to establish the convergence in a specific coordinate system and then extending the result to the general case via affine equivariance (Lemma \ref{lemma:from_II_to_III}). The former task (error between $b_{n, \lambda_n}$ and $ b_{n, 0}$) is more tedious, and involves obtaining a sequence of lower bounds for the difference of the two objective functions and showing that, as soon as $\lambda_{n1}$ vanihses at a suitable rate, the corresponding minimizers are within an $\varepsilon$-neighbourhood of each other with increasing probability. The combining of the two results then yields Theorem \ref{theo:main}.

\subsection{Robustness of SICS}\label{sec:robustness_of_sics}

\minorrevision{
In this section, we give a result showing that, when using robust scatter matrices, the SICS algorithm gives robust results. It is clear that when a scatter matrix breaks in sense of breakdown point, algorithms based on it also break. The following Theorem \ref{theo:main_2} states that if neither of the scatter matrices break, also SICS does not break. Therefore, the breakdown point of SICS satisfies $\varepsilon^*(\text{SICS}) = \min(\varepsilon^*(S_1), \varepsilon^*(S_2))$. The proof of the theorem is given in \ref{sec:proofs_2}. Let us denote in the following the set of $p \times p$ positive definite matrices by $\mathcal{S}_+^p$ and the $k$th largest eigenvalue of a matrix $S$ by $\phi_k(S)$.
}
\minorrevision{

\begin{theorem}\label{theo:main_2}
    Let $f(A, B; S_1, S_2) = \| S_1^{-\frac{1}{2}} S_2^{\frac{1}{2}} - A B^\top S_2^{\frac{1}{2}} \|_F^2 + \lambda\sum_{m=1}^k \| \beta_m \|_1$ be the SICS objective function with $A,B = (\beta_1,\dots,\beta_k) \in \mathbb{R}^{p \times k}$, $A^\top A~=~I_k$ and fixed $\lambda \geq 0$. Let $(S_{10}, S_{20}) := (S_1(X), S_2(X))$ be the scatters $S_1, S_2$ evaluated for some fixed sample $X$. Fix $\varepsilon > 0$ and let
    \begin{align*}
        \mathcal{G} = \{ (S_1(X_\varepsilon), S_2(X_\varepsilon) ) \mid X_{\varepsilon} \mbox{ is an } \varepsilon \mbox{-contamination of } X \} \subset \mathcal{S}_+^p \times \mathcal{S}_+^p.
    \end{align*}
    Assume that the following hold:
    \begin{itemize}
        \item[(i)] We have $\inf_{(S_1, S_2) \in \mathcal{G}} \phi_p(S_2) > 0$. % and $\sup_{(S_1, S_2) \in \mathcal{G}} \phi_1(S_2) < \infty$.
        \item[(ii)] We have $\sup_{(S_1, S_2) \in \mathcal{G}} \| S_1^{-\frac{1}{2}} - S_{10}^{-\frac{1}{2}} \|_F < \infty$ and  $\sup_{(S_1, S_2) \in \mathcal{G}} \| S_2 - S_{20} \|_F < \infty$.
    \end{itemize}
    Then, letting $(A(S_1, S_2), B(S_1, S_2))$ denote any minimizer of $(A, B) \mapsto f(A, B; S_1, S_2)$, we have
    \begin{align*}
        \sup_{(S_1, S_2) \in \mathcal{G}} \| B(S_1, S_2) - B(S_{10}, S_{20}) \|_F < \infty.
    \end{align*}
\end{theorem}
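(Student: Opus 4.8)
My plan is to prove the statement by a \emph{uniform coercivity} argument: I will show that, across the entire contamination class $\mathcal{G}$, the SICS objective grows at least quadratically in $\|B\|_F$ with a coercivity constant bounded away from zero, while its minimal value stays uniformly bounded. Together these two facts trap every minimizer $B(S_1, S_2)$ inside a single bounded ball whose radius does not depend on the particular contamination, which is exactly the content of the conclusion.

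First I would extract the two quantitative consequences of the hypotheses that drive everything. Assumption (ii), combined with the triangle inequality against the baseline scatters $S_{10}, S_{20}$, yields $\sup_{\mathcal{G}} \| S_1^{-1/2}\|_F < \infty$ and $\sup_{\mathcal{G}} \| S_2 \|_F < \infty$; since $\| S_1^{-1/2} S_2^{1/2} \|_F \le \| S_1^{-1/2} \|_2 \, \| S_2^{1/2}\|_F$ and $\|S_2^{1/2}\|_F^2 = \tr(S_2)$, this gives a finite constant
\[
M_0 := \sup_{(S_1, S_2) \in \mathcal{G}} \| S_1^{-1/2} S_2^{1/2} \|_F^2 < \infty.
\]
Assumption (i) supplies the coercivity constant $c_0 := \inf_{(S_1,S_2)\in\mathcal{G}} \phi_p(S_2) > 0$. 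The upper bound on the optimal value is then immediate: evaluating $f$ at the feasible point $(A, 0)$ for any $A$ with $A^\top A = I_k$ annihilates the penalty and leaves $\|S_1^{-1/2} S_2^{1/2}\|_F^2 \le M_0$, so the minimal value satisfies $f(A(S_1,S_2), B(S_1,S_2); S_1, S_2) \le M_0$ on all of $\mathcal{G}$, and because the penalty is nonnegative the quadratic part alone at the minimizer is also $\le M_0$.

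For the matching lower bound I would use the orthonormality of the columns of $A$, which preserves the Frobenius norm, together with the positive-definiteness of $S_2^{1/2}$: writing $\| A B^\top S_2^{1/2}\|_F = \| S_2^{1/2} B\|_F \ge \sqrt{\phi_p(S_2)}\,\|B\|_F \ge \sqrt{c_0}\,\|B\|_F$ and applying the reverse triangle inequality gives
\[
\| S_1^{-1/2} S_2^{1/2} - A B^\top S_2^{1/2} \|_F \ge \sqrt{c_0}\,\|B\|_F - \sqrt{M_0}.
\]
Combining this with the upper bound on the quadratic part at the minimizer yields $\sqrt{c_0}\,\|B(S_1,S_2)\|_F \le 2\sqrt{M_0}$, hence $\sup_{\mathcal{G}} \|B(S_1,S_2)\|_F \le 2\sqrt{M_0/c_0} < \infty$. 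The conclusion then follows from $\|B(S_1,S_2) - B(S_{10}, S_{20})\|_F \le \|B(S_1,S_2)\|_F + \|B(S_{10}, S_{20})\|_F$.

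The main obstacle is ensuring the coercivity is genuinely \emph{uniform} over $\mathcal{G}$: the whole argument collapses if the smallest eigenvalue of $S_2$ can drift toward $0$ along some contamination sequence, for then $\sqrt{\phi_p(S_2)}\,\|B\|_F$ would no longer control $\|B\|_F$ and a minimizer could escape to infinity while still fitting $S_1^{-1/2}S_2^{1/2}$ arbitrarily well. This degeneracy is precisely what assumption (i) forbids, so the delicate point is conceptual---recognizing that (i) is the correct non-degeneracy condition---rather than computational. A minor point I would flag explicitly is that, owing to the $O_k$-ambiguity noted after Theorem~\ref{theo:ic_solution_2}, the minimizer need not be unique; the bound above controls the norm of \emph{any} minimizer, which is all the breakdown statement requires.
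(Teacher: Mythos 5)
Your proof is correct, and it takes a genuinely different and considerably shorter route than the paper's. The paper argues by perturbation comparison: its key step (Theorem~\ref{theorem2_res1}) shows that any $B$ sufficiently far from the baseline minimizer $B(S_{10},S_{20})$ yields a strictly larger objective value \emph{under the contaminated scatters} than the baseline minimizer does, via a chain of trace manipulations (including the eigenvalue bound $\tr(AB)\geq\phi_p(A)\tr(B)$ from Schott) applied to the difference $f(A,B;S_1,S_2)-f(A_0,B_0;S_1,S_2)$; this is then packaged into a uniform bound through Corollary~\ref{theorem2_res2} and Theorem~\ref{theorem2_res3}. You instead establish uniform coercivity: testing at $(A,0)$ bounds the optimal value by $M_0=\sup_{\mathcal{G}}\|S_1^{-1/2}S_2^{1/2}\|_F^2<\infty$, while $A^\top A=I_k$ and $\phi_p(S_2)\geq c_0>0$ give $\|AB^\top S_2^{1/2}\|_F=\|S_2^{1/2}B\|_F\geq\sqrt{c_0}\,\|B\|_F$, so the reverse triangle inequality traps every minimizer, for every pair in $\mathcal{G}$, in the single ball $\|B\|_F\leq 2\sqrt{M_0/c_0}$, and the triangle inequality against the fixed matrix $B(S_{10},S_{20})$ finishes. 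Both arguments consume the hypotheses in the same way---assumption (ii) for the uniform norm bounds, assumption (i) for non-degeneracy of the quadratic term---and neither produces a bound that vanishes as the contamination vanishes, so nothing relevant is lost by your shortcut. What the paper's longer route buys is an explicit estimate of the distance $\|B(S_1,S_2)-B(S_{10},S_{20})\|_F$ in terms of the perturbation sizes $\|S_1^{-1/2}-S_{10}^{-1/2}\|_F$ and $\|S_2-S_{20}\|_F$ (Corollary~\ref{theorem2_res2}), a quantitative stability statement that is potentially reusable elsewhere; your ball is centered at the origin and its radius ignores how close $(S_1,S_2)$ is to $(S_{10},S_{20})$, but for the breakdown-point conclusion it is exactly as strong and substantially more elementary.
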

}

\minorrevision{Theorem \ref{theo:main_2} is stated in terms of samples $X$, but it extends trivially to general distributions and their contaminations. The conditions in assumption $(ii)$ in Theorem \ref{theo:main_2} are asymmetric in $S_1$ and $S_2$ due to the form of the objective function. However, Theorem \ref{theorem2_res4} in \ref{sec:proofs_2} gives that the first condition in assumption $(ii)$ holds if we assume $\sup_{(S_1, S_2) \in \mathcal{G}} \| S_1 - S_{10} \|_F < \infty$ and $\inf_{(S_1, S_2) \in \mathcal{G}} \phi_p(S_1) > 0$, leading to sufficient conditions that are symmetric in $S_1, S_2$. Finally, we note that by taking $\lambda = 0$, Theorem \ref{theo:main_2} covers also the robustness of regular ICS.}

% \begin{itemize}
%     \item $\mathcal{G}$ voi olla esim. kontaminaatio DONE
%     \item Tekstiin ylle että $S_{10}, S_{20}$ ovat kontaminoimattoman datan ratkaisu DONE
%     \item Maininta että $\lambda$ sama kaikilla DONE
%     \item Maininta että $k$ mielivaltainen
%     \item $\mathcal{S}_+^p$ ei määritelty DONE
%     \item Miten ehto $S_1^{-1/2}$:lle voidaan tulkita $S_1$:n kautta? Merkitään tekstiin lauseen jälkeen? DONE
% \end{itemize}

\subsection{Computational aspects}\label{sec:computation}

\minorrevision{The computational burden of SICS comes from computing the scatter matrices $S_1$ and $S_2$ and running Algorithm~\ref{alg:SICS} for them, the latter comprising of multiple steps.} The complexity of the scatter computation depends greatly on the particular choices of scatters. For example, the classical FOBI-estimate uses the moment-based matrices $S_1 = \mathrm{Cov}(x)$ and $S_2 = S_{\mathrm{FOBI}}$ which are very fast to compute even for high $n$ and $p$, whereas the computation of symmetrized robust scatter matrices scales rather badly with both the sample size and data dimension. \minorrevision{The runtime of Algorithm~\ref{alg:SICS} is divided to (a) computing the standard ICS-solution, (b) updating the $\beta_j$ using \texttt{solvebeta} \cite{Relasticnet}, a LASSO-solver based on Cholesky decompositions and efficient tracking of inactive and active variables, and (c) matrix factorizations of low-rank matrices.} To demonstrate the computational burden of SICS, we simulated data from the model described later in Section \ref{sec:simulations} with the choices $n = 200, 400, 600$ and $p = 10, 20, \ldots, 100$ and measured the average time (over 20 replications) it takes to estimate $k = 5$ sparse independent components with the sparsity parameter $r = 7$. \minorrevision{The full timing was divided into four parts: scatter computation, lines 1--3 of Algorithm \ref{alg:SICS} (standard ICS), lines 5--9 (LASSO-update) and lines 10--13 (decompositions).} We considered two versions of SICS, non-robust (FOBI) and robust (symmetrized Huber and $t_1$-MLE). \minorrevision{The timings were measured on a 14'' MacBook Pro M3 with 16 GB memory.}

\begin{figure}
    \centering
    \includegraphics[width=0.85\linewidth]{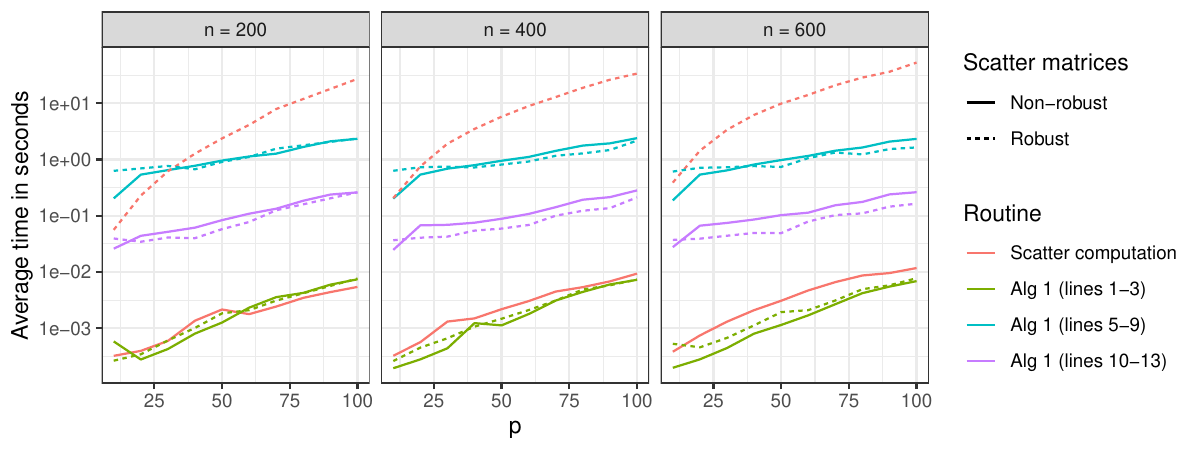}
    \caption{Average computation time of SICS as a function of sample size $n$ and data dimension $p$, divided into the computation of the scatters and the different parts of Algorithm \ref{alg:SICS}. The $y$-axis has a logarithmic scale.}
    \label{fig:timing_breakdown}
\end{figure}

\minorrevision{The results are depicted in Figure \ref{fig:timing_breakdown} and show that the main bottleneck of SICS is the computation of robust scatters which scales superlinearly in both $n$ and $p$ (red dashed line). This is caused by the symmetrization operation which is needed for the independence property and inflates the effective sample size to $n^2$, see Section \ref{sec:ica_and_scatter}. In comparison, computing the FOBI-estimate is extremely fast for all $n, p$. As Algorithm \ref{alg:SICS} consists of optimization routines and matrix decompositions for the pre-computed $p \times p$ scatters $S_1, S_2$, it is natural that its computation time depends very little on $n$ and the choice of scatters. Of the three components of Algorithm \ref{alg:SICS}, the computation of the standard ICS-solution (lines 1-3) is the fastest part. This is as expectable since the other two components require iteration (the repeat-until loop). Computing the matrix decomposition (lines 10-13) is an order of magnitude faster than the LASSO-update (5-9). This is because the matrices in question are low-rank, having sizes $p \times k$ and $k \times k$ with $k = 5$ here, leading to efficient computation of the decompositions.

A possible way to alleviate the computation cost of the robust scatters would be to use in the symmetrization only a subset of the pairwise distances \cite{dumbgen2024approximating}, trading estimation accuracy for computational speed. Similarly, to manage large values of $p$, one could run SICS repeatedly with several random small subsets of variables to gain a preliminary idea on which features are the most important, pruning the rest, and then running SICS on the resulting data with smaller $p$ to obtain the final estimate. Since the data we study in this work are still manageable in size, we did not resort to these shortcuts.}

\section{Simulations}\label{sec:simulations}

\revision{The following subsections describe the results of four simulations studies. These simulations focus on, (1) a general comparison between robust/non-robust and sparse/non-sparse ICS, (2) the impact of the level of sparsity on the estimate, (3) behavior of the method under large dimensionality, and (4) comparison between SICS and other methods of sparse ICA, respectively.}

\subsection{Simulation study \#1: Benefits of robustness and sparsity}

We investigated the performance of the proposed method SICS with simulations. The task was to estimate the sparse unmixing matrix $\Omega^{-1}$ for a data matrix $X = (x_1, \ldots, x_n)^\top$ generated from the IC model $X = Z \Omega^\top$ where $Z = (z_1, \ldots, z_n)^\top$. In every iteration, we first generated a random unmixing matrix $\Omega^{-1} \in \mathbb{R}^{p{\times}p}$, where every row has a preset number $q$ of non-zero elements coming from the Uniform$([-3,-1] \cup [1,3])$-distribution. Then, we generated a random source matrix $Z \in \mathbb{R}^{n{\times}p}$ with the elements of the first column coming from the standardized Laplace distribution, the second column coming from standardized uniform distribution and rest coming from the standard normal distribution. Last, we generated the data matrix $X = Z \Omega^\top$ using $Z$ and $\Omega$. We contaminated the data matrix $X$ by replacing $5\%$ of the rows with observations coming from $N(0,3)$. All simulations were also run similarly to the non-contaminated datasets to evaluate the effect of the contamination to the estimators. We used different variations of proposed SICS algorithm to estimate the coefficient vector $\beta_1$ of the first independent component and calculated its absolute distance to the true first row of the matrix $\Omega$. Because the sign of the coefficient vector is arbitrary, distance to vector with flipped signs was used if it was smaller. Then we took the median of this over $N$ iterations.

The methods to compare were SICS with four different numbers of estimated non-zero coefficients $r$: full $p$ (non-sparse estimate), true $q$ (oracle estimate), $q+3$ and ``hard $q+3$'' where all $p$ coefficients were first estimated in a non-sparse way and then the $p-q-3$ with the smallest absolute values were set to zero (making it a hard thresholding estimate). All of those were run with a non-robust pair of scatter matrices, the covariance matrix and the FOBI matrix, and a robust pair, symmetrized $t$-distribution based ($\nu = 1$) and symmetrized Huber's $M$-estimator (where $c$ is chosen so that $\mathbb{P}(\chi^2_p \leq c^2/2)=0.9$). %\joni{(tuning-paramerien arvot tähän)}

First, we chose the number of variables to be $p=15$, of which non-zero $q=7$. We varied the number of observations $n=500,1000,1500,2000,2500$ and took $N=1000$ repetitions for each $n$. The results are presented in Figure \ref{fig:error_vs_n}. When the dataset is contaminated, methods based on robust scatter matrices perform clearly better than non-robust, as one could expect. SICS with $r=q+3$ performs the best after $n=1500$. With non-contaminated data, the difference between robust and non-robust is not as clear as with contaminated data, but, overall, robust variants still seem to perform better than the non-robust. With high $n$, we see that all robust methods except the choice $r=q$ perform about the same. Overall we can conclude that higher sample size $n$ increases performance, as expected, and that the robust variants outmatch the non-robust ones.

Then, we chose the number of variables to be $p=15$ and the number of observations to be $n=1500$. We varied the number of non-zero coefficients $q=2,3,5,8,11,15$ for again $N=1000$ repetitions for each $q$. The results are presented in Figure \ref{fig:error_vs_q}, where the $y$-axis is standardized by dividing by $q$ which is proportional to the $l_1$-norm of a constant vector with $q$ non-zero coefficients. We can again see that with contaminated data robust variants work best. When the number of non-zero coefficients is small, sparse variants work better, as one could guess. We can also see again that these differences, and similarly the effect of the robustness, are not so clear in the case of the non-contaminated data.

\begin{figure}
    \centering
    \includegraphics[width=0.8\linewidth]{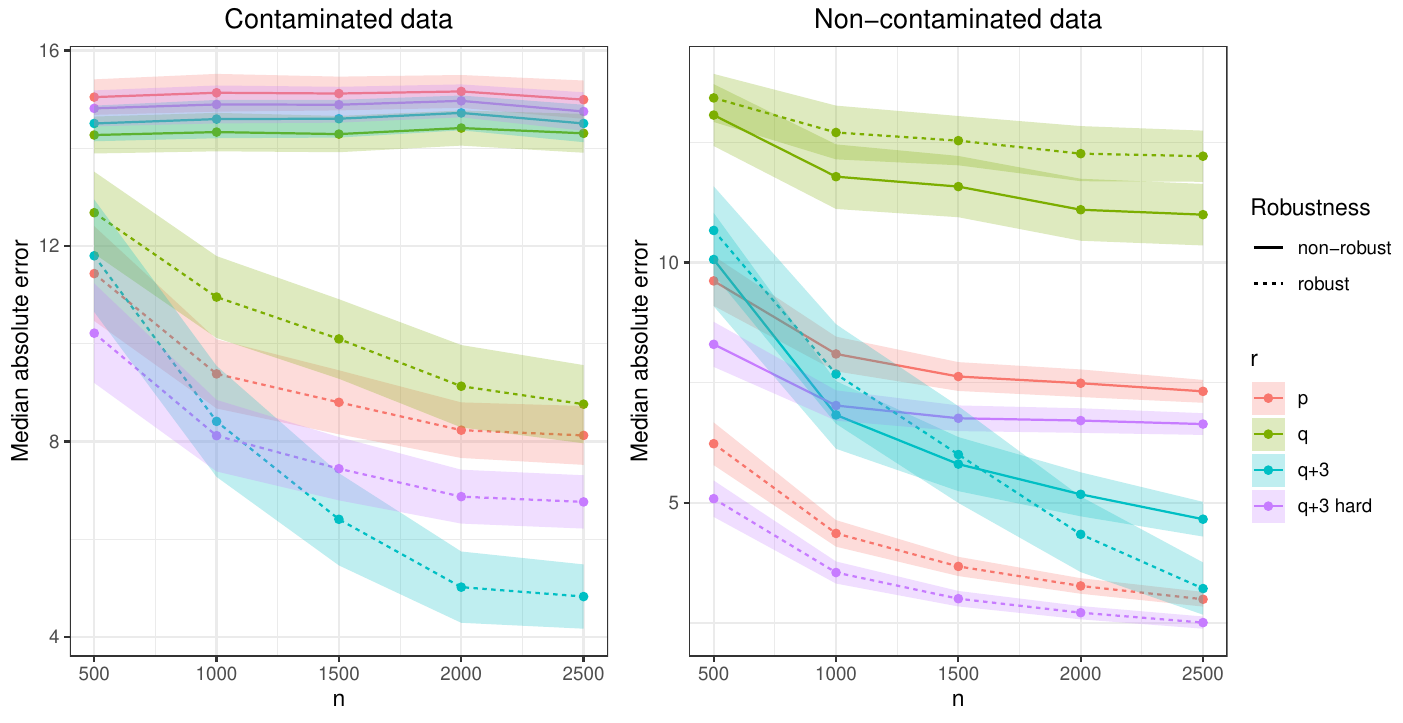}
    \caption{Median absolute error by sample size for different methods. The error ribbon has width $0.2\times$MAD}
    \label{fig:error_vs_n}
\end{figure}
\begin{figure}
    \centering
    \includegraphics[width=0.8\linewidth]{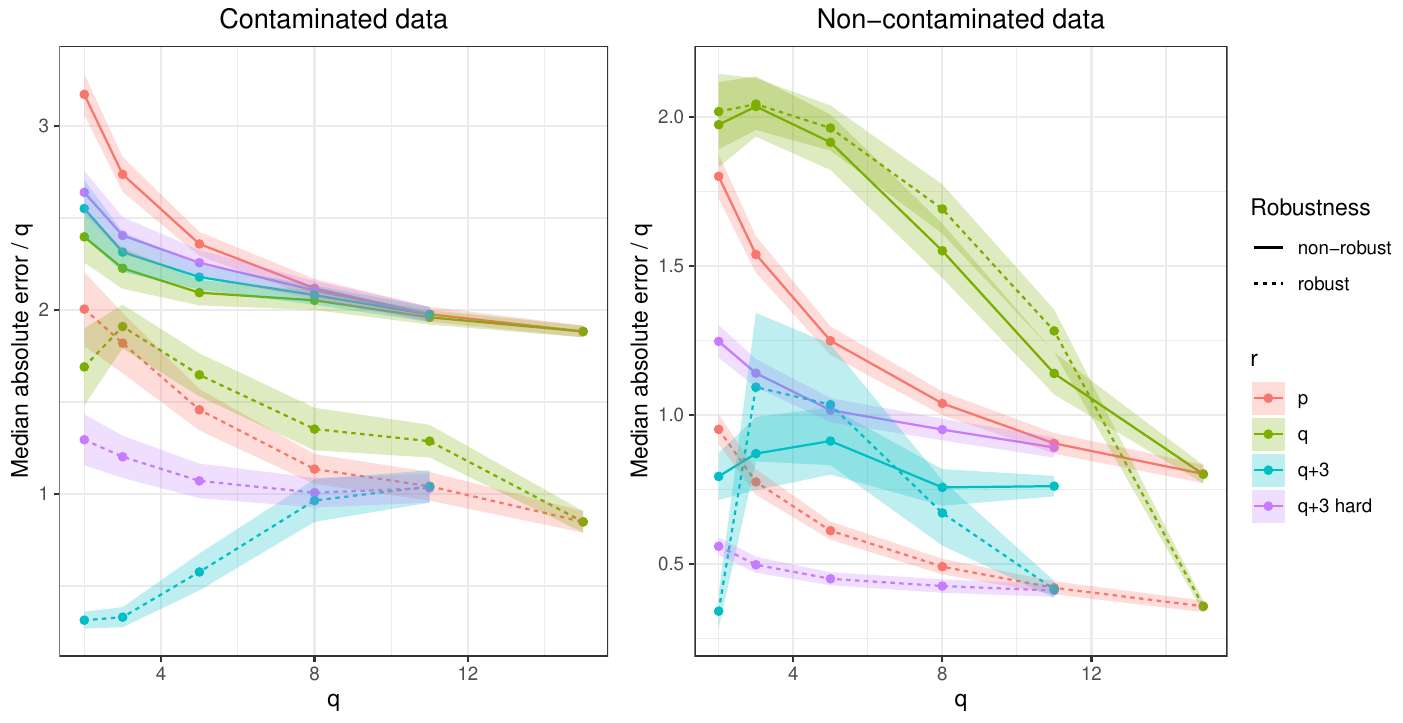}
    \caption{Median absolute error by number of non-zero coefficients for different methods. The error ribbon has width $0.2\times$MAD. Error is scaled by dividing by $q$.}
    \label{fig:error_vs_q}
\end{figure}

\subsection{Simulation study \#2: Choosing the regularisation parameter}

The purpose of our second simulation is to investigate the optimal choice of $\varnum$ (sparsity level of the estimate) for a given $q$ (the true sparsity level). Naturally, it is clear that $\varnum \geq q$ is the minimal requirement for successful estimation but, based on the previous simulations, having $\varnum > q$ is actually preferable and we next study what amount of ``overestimation'' is optimal for finite samples. We generated data from the same model as earlier, with $n = 1000$, a contamination level equal to $5\%$, dimensions $p = 10, 15, 20$ and $q = \lfloor \alpha p \rfloor$ for $\alpha = 0.1, \ldots, 1.0$. We then estimated the first IC using the same robust pair of scatter matrices as in the earlier simulation, separately for each $\varnum = 1, \ldots, p$, and recorded the value of $\varnum$, denoted as $\varnum_{\mathrm{opt}}$ yielding the smallest median absolute error. The simulation was repeated 200 times for each combination of $p, \alpha$, and in Figure \ref{fig:opt_varnum} we give the average values of $\varnum_{\mathrm{opt}}/p$ over the replicates.

\begin{figure}
    \centering
    \includegraphics[width=0.6\linewidth]{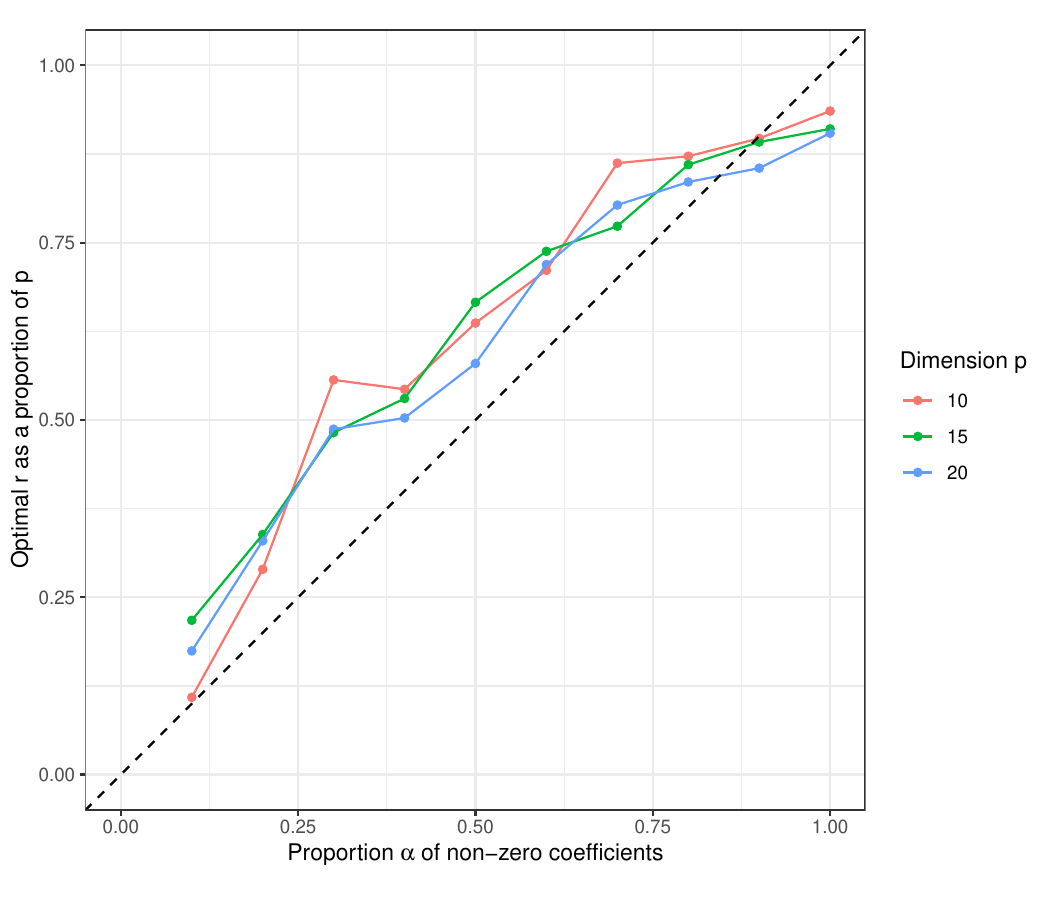}
    \caption{The plot shows the average optimal values of $\varnum$ (as proportions of $p$) as a function of $\alpha$, the true proportion of non-zero coefficients in the first IC. The different lines correspond to different dimensions $p$.}
    \label{fig:opt_varnum}
\end{figure}

The results reveal that the optimal choice of $\varnum$ (as a proportion of $p$) depends very little on the dimension $p$. Moreover, the amount of optimal overestimation depends on the true sparsity in the following manner: if there are only a few true non-zero coefficients, then greater overestimation is preferable, and vice versa. \revision{See Sections \ref{sec:selection_tool}-\ref{sec:discussion} for more discussion on the choice of $\varnum$.}

\subsection{Simulation study \#3: Behavior under increasing $p$}\label{sec:simu_3}

\revision{In our next simulation study, we investigate how SICS behaves when the dimensionality $p$ of the data is increased but the number of non-zero loadings stays constant. The first three independent components are generated from $\mathrm{Gamma}(1, 1), \mathrm{Gamma}(2, 1), \mathrm{Gamma}(3, 1)$ distributions, respectively, and the remaining $p - 3$ are Gaussian. The mixing matrix is taken to be block diagonal such that its top left $3 \times 3$ submatrix is the symmetric square root of the matrix $0.5 I_3 + 0.5 J_3$ where $J_3$ denotes the $3 \times 3$ matrix full of ones. The remaining diagonal elements are set to ones. We estimate the first IC loading vector with regular FOBI (ICS) and with sparse version of FOBI produced by our Algorithm \ref{alg:SICS} (SICS). For the latter we use the sparsity level $r = 10$. This choice mimics a practical situation where we do not know the true sparsity level and overshoot to avoid underestimating it. We use $n = 1000, 2000$ and consider $p = 20, 40, \ldots, 200$, and compute the average estimation error (Euclidean distance between the true and estimated loading vector) over $400$ replicates.}

\revision{The results are shown as a function of $p$ in Figure \ref{fig:dimension_fobi}. We first note that the scenario is quite challenging as the number of data points is kept fixed while at the same time we introduce more and more noise to the model. Thus, it is expected that the error rate is increasing in $p$. However, we observe that, despite the wrong choice of $r$, this increase is for SICS much milder than for ICS, showing that the larger the dimension, the more we benefit from sparse estimation. A prospective future task is to study the theoretical growth rate of the error in scenarios where $n, p_n \rightarrow \infty$. For example, for regular LASSO one can achieve, under specific conditions, convergence rate of the order $\{ (\ln p)/n \}^{1/2}$, see \cite[Section 11]{hastie2015statistical}. Finally, interpreting the results from an efficiency viewpoint, we observe from Figure \ref{fig:dimension_fobi}, e.g., that SICS needs only $n = 1000$ observations to achieve equal or better performance than ICS with $n = 2000$.}

\begin{figure}
    \centering
    \includegraphics[width=0.5\linewidth]{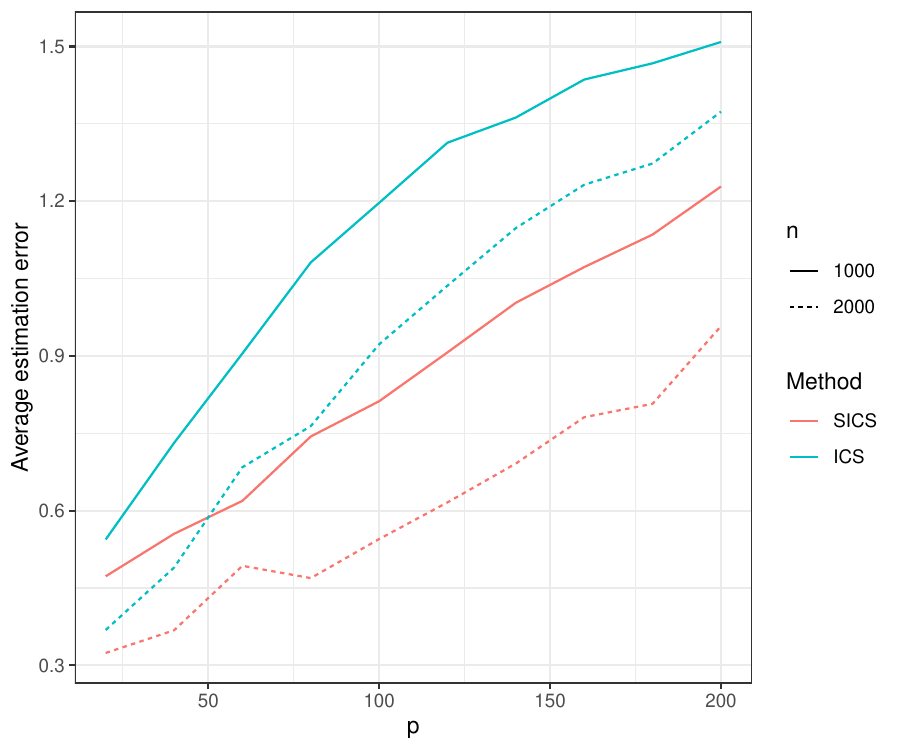}
    \caption{The average estimation error as a function of $p$ in Simulation study \#3}
    \label{fig:dimension_fobi}
\end{figure}

% \revision{Finally, we note that, on a heuristic level, having higher dimensionality can be beneficial for ICS in general.} 

\subsection{Simulation study \#4: Comparison with competing methods}\label{sec:simu_4}

\revision{Next, we compare SICS with some other approaches to doing sparse ICA. A simple way of obtaining a sparse version from the implementations of the well-known ICA methods, JADE and FastICA (in R-packages \texttt{JADE} \citep{RJADE} and \texttt{fICA} \citep{RfICA}), is to use thresholding to put a subset of the coefficients to zero. Let $r$ be the parameter giving the estimated number of non-zero coefficients. In \textit{hard thresholding} $p-r$ smallest coefficients are set to zero after estimation. In \textit{soft thresholding}, in addition to that, all other coefficients are shrunk by the absolute value of the largest coefficient among those set to zero. In addition to these four competitors, we also include a sparse version of FOBI achieved through taking the sparse generalized eigendecomposition of covariance matrix and FOBI-matrix using the implementation of RIFLE \citep{tan2018sparse} in \citep{Rrifle}, with the tuning parameter values $K=1$ and $\lambda = 2\sqrt{\ln p /n}$ in \texttt{initial.convex}.

In this simulation, the basic setup is identical to the first one: We generate a random mixing matrix and then a random data matrix based on that with $p=15$ and the number of non-zero coefficients $q=7$. Then we estimate the first independent component and measure the median absolute error of the loadings. Since none of the competitors is robust, we do not contaminate the data and use non-robust scatter matrices (FOBI) also in SICS. As JADE and fastICA have problems converging when the sample size is in the range we are using, we do two arrangements: First, if JADE does not converge, we ignore that particular instance for JADE. Second, we run fastICA three times for 400 iterations (if it does not converge) and take the best result. Thus, the results for JADE are in some sense presented in too optimistic way as datasets where JADE would not work are not included.

\begin{figure}
    \centering
    \includegraphics[width=0.75\linewidth]{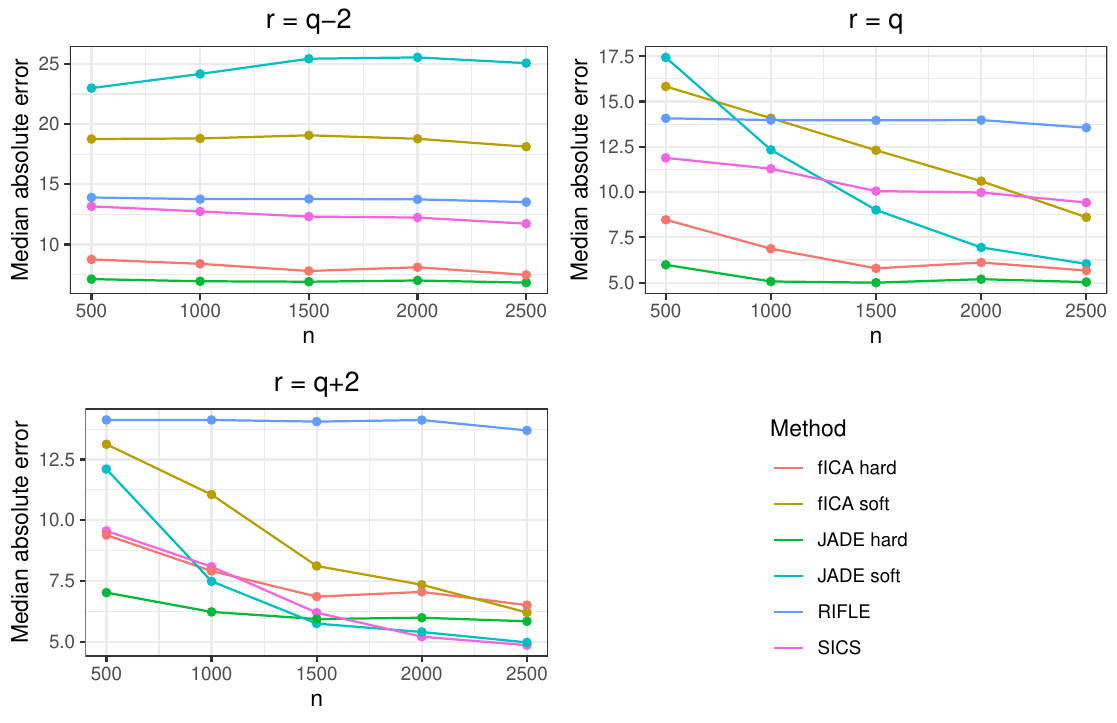}
    \caption{The results of a comparison between SICS and competitors}
    \label{fig:comparison}
\end{figure}

The results over 500 replicates are shown in Figure \ref{fig:comparison} where we can see that the smallest average error is around~5. Multiple methods can reach that, including SICS when $r=q+2$ and $n \geq 2000$. The RIFLE-based FOBI does not seem to perform well but closer inspection shows that it often locates the zero coefficients right, but fails to correctly estimate the non-zero coefficients or their sign. JADE with hard thresholding is in general the most accurate method, which is well-known in non-sparse ICA \cite{miettinen2015fourth}, but it is less reliable in practice and failed to converge in 1571 out of 7500 replicates, skewing the results in Figure \ref{fig:comparison} in its favor. Moreover, the hard thresholding is very heuristic and does not have any theoretical convergence guarantees, unlike our method. Choosing too small $r$, as in $r=q-2$, leads to significantly larger error for all methods. This is larger for soft versus hard thresholding because in soft thresholding putting a non-zero coefficient to zero also reduces all other coefficients without a real reason.

\minorrevision{The median computation times (over 80 repetitions) are presented in Table \ref{tab:comp_times}. The computations were run on Dell Latitude 5520 15,6" i5 with 8GB memory. The timings correspond to case $r=q$, but the results are similar for other values of $r$. It seems that for all method except fICA, the computation times are not dependent on the sample size. This is because the other methods operate directly on $p \times p$ matrices that are computed only once, whereas fICA requires repeated evaluations of objective function gradients separately for each observation.}

\begin{table}
{\footnotesize
    \centering
    \begin{tabular}{r|c|c|c|c|c}
        $n$ & 500 & 1000 & 1500 & 2000 & 2500 \\ \hline
        SICS & 0.448 (0.152) & 0.495 (0.123) & 0.501 (0.144) & 0.488 (0.198) & 0.505 (0.229) \\
        fICA & 0.780 (0.047) & 1.309 (0.095) & 1.952 (0.267) & 2.440 (0.187) & 3.042 (0.312) \\
        JADE & 0.045 (0.012) & 0.072 (0.012) & 0.092 (0.011) & 0.117 (0.013) & 0.143 (0.019) \\
        RIFLE & 0.089 (0.116) & 0.059 (0.078) & 0.069 (0.092) & 0.072 (0.094) & 0.063 (0.083) \\
    \end{tabular}
    \caption{Median running times in seconds (with MAD in parenthesis) for different methods and sample sizes in simulation \#4.}
    \label{tab:comp_times}
    }
\end{table}

}

\section{Robust causal discovery}\label{sec:real_data}

We next demonstrate how the proposed method can be used in robust construction of causal graphs. Given a dataset with $p$ variables, the objective in \textit{causal discovery} is to form a directed acyclic graph whose nodes are the variables and whose edges represent causal relations between the variables \citep{pearl2009causality}. The acyclicity and directedness of the graph then ensures that ``effects cannot precede causes''. \cite{shimizu2006linear} showed the remarkable fact that ICA can be used for non-Gaussian linear causal discovery. Essentially, this is because a linear non-Gaussian causal model between the elements of a random $p$-vector $x$ can be written as $x = Bx + \varepsilon$, where $\varepsilon$ represents measurement noise with non-Gaussian independent components and the $p \times p$ matrix $B$ is strictly lower triangular. By rearranging the terms, we observe that this is actually an IC model, $x = (I - B)^{-1} \varepsilon$ and thus estimable with ICA.

In general, the unmixing matrix estimates produced by ICA-methods are not lower triangular (or permutations thereof) and, as such, typical ICA solutions do not correspond to causal graphs. However, this can be forced with the following two steps: (i) By using a sparse ICA method, the estimate of $I - B$ becomes sparser and usually closer to being a permutation of a lower triangular matrix, see \cite{harada2020estimation}. (ii) By permuting the result of a sparse ICA method to be as lower triangular as possible and afterwards pruning the excess elements to zero, e.g., with the efficient and scalable algorithm in \cite{hoyer2006new}. See also \cite{ng2023identifiability} for a second-order ICA method whose identifiability constraints make it naturally suited to causal discovery.

In this experiment, we apply the previous causal discovery approach to the \texttt{diabetes} dataset available in the R-package \texttt{elasticnet} \citep{Relasticnet}. The data consist of the measurements of $10$ baseline covariates (such as age, sex and bmi) and a response variable (disease progression score (DP)), for $n = 442$ patients. To discover causal relationships between this set of $p = 11$ variables, we use our proposed method to construct causal graphs as described above, see \cite{hoyer2006new} for details of the permutation and pruning algorithm. We distinguish two versions of our method: for the first $S_1$ and $S_2$ equal the covariance matrix and the FOBI-matrix (non-robust method) and for the second $S_1$ and $S_2$ equal the symmetrized MLE of $t_1$-distribution and the symmetrized Huber's $M$-estimator with the tuning parameter value $0.9$ (robust method). In both cases, we estimate a full set of $p$ independent components with $\varnum = 7$ non-zero coefficients per component. As each produced causal graph can be seen as a single ``point estimate'', we bootstrap the original data set $1000$ times, and use both methods (non-robust and robust) to estimate the causal graph for each bootstrap replicate. The final, aggregated graphs are then formed by retaining only those directed edges which are present in at least $40\%$ of the bootstrap graphs, separately for both methods. This low percentage was chosen since the produced graphs exhibited quite a lot of variability, which stems from the fact that the sample size $n = 442$ is relatively small from the viewpoint of ICA.

\begin{figure}
    \centering
    \includegraphics[width=0.70\linewidth]{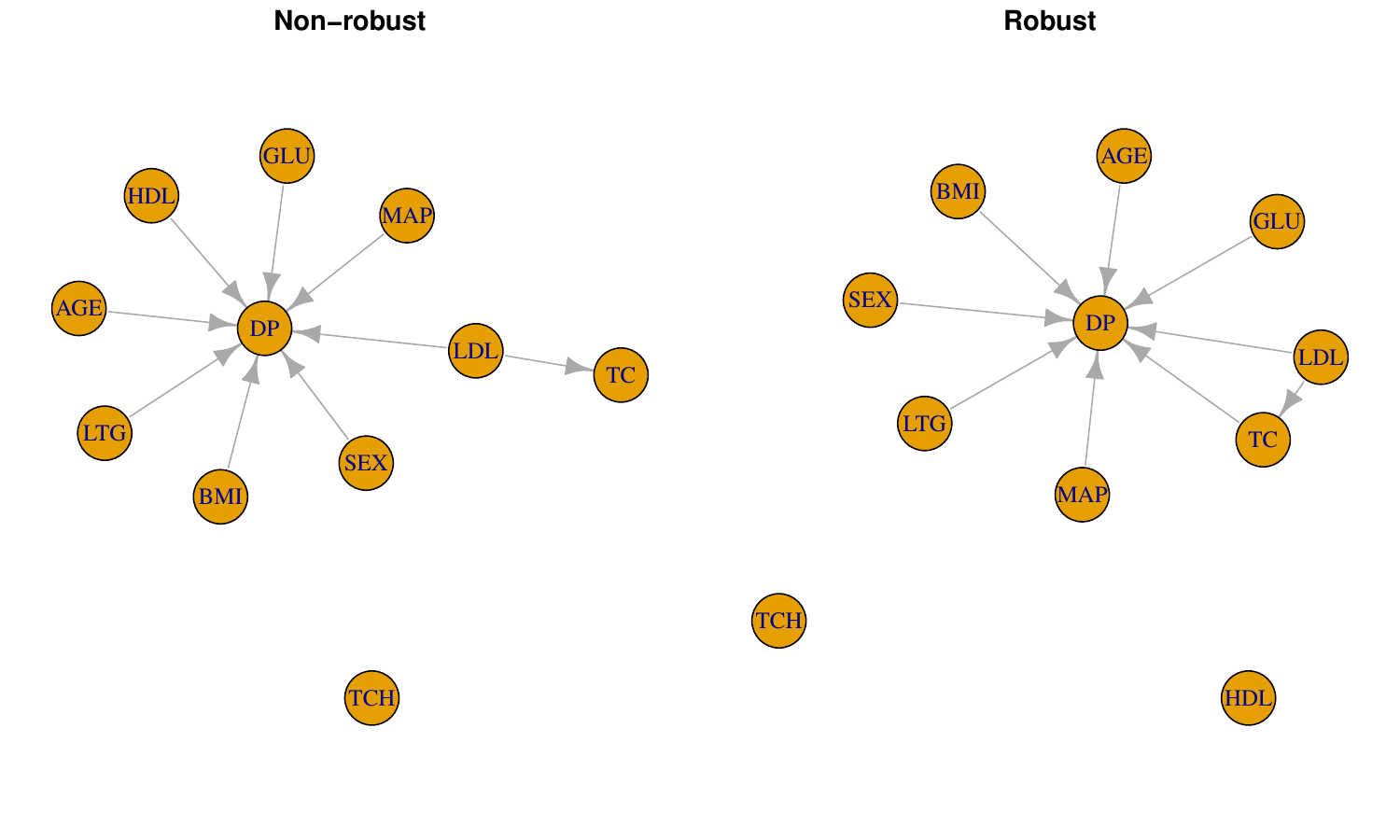}
    \caption{The aggregated causal graphs estimated from the diabetes data by the non-robust and robust method. The variable abbreviations are: DP = disease progression index, AGE = age, SEX = sex, BMI = body mass index, MAP = mean arterial blood pressure, whereas TC, LDL, HDL, TCH, LTG, GLU correspond to specific blood serum measurements.}
    \label{fig:causal_1}
\end{figure}

The obtained causal graphs are shown in Figure \ref{fig:causal_1} and show that both methods more or less agree on the causal structure of the data, indirectly indicating that the data are not likely to contain significant outliers (which likely stems from the fact that the data represent a curated clinical study). The main feature of interest is that most of the 10 explanatory variables are estimated to be causes of the disease progression variable DP. Additionally, a relation between the covariates LDL and TC was discovered. We note that these plots should not be interpreted as the full causal graphs between the 11 variables, but rather as estimates of the set of \textit{strongest causal relationships} between the variables. \minorrevision{We also computed the runtimes (14'' MacBook Pro M3, 16 GB memory) of producing the causal graph both for the non-robust and robust method and these were 11 s and 6 min 44 s, respectively. The difference in the orders of magnitude is in line with the simulation results in Section \ref{sec:computation}.}

We next take these findings as the ground truth, and continue the experiment by creating a contaminated version of the data set, obtained by randomly selecting $10\%$ of the subjects and replacing all their measurements with i.i.d. Gaussian noise with standard deviation $\sigma = 20$. We then applied the same causal discovery approach to this contaminated data set, with the hopes of still being able to find the relevant structure, despite the contamination. The bootstrapped causal graphs estimated from the contaminated data are shown in Figure \ref{fig:causal_2} and reveal that the structure of the non-robust graph has completely changed, with only two of the original edges remaining. That is, the outliers have made it impossible for standard, non-robust ICA to find almost any causal relations. However, the graph from robust and sparse ICA retains most of the connections, in particular the edges between the disease progression and the covariates AGE, SEX, BMI, MAP, TC, LDL and GLU. We thus conclude that robust SICS offers a reliable and outlier-resistant method for linear causal discovery. \minorrevision{The runtimes for the contaminated data were 10 s and 12 min 13 s, for the non-robust and robust approach, respectively. The time difference in the robust method for these and the non-contaminated data (12 min 13 s vs. 6 min 44 s) comes from the computation of the symmetrized Huber's M-estimator: the number of pairs with long distances between them is large in the contaminated data, and when these get truncated to constant size by Huber's weight function \citep{sirkia2007symmetrised}, one obtains a less steep gradient and a larger number of iterations is needed for convergence.}

\begin{figure}
    \centering
    \includegraphics[width=0.70\linewidth]{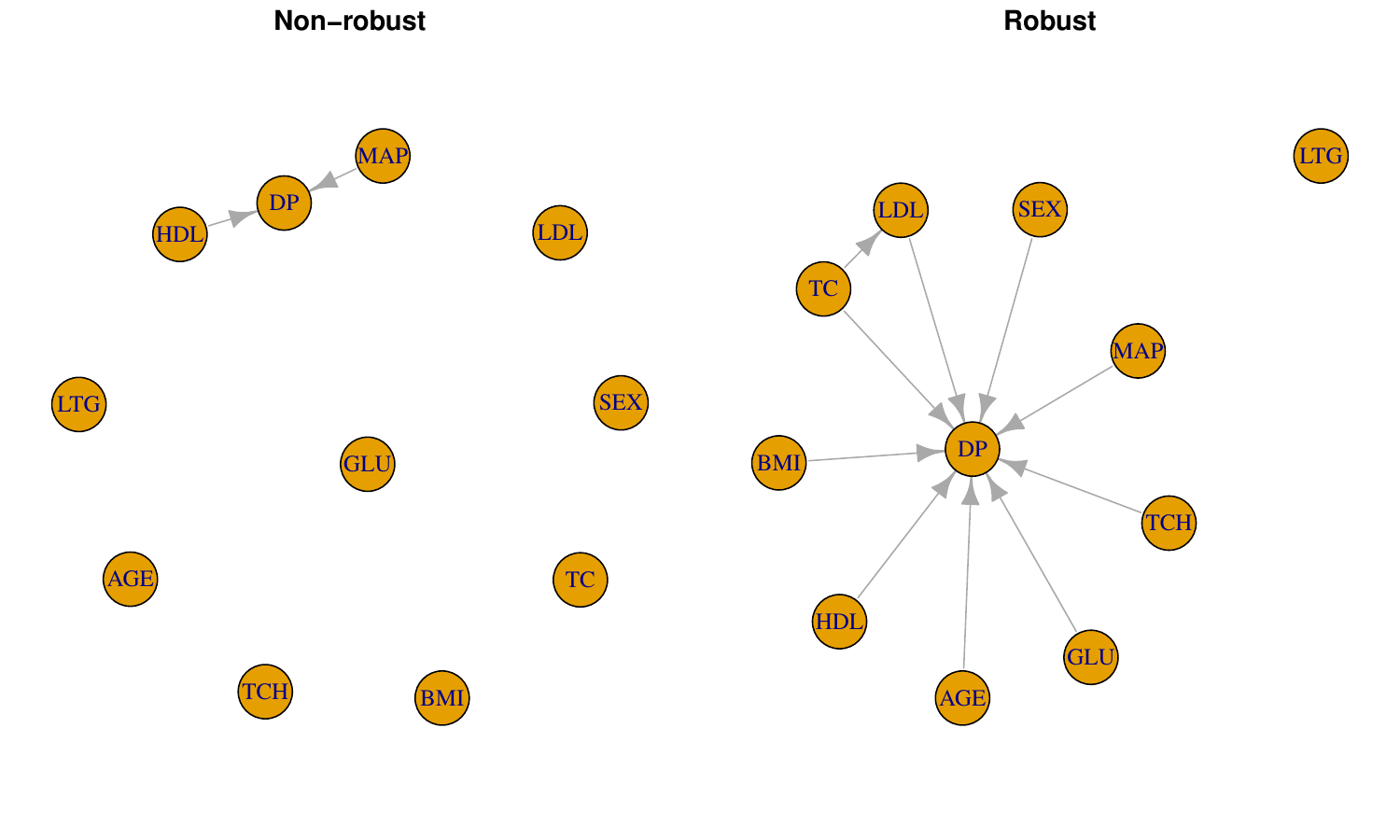}
    \caption{The aggregated causal graphs estimated from the artificially contaminated diabetes data by the non-robust and robust method. See the caption of Figure \ref{fig:causal_1} for the variable abbreviations.}
    \label{fig:causal_2}
\end{figure}

\revision{\section{Visual tool for variable selection}\label{sec:selection_tool}
% Tämä on nyt section-tasolla ja tässä kohtaa, mutta ehkäpä tämä ei ole oikea kohta/taso

The selection of $r$ in practice is a non-trivial task and the most typical way, cross-validation, is not possible in our unsupervised setting. Therefore, we next propose a graphical tool similar to the well-known stability selection \cite{meinshausen2010stability} to aid in this. One first generates subsamples of size $\lfloor n/2 \rfloor$ from the data without replacement, calculates the SICS estimates with all possible values of $r = 1, \ldots, p$, and computes the selection probabilities $\hat{\Pi}^r_k$ which measure how often, on average over the replicates, each variable $k$ is included in the model for different $r$. In general, variables that are included in the model with high probability are more important than the ones which are often left out. After having selected a cut-off (see below), one can then fit a regular ICS/ICA-model (to achieve non-biased estimates) with only those variables that are deemed important.

Traditional way to interpret the results and select the important variables is to draw a stability paths plot \citep{meinshausen2010stability}. In our case, we draw a plot which, for each of the $p$ variables, depicts the selection probability as a function of $r$. The straight line from $(1, 1/p)$ to $(p, 1)$ corresponds to the inclusion probability under random guessing, and paths that are clearly above this average path (corresponding to the value $\hat{\Pi}^r = r/p$) correspond to the important variables. Similarly as in \cite{meinshausen2010stability}, we also observed that the distinction between the important and non-important variables is typically visually clear. For a more refined and numeric selection rule, we also propose a strategy in which we calculate for all variables the area of their paths above and below the average path. Now variables for which the area above is greater than the area below the line are stated to be important.

We used this approach for the \texttt{diabetes} data described in Section \ref{sec:real_data}, estimating the first independent component loadings using the covariance matrix and FOBI-matrix as $S_1$ and $S_2$. The resulting stability paths based on 1500 subsamples are shown in Figure \ref{fig:stab_paths}. The variables that are (clearly) above the average line for some $r$ are represented with colored lines and the others with gray lines. Clearly DP, LTG and LDL are the most important variables, and TCH and TC are quite important. The area rule is barely positive for TC and clearly negative for HDL (and the variables with paths drawn in gray). We observe, in particular, that the disease progression index (DP) is almost always taken to be part of the model, which is intuitively clear since the data has been collected with the intention of predicting this variable using the others, meaning that DP can be expected to drive the main latent dependency structure of the data.

\begin{figure}
    \centering
    \includegraphics[width=0.7\linewidth]{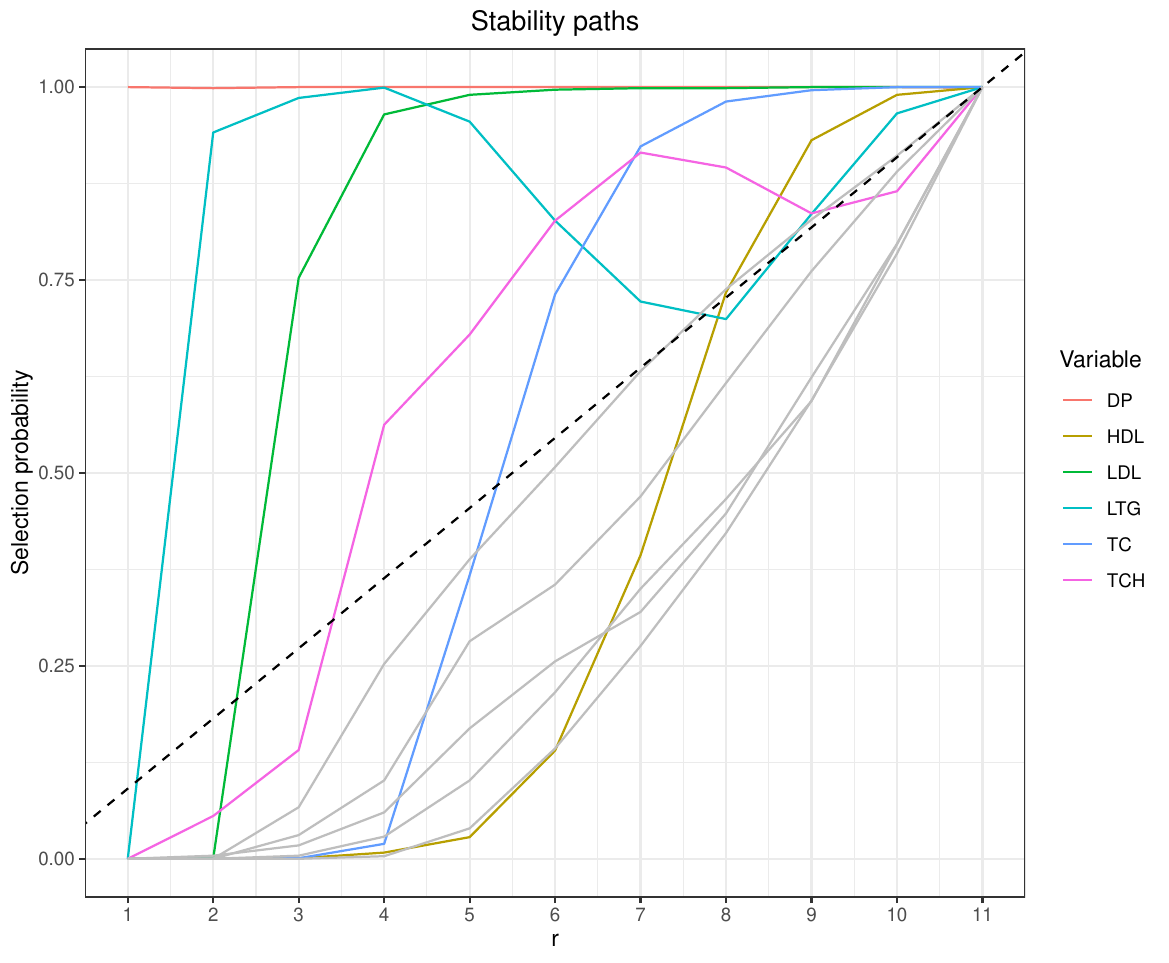}
    \caption{Stability paths for variables in the diabetes data. The gray lines correspond to the variables BMI, AGE, SEX, MAP and GLU.}
    \label{fig:stab_paths}
\end{figure}

}

\section{Discussion}\label{sec:discussion}

% {\color{gray} Kirjoita jotain \texttt{varnum}:in valinnasta. Vaikea tehtävä koska simulaatioiden perusteella todellinen $q$ ei ole hyvä valinta \texttt{varnum}:iksi. Käytännön tilanteissa valitaan halutun harvuuden mukaan. Valitaan harva vs. metsästetään oikeaa harvuutta. Tulkittavuus vs. ylisovittamisen välttäminen. Jätämme tämän myöhemmäksi. Simulaation perusteella yliarvointi mieluummin (kerro perustelut tälle). Harton idea: estimoidaan täysi malli, järjestetään kertoimet, katsotaan monenko jälkeen merkittävä pudotus. Valinta $\varnum = q - 1$ on huono (ei näytetty).

% Kerro joitain faktoja laskennasta (scatterien laskenta hidas)

% Joku ohjesääntö tähän $\varnum$:n valintaa koskien? Tai lisää tulkintaa kun lopullinen kuva valmis? {\color{red} tämä loppu simulaation yhteyteen} Lopun ``dippaus'' selittyy bias-variance -vaihtokaupalla, koska nolla on nyt odotusarvollisesti oikein (kts. $A^{-1}$:n generointi). Usein meillä aavistus harvuudesta, kuva mahdollistaa varnumin valinnan tämän pohjalta.}

We conclude with a discussion about some practical matters and topics for future study. An important choice one has to make when using SICS is selecting the number of non-zero coefficients $\varnum$ (usually chosen to be the same for all estimated components). Based on our simulations, choosing the actual underlying value $q$ is not optimal, but the best option is to choose a value around $10 -20 \%$ greater than $q$. This is because then there is some room for error before a coefficient, which is actually non-zero, is estimated to be zero. It follows that using $\varnum<q$ (for example $\varnum=q-2$) results in clearly bad performance, \revision{as seen in Simulation study \#4.} Choosing $\varnum$ based on the real value $q$ is not usually possible and can be done only in applications where one has some a priori information of the real value $q$. \revision{In other cases, tools such as presented in Section \ref{sec:selection_tool} can be used.}

The consistency result, Theorem \ref{theo:main}, is formulated only for estimating one component ($k=1$). This case is simpler than the general case because one does not have to estimate the rotation matrix $O_k$. We still expect the result to hold for larger $k$ since the matrix $O_k$ consists of eigenvectors of a product of the scatter matrices and the estimated coefficients, which are all either assumed or proven to converge at the rate $c_n$, which is thus expected to be inherited by the estimate of $O_k$ as well.

\revision{On a heuristic level dependencies between the observed variables are highly beneficial in ICA, and ICS in general. This is because, when the variables are more dependent, their joint behavior can, on average, be controlled by a smaller number of latent factors/invariant coordinates. Hence, keeping the model otherwise fixed, introducing stronger dependencies between the variables effectively reduces the number of parameters in the ICS model. As such, besides the high-dimensional regime mentioned in Section \ref{sec:simu_3}, also this scenario (large number of variables all loading on a small set of factors) warrants theoretical study.

Since SICS uses a penalization in the style of classical statistical learning (although not directly in computing the scatter matrices), another natural question is the impact of different scatter matrices on the possible bias-variance tradeoff of the estimator. A natural viewpoint in this context is that using robust matrices decreases the bias, as outliers have a smaller effect, but might increase the variance, as the effective sample size decreases because of downweighting some observations. Some preliminary testing (not shown here) though reveals that the robust version has a bit smaller bias but significantly smaller standard deviation. This indicates that, robustness can help reduce not only the variance but also bias, highlighting the benefits of using robust scatter matrices.
}

\subsection*{Acknowledgments}

\revision{The authors are grateful to the anonymous reviewers whose comments were of great help in improving the manuscript. The authors would like to thank Henri Nyberg for his constructive comments and feedback on the manuscript.}  The work of LH and JV was supported by the Research Council of Finland (grants 347501, 353769, 368494). The authors would like to thank Andreas Artemiou who brought the work by \cite{li2007sparse} to their attention.

\appendix

\section{Comparison table of sparse ICA methods}\label{sec:comparison_table}

    \begin{table}[h]
    {\footnotesize
        \centering
        % \begin{adjustbox}{angle=90}
        \begin{tabular}{r|c|c}
            \hline
            Method & Assumptions & Estimation \\ \hline
            SICS & Distinct kurtoses & Least squares + Procrustes rotation \\
            Bayesian Sparse ICA \cite{hyvarinen2002imposing} & Likelihood & ML \\
            SCAD-ICA \citep{zhang2006ica} & Likelihood & ML + natural gradient \\
            Adaptive $L_1$-ICA \citep{zhang2009ica} & Likelihood & ML + adaptive step size \\
            OBS-ICA \citep{zhang2009ica} & Likelihood + technical assumptions & ML + pruning \\
            sgnICA \cite{palsson2014sparse} & Gaussian time-dependent data & ML + EM-algorithm \\
            Sparse Gaussian ICA \cite{abrahamsen2018sparse} & Gaussian data + sparse and generic $\Omega$ & SCIP \\
            SG-ICA \cite{chen2019sparse} & Likelihood + technical assumptions & ML + EM-algorithm \\
            sICA-LiNGAM \cite{harada2020estimation} & Likelihood & ML + ADMM \\
            SparseICA \cite{ng2023identifiability} & Structural variability of $\Omega$ & ML (or matrix decompositions) \\
            \hline
            Method & Theoretical guarantees & Sparsity \\ \hline
            SICS & Cons. & $L_1$ \\
            Bayesian Sparse ICA \cite{hyvarinen2002imposing} & - & $L_1$ \\
            SCAD-ICA \citep{zhang2006ica} & Oracle property & SCAD \\
            Adaptive $L_1$-ICA \citep{zhang2009ica} & Model selection cons. & Adaptive $L_1$ \\
            OBS-ICA \citep{zhang2009ica} & Equivalence with information criteria & OBS \\
            sgnICA \cite{palsson2014sparse} & - & $L_0$ \\
            Sparse Gaussian ICA \cite{abrahamsen2018sparse} & High-dimensional error rate & Thresholding \\
            SG-ICA \cite{chen2019sparse} & Cons. + Selection cons. + AN & $L_1$ + group $L_1$ \\
            sICA-LiNGAM \cite{harada2020estimation} & - & Adaptive $L_1$ \\
            SparseICA \cite{ng2023identifiability} & Cons. & $L_0$
        \end{tabular}
        % \end{adjustbox}
        \caption{Comparison table between sparse ICA methods. ADMM = Alternating Direction Method of Multipliers, AN = Asymptotic Normality, Cons. = Consistency, ML = Maximum Likelihood, OBS = Optimal Brain Surgeon, SCAD = Smoothly Clipped Absolute Deviation, SCIP = Single Column Identification Procedure.}
        \label{tab:sICA_comparison}
        }
    \end{table}

\revision{In Table \ref{tab:sICA_comparison}, we present a comparison of SICS and the non-robust sparse ICA methods described in Section \ref{sec:introduction}. In the assumptions-column, the term ``Likelihood'' means that the method is based on maximum likelihood, which typically means that the theoretical guarantees of the method hold only if the likelihood has been specified approximately correctly. All listed methods require existence of suitable moments (usually 2nd or higher), so these are not mentioned in the table. Moreover, most methods require the non-Gaussianity of sources. The exceptions to this are sgnICA \cite{palsson2014sparse}, Sparse Gaussian ICA \cite{abrahamsen2018sparse} and SparseICA \cite{ng2023identifiability}.}

\section{Proof of Theorem \ref{theo:main}}\label{sec:proofs}

In this section we prove Theorem \ref{theo:main} in several steps. For clarity (to avoid the use of multiple subscripts), the notation in the proofs differs slightly from the main text.

We assume that $M_n, G_n$ are symmetric $p \times p$ matrices such that $M_n \rightarrow_p M, \quad G_n \rightarrow_p G$, as $n \rightarrow \infty$, where $M, G$ are symmetric and positive definite $p \times p$ matrices. The first, second and last eigenvalues of $G_n^{-1/2} M_n G_n^{-1/2}$ are denoted in the following by $\rho_n, \psi_n, \Omega_n$, respectively. The population counterparts of these are denoted by $\rho, \psi, \Omega$ and we assume that $\rho > \psi$ and $\Omega > 0$. The population and sample eigenvectors corresponding to the leading eigenvalues $\rho_n, \rho$ are denoted by $u_n, u \in \mathbb{R}^p$, respectively. Finally, we denote the smallest eigenvalues of $G_n$ and $G$ as $\alpha_n$ and $\alpha > 0$, respectively. 

The following lemma and corollary follow from the results of \cite{li2007sparse} and we thus omit their proofs.

\begin{lemma}\label{lemma:alternative_form}
    For $a, b \in \mathbb{R}^p$, $\| a \| = 1$, we have $\sum_{i = 1}^p \| G^{-1/2} r_i - a b^\top r_i\|^2 = \mathrm{tr}(G^{-1/2} M G^{-1/2}) - 2 b^\top M G^{-1/2} a + b^\top M b$, where $(r_1, \ldots, r_p) = M^{1/2}$.
\end{lemma}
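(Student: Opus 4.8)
The plan is to prove the identity by direct expansion, since the right-hand side is simply the collected form of the squared norms on the left. First I would note that $b^\top r_i$ is a scalar, so that $a b^\top r_i = (b^\top r_i)\, a$, and the $i$th summand expands as
\[
\| G^{-1/2} r_i - (b^\top r_i)\, a \|^2 = r_i^\top G^{-1} r_i - 2 (b^\top r_i)\, r_i^\top G^{-1/2} a + (b^\top r_i)^2 \| a \|^2.
\]
The normalization $\| a \| = 1$ removes the factor $\| a \|^2$ from the last term, leaving three sums over $i$ to evaluate and recombine.

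The single bookkeeping fact I would rely on is $\sum_{i=1}^p r_i r_i^\top = M$. This holds because $(r_1, \ldots, r_p) = M^{1/2}$ are the columns of the \emph{symmetric} square root, so that $\sum_i r_i r_i^\top = M^{1/2} (M^{1/2})^\top = M^{1/2} M^{1/2} = M$. With this identity in hand, the cross term and the quadratic term collapse cleanly: the cross term becomes $-2 \sum_i (b^\top r_i)\, r_i^\top G^{-1/2} a = -2\, b^\top (\sum_i r_i r_i^\top) G^{-1/2} a = -2\, b^\top M G^{-1/2} a$, and the quadratic term becomes $\sum_i (b^\top r_i)^2 = b^\top (\sum_i r_i r_i^\top)\, b = b^\top M b$, matching the second and third terms of the claim exactly.

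For the first term I would write $\sum_i r_i^\top G^{-1} r_i = \sum_i \mathrm{tr}(G^{-1} r_i r_i^\top) = \mathrm{tr}(G^{-1} M)$, using the same identity, and then invoke cyclicity of the trace to rewrite $\mathrm{tr}(G^{-1} M) = \mathrm{tr}(G^{-1/2} M G^{-1/2})$, which is the stated form. Adding the three pieces yields the lemma.

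I do not expect any genuine obstacle here: the computation is routine algebra combined with the cyclic property of the trace. The only point requiring any care is the symmetry of $M^{1/2}$, which is precisely what makes $\sum_i r_i r_i^\top$ equal $M$ rather than merely $M^{1/2}(M^{1/2})^\top$; once this is noted, the rest is immediate.
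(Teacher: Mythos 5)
Your proof is correct. The paper itself omits a proof of this lemma, simply noting that it follows from the results of \cite{li2007sparse}, and your direct expansion --- using $\|a\|=1$, the symmetry of $M^{1/2}$ and $G^{-1/2}$ to get $\sum_{i} r_i r_i^\top = M$ and $r_i^\top G^{-1/2}a$ in the cross term, and cyclicity of the trace for $\mathrm{tr}(G^{-1}M) = \mathrm{tr}(G^{-1/2} M G^{-1/2})$ --- is precisely the routine verification being omitted, so there is no substantive difference in approach.
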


\begin{corollary}\label{cor:gee_via_optimization}
    The minimizer $(\alpha, \beta)$ of $(a, b) \mapsto \sum_{i = 1}^p \| G^{-1/2} r_i - a b^\top r_i\|^2$ over $a, b \in \mathbb{R}^p$, $\| a \| = 1$ is $\alpha = u$ and $\beta = G^{-1/2} u$.
\end{corollary}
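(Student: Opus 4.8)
The plan is to derive the corollary directly from Lemma~\ref{lemma:alternative_form} by profiling out $b$ and reducing the problem to a Rayleigh-quotient maximization, mirroring the $k = 1$ case of the argument already used in the proof of Theorem~\ref{theo:ic_solution_2}. First I would invoke Lemma~\ref{lemma:alternative_form} to rewrite the objective as
\[
\sum_{i = 1}^p \| G^{-1/2} r_i - a b^\top r_i\|^2 = \mathrm{tr}(G^{-1/2} M G^{-1/2}) - 2 b^\top M G^{-1/2} a + b^\top M b .
\]
Since the trace term does not depend on $(a, b)$, minimizing the left-hand side is equivalent to minimizing $g(a, b) := - 2 b^\top M G^{-1/2} a + b^\top M b$ over $a, b \in \mathbb{R}^p$ with $\| a \| = 1$.

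Next I would profile out $b$ for fixed $a$. Differentiating $g$ with respect to $b$ gives the gradient $-2 M G^{-1/2} a + 2 M b$; setting it to zero and using that $M$ is positive definite (hence invertible) yields the unique stationary point $b = G^{-1/2} a$, which is a global minimizer in $b$ because the Hessian $2 M$ is positive definite. Substituting back gives
\[
g(a, G^{-1/2} a) = - a^\top G^{-1/2} M G^{-1/2} a ,
\]
so the remaining problem is to maximize the map $a \mapsto a^\top (G^{-1/2} M G^{-1/2}) a$ over the unit sphere $\| a \| = 1$.

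Finally, by the variational (Rayleigh-quotient) characterization of the largest eigenvalue, the maximizer is the leading eigenvector $u$ of $G^{-1/2} M G^{-1/2}$, and the standing assumption $\rho > \psi$ ensures this eigenvalue is simple, so the maximizer is unique up to sign. Setting $\alpha = u$ then forces $\beta = G^{-1/2} u$ via the profiling identity, which is exactly the claimed minimizer.

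This is essentially routine linear algebra once the profiling reduction is made, so rather than a genuine obstacle the only point requiring care is the sign indeterminacy: both $\pm u$ maximize the Rayleigh quotient, so the minimizing pair is really $(\alpha, \beta) = (\pm u, \pm G^{-1/2} u)$, consistent with the $O_1 = \pm 1$ ambiguity noted after Theorem~\ref{theo:ic_solution_2}. The statement simply fixes one representative, and the positive definiteness of $M$ (invertibility in the profiling step) together with $\rho > \psi$ (uniqueness of the maximizer) are the two structural facts that make the argument go through cleanly.
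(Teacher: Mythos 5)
Your proof is correct and follows essentially the same route the paper relies on: it is precisely the $k=1$ specialization of the paper's own proof of Theorem~\ref{theo:ic_solution_2} (profile out $b$ via the gradient and positive definiteness of $M$, then maximize the resulting Rayleigh quotient), the paper itself omitting a separate proof of Corollary~\ref{cor:gee_via_optimization} by citing \cite{li2007sparse}. Your remark on the $\pm$ sign indeterminacy is also consistent with how the paper handles signs elsewhere (e.g., the sequence $s_n \in \{-1,1\}$ in Theorem~\ref{theo:main}).
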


We next equip the objective function in Corollary \ref{cor:gee_via_optimization} with an $\ell_1$-penalty for the parameter $b$. This leads to the following three versions of the optimization problem. The first one is a sample problem with the penalty parameter (sequence) $\lambda_n$, the second one is the non-penalized sample problem, and the third one is simply the population-level problem from Corollary \ref{cor:gee_via_optimization}.

\begin{itemize}
    \item \textbf{Version I:} $(a_{n, \lambda_n}, b_{n, \lambda_n}) = \mathrm{argmin} f_{n, \lambda_n}(a, b)$, $a, b \in \mathbb{R}^p, \| a \| = 1$, 
where $f_{n, \lambda_n}(a, b) = -2 b^\top M_n G_n^{-1/2} a + b^\top M_n b + \lambda_n \| b \|_1$.
\item \textbf{Version II:} $(a_{n, 0}, b_{n, 0}) = \mathrm{argmin} f_{n, 0}(a, b)$, $ a, b \in \mathbb{R}^p, \| a \| = 1$,
where $f_{n, 0}$ is as in Version I above.
\item \textbf{Version III:} $(\alpha, \beta) = \mathrm{argmin} f(a, b)$, $ a, b \in \mathbb{R}^p, \| a \| = 1$, where $f(a, b) = -2 b^\top M G^{-1/2} a + b^\top M b$.
\end{itemize}

Our objective is to show that, under suitable assumptions, $(a_{n, \lambda_n}, b_{n, \lambda_n})$ converges in probability to $(\alpha, \beta)$. I.e., that the $\ell_1$-penalized problem gives consistent solutions. We do this by first showing that the solutions of Versions I and II are close and then doing the same for Versions II and III.

For a vector $m \in \mathbb{R}^p$, we define $a^*_{n, \lambda_n, m}$ to be the minimizer of the objective function $a \mapsto f_{n, \lambda_n}(a, b_{n, 0} + m)$ over $a \in \mathbb{R}^p, \| a \| = 1$. As the objective function is symmetric in the sense that $f_{n, \lambda_n}(a, b) = f_{n, \lambda_n}(-a, -b)$, it is sufficient to restrict our attention to vectors $m \in \mathbb{R}^p$ such that $b_{n, 0}^\top G_n (b_{n, 0} + m) \geq 0$ (corresponding to a certain half-space of $\mathbb{R}^p$).

The next lemma quantifies the ``cost`` of using the perturbed point $(a^*_{n, \lambda_n, m}, b_{n, 0} + m)$ as a candidate solution for the Version I of the optimization problem instead of the point $(a^*_{n, \lambda_n, 0}, b_{n, 0})$.

\begin{lemma}\label{lemma:difference_of_I_and_II}
    For any $m \in \mathbb{R}^p$ such that $b_{n, 0}^\top G_n (b_{n, 0} + m) \geq 0$, we have
    \begin{align*}
        & f_{n, \lambda_n}(a^*_{n, \lambda_n, m}, b_{n, 0} + m) - f_{n, \lambda_n}(a^*_{n, \lambda_n, 0}, b_{n, 0}) \geq \min \{ C_{1n}, C_{2n} \} \| A_n^{-1} \|_2^{-2} \| G_n^{-1/2} \|_2^{-2} \| m \|^2  - \lambda_n \sqrt{p} \| m \|,
    \end{align*}
    where $C_{1n} := 0.4 \rho_n \{ \rho_2(A_n)^{-1} -  1 \}$, $C_{2n} := (\sqrt{0.6} - \sqrt{0.4})^2 \rho_n$ and $A_n :=  G_n^{-1/2} M_n G_n^{-1/2}/\rho_n$.
\end{lemma}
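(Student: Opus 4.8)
The plan is to eliminate the inner variable $a$ by exact minimization, reduce the claim to a quadratic-growth estimate for a profiled objective, and then control that growth by a case analysis. First I would profile out $a$: for fixed $b$, since $-2 b^\top M_n G_n^{-1/2} a = -2 (G_n^{-1/2} M_n b)^\top a$, minimizing over $\| a \| = 1$ gives $a \propto G_n^{-1/2} M_n b$ and value $h_n(b) + \lambda_n \| b \|_1$, where $h_n(b) := b^\top M_n b - 2 \| G_n^{-1/2} M_n b \|$. Thus $f_{n, \lambda_n}(a^*_{n, \lambda_n, m}, b_{n, 0} + m) = h_n(b_{n, 0} + m) + \lambda_n \| b_{n, 0} + m \|_1$ and likewise for $m = 0$, so the left-hand side equals $[h_n(b_{n, 0} + m) - h_n(b_{n, 0})] + \lambda_n [\| b_{n, 0} + m \|_1 - \| b_{n, 0} \|_1]$. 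The penalty difference is bounded below by $-\lambda_n \| m \|_1 \ge -\lambda_n \sqrt{p}\, \| m \|$ via the reverse triangle inequality, accounting for the final term; it remains to lower-bound the unpenalized difference.

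Next I would pass to canonical coordinates. Substituting $b = G_n^{-1/2} c$ turns $h_n$ into $\tilde h(c) := c^\top W_n c - 2 \| W_n c \|$ with $W_n := G_n^{-1/2} M_n G_n^{-1/2}$; diagonalizing $W_n = Q \, \mathrm{diag}(\rho_n = \lambda_1, \ldots, \lambda_p = \Omega_n) \, Q^\top$ and setting $\gamma := Q^\top c$ gives $\tilde h = \sum_i \lambda_i \gamma_i^2 - 2 (\sum_i \lambda_i^2 \gamma_i^2)^{1/2}$. By Corollary \ref{cor:gee_via_optimization} the minimizer is $\gamma = e_1$ with value $-\rho_n$, and $b_{n, 0} + m$ corresponds to $\gamma = e_1 + \eta$ with $\eta = Q^\top G_n^{1/2} m$, whence $\| \eta \|^2 = \| G_n^{1/2} m \|^2 \ge \| G_n^{-1/2} \|_2^{-2} \| m \|^2$ and the constraint $b_{n, 0}^\top G_n (b_{n, 0} + m) \ge 0$ becomes $\gamma_1 = 1 + \eta_1 \ge 0$. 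The task reduces to $\tilde h(e_1 + \eta) + \rho_n \ge \min \{ C_{1n}, C_{2n} \} \| A_n^{-1} \|_2^{-2} \| \eta \|^2$, after which the factor $\| G_n^{-1/2} \|_2^{-2}$ appears by the displayed norm bound.

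With $P := \sum_i \lambda_i \gamma_i^2$ and $D := (\sum_i \lambda_i^2 \gamma_i^2)^{1/2}$, the two working inequalities are: a baseline bound from $D^2 = \sum_i \lambda_i (\lambda_i \gamma_i^2) \le \rho_n P$, giving $\tilde h + \rho_n = P - 2D + \rho_n \ge (\sqrt{P} - \sqrt{\rho_n})^2$ after completing the square; and, for $\gamma_1 > 0$, the sharper estimate $D \le \rho_n \gamma_1 + (2 \rho_n \gamma_1)^{-1} \sum_{j \ge 2} \lambda_j^2 \gamma_j^2$, which yields $\tilde h + \rho_n \ge \rho_n (\gamma_1 - 1)^2 + \sum_{j \ge 2} \lambda_j \{ 1 - \lambda_j/(\rho_n \gamma_1) \} \gamma_j^2$. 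The first inequality controls radial perturbations but degenerates to fourth order in the directions transverse to $e_1$; the second is genuinely quadratic transversally, its coefficient being governed by the eigengap $\rho_n - \psi_n$, i.e.\ by $\rho_2(A_n)^{-1} - 1$.

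The hard part will be merging these into a single constant valid for all admissible $\eta$, and in particular matching the explicit numbers $0.4$ and $0.6$. I would split on the size of the perturbation: when $\gamma$ stays near $e_1$ (so $\gamma_1$ is bounded away from $0$) the transverse estimate applies, and discarding the favourable $\rho_n(\gamma_1 - 1)^2$ term and bounding the transverse eigenvalues below by $\Omega_n$ (so that $(\Omega_n/\rho_n)^2 = \| A_n^{-1} \|_2^{-2}$ emerges) produces $C_{1n} = 0.4 \rho_n \{ \rho_2(A_n)^{-1} - 1 \}$; in the complementary regime of large $\| \gamma \|$ the baseline bound forces $\sqrt{P}$ to exceed $\sqrt{\rho_n}$ by a definite fraction of $\| \eta \|$ and delivers the absolute constant $C_{2n} = (\sqrt{0.6} - \sqrt{0.4})^2 \rho_n$, with the threshold between the regimes fixing the split $0.6/0.4$. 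Taking the minimum of the two constants, reinstating the factors $\| A_n^{-1} \|_2^{-2}$ and $\| G_n^{-1/2} \|_2^{-2}$, and adding back the penalty bound completes the argument; the delicate tracking of these constants, rather than any conceptual step, is where the real effort lies.
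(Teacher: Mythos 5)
Your opening reductions are sound and in fact parallel the paper's own first steps: profiling out $a$ via Cauchy--Schwarz, bounding the penalty difference below by $-\lambda_n\sqrt{p}\,\|m\|$, and passing to whitened coordinates are all correct, and your target inequality $\tilde h(e_1+\eta)+\rho_n \geq \min\{C_{1n},C_{2n}\}\,\|A_n^{-1}\|_2^{-2}\|\eta\|^2$ is the right intermediate claim. The gap is in the core step: your two working inequalities cannot be merged so as to cover all admissible $\eta$, and the uncovered region is not a technicality. Your second estimate needs not merely $\gamma_1>0$ but $\gamma_1 > \lambda_2/\rho_n = \rho_2(A_n)$ for the transverse coefficients $\lambda_j\{1-\lambda_j/(\rho_n\gamma_1)\}$ to be nonnegative, and $\rho_2(A_n)$ can be arbitrarily close to $1$ when the eigengap is small; so ``$\gamma_1$ bounded away from $0$'' is not the right condition and any valid threshold must degenerate with the gap. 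Meanwhile your baseline bound $(\sqrt{P}-\sqrt{\rho_n})^2$ is vacuous whenever $P=\rho_n$. Concretely, take $\gamma = \sqrt{\rho_n/\psi_n}\,e_2$, i.e.\ $\eta = -e_1+\sqrt{\rho_n/\psi_n}\,e_2$, which is admissible since $\gamma_1=0$: then $P=\rho_n$, so the baseline bound gives exactly $0$; the second estimate is inapplicable; and $\|\gamma\|=\sqrt{\rho_n/\psi_n}$ can be barely above $1$, so the point does not lie in your ``large $\|\gamma\|$'' regime either. Yet the lemma demands a strictly positive lower bound there (the true value is $2\rho_n(1-\sqrt{\psi_n/\rho_n})$, which happens to satisfy the claim, but your tools cannot certify it).

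The paper avoids this trap by rescaling once more: it substitutes $m = G_n^{-1/2}A_n^{-1}v$ (so $v$ corresponds to the componentwise rescaling $w_i = (\lambda_i/\rho_n)\eta_i$ of your $\eta$) and then obtains an exact identity rather than one-sided estimates: the unpenalized difference equals $\rho_n\{(\|u_n+v\|-1)^2 + v^\top(A_n^{-1}-I_p)v\}$, a sum of two nonnegative terms with no sign condition on $\gamma_1$ whatsoever. The second term $v^\top(A_n^{-1}-I_p)v$ supplies precisely the positive transverse contribution that your estimates discard in the problematic intermediate regime (in the example above it equals $1-\psi_n/\rho_n>0$). The case split producing the constants $0.4$ and $0.6$ is also different from the one you envision: it is on the fraction $\|Q_nv\|^2/\|v\|^2$ of $v$ orthogonal to $u_n$, split at $0.5\pm\varepsilon$ with $\varepsilon=0.1$, not on the size of $\gamma_1$ or of $\|\gamma\|$; the factor $\|A_n^{-1}\|_2^{-2}$ then enters only at the very end, when converting $\|v\|$ back to $\|m\|$. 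To repair your argument you would need to replace your two inequalities by an unconditional decomposition of this type; as proposed, the plan cannot be completed.
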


\begin{proof}[Proof of Lemma \ref{lemma:difference_of_I_and_II}]
% Alternative approach at proving the closeness of Version I and Version II:

% For a vector $m \in \mathbb{R}^p$, define $a^*_{n, \lambda_n, m}$ to be the minimizer of the objective function
% \begin{align*}
%     a \mapsto f_{n, \lambda_n}(a, b_{n, 0} + m),
% \end{align*}
% over $a \in \mathbb{R}^p, \| a \|_2 = 1$. The objective function is symmetric in its second parameter (that is, $f_{n, \lambda_n}(a, b) = f_{n, \lambda_n}(a, -b)$), meaning that it is sufficient to restrict our attention to vectors $m \in \mathbb{R}^p$ such that $b_{n, 0}' G_n (b_{n, 0} + m) \geq 0$.

Application of the Cauchy-Schwarz inequality reveals that
\begin{align*}
    a^*_{n, \lambda_n, m} = \frac{\rho_n u_n + M_{n, 0} G_n^{1/2} m}{\| \rho_n u_n + M_{n, 0} G_n^{1/2} m \|}.
\end{align*}
Consequently, for $m \in \mathbb{R}^p$,
\begin{align}\label{eq:f_difference}
\begin{split}
    & f_{n, \lambda_n}(a^*_{n, \lambda_n, m}, b_{n, 0} + m) - f_{n, \lambda_n}(a^*_{n, \lambda_n, 0}, b_{n, 0}) \\
    =& -2 \| \rho_n u_n + M_{n, 0} G_n^{1/2} m \| + 2 \rho_n + 2 b_{n, 0}^\top M_n m + m^\top M_n m + \lambda_n \{ \| b_{n, 0} + m \|_1 - \| b_{n, 0} \|_1 \},
\end{split}
\end{align}
where $M_{n, 0} := G_n^{-1/2} M_n G_n^{-1/2}$ and $\rho_n$ denotes the largest eigenvalue of this matrix. Let now $A_n := M_{n, 0}/\rho_n$ and express $m$ as $m =  G_n^{-1/2} A_n^{-1} v $ for some $v \in \mathbb{R}^p$. Our earlier assumption that $b_{n, 0}^\top G_n (b_{n, 0} + m) \geq 0$ then takes the form
\begin{align*}
    0 \leq& b_{n, 0}^\top G_n (b_{n, 0} + m) = u_n^\top (u_n + G_n^{1/2} m )
    = 1 + u_n^\top A_n^{-1} v = 1 + u_n^\top v,
\end{align*}
as $u_n$ is an eigenvector of $A_n$ corresponding to the eigenvalue $1$.

Then, we can write the first part of the right-hand side of \eqref{eq:f_difference} as
\begin{align*}
    -2 \| \rho_n u_n + M_{n, 0} G_n^{1/2} m \| + 2 \rho_n + 2 b_{n, 0}^\top M_n m + m^\top M_n m =& \rho_n \{ -2 \| u_n + A_{n} G_n^{1/2} m \|  + 2 + 2 u_n^\top A_n G_n^{1/2} m + m^\top G_n^{1/2} A_n G_n^{1/2} m \} \\
    =& \rho_n \{ -2 \| u_n + v \|  + 2 + 2 u_n^\top v + v^\top A_n^{-1} v \}.
\end{align*}

Next, we have the identity $(\| u_n + v \| - 1)^2 = 1 + 2 u_n^\top v + v^\top  - 2 \| u_n + v \| + 1$. Hence, we can write
\begin{align}\label{eq:first_part_sum_of_squares}
\begin{split}
    -2 \| \rho_n u_n + M_{n, 0} G_n^{1/2} m \| + 2 \rho_n + 2 b_{n, 0}^\top M_n m + m^\top M_n m = \rho_n \{ (\| u_n + v \| - 1)^2 +  v^\top (A_n^{-1} - I_p) v \}.
\end{split}
\end{align}

Now, as $A_n$ has its eigenvalues in $[0, 1]$, then $A_n^{-1} - I_p$ has eigenvalues in $[0, \infty)$ with the eigenvector $u_n$ corresponding to the eigenvalue $0$. Decompose then $v$ such that $v = P_n v + Q_n v$ where $P_n := u_n u_n^\top$ is the projection onto $u_n$ and $Q_n := I_p - P_n$. As such, $P_n (A_n^{-1} - I_p) = 0$ and we have
\begin{align*}
    v^\top (A_n^{-1} - I_p) v =& v^\top Q_n (A_n^{-1} - I_p) Q_n v = \| Q_n v \|^2 \frac{(Q_n v)^\top}{\| Q_n v \|} (A_n^{-1} - I_p) \frac{Q_n v}{\| Q_n v \|} \geq \| Q_n v \|^2 \rho_{p - 1}(A_n^{-1} - I_p) = \| Q_n v \|^2 \{ \rho_2(A_n)^{-1} -  1 \},
\end{align*}
where $\rho_k( B )$ denotes the $k$th largest eigenvalue of the matrix $B$.

We next derive a lower bound for the right-hand side of \eqref{eq:first_part_sum_of_squares} (call it $T_n$ in the following), dividing the treatment into two cases. Fixing an arbitrary $\varepsilon \in (0, 1/2)$, if $\| Q_n v \|^2/\| v \|^2 \geq 0.5 - \varepsilon$, then the preceding paragraph shows that
\begin{align*}
    T_n \geq& \rho_n \| Q_n v \|^2 \{ \rho_2(A_n)^{-1} -  1 \} \geq \rho_n \| v \|^2 (0.5 - \varepsilon) \{ \rho_2(A_n)^{-1} -  1 \}.
\end{align*}
Take next the complementary case $\| P_n v \|^2/\| v \|^2 > 0.5 + \varepsilon$. We consider two sub-cases (I) and (II). In the first one, we assume that $\| u_n + v \| \leq 1$. In this case we have
\begin{align}\label{eq:case_1_ineq}
    | 1 - \| u_n + v \| | > 1 - \left\{ 1 + \| v \|^2 - \sqrt{2(1 + 2 \varepsilon)} \| v \| \right\}^{1/2}.
\end{align}
To see that this holds, we first observe that the quantity inside the square root can be written as $ (1 - \| v \|)^2 + \| v \| \{ 2 - \sqrt{2(1 + 2 \varepsilon)} \} $, where $2(1 + 2 \varepsilon) < 4$, making \eqref{eq:case_1_ineq} well-defined. Now, simplification shows that \eqref{eq:case_1_ineq} is equivalent to $1 + 2 u_n^\top v + \| v \|^2 < 1 + \| v \|^2 - \sqrt{2(1 + 2 \varepsilon)} \| v \|$, which in turn is equivalent to $u_n^\top v < -\| v \| (0.5 + \varepsilon)^{1/2}$, which we next show to hold. Because $\| P_n v \|^2 = (u_n^\top v)^2$, our assumption writes as $(u_n^\top v)^2 > \| v \|^2 (0.5 + \varepsilon)$. As $\| u_n + v \| \leq 1$ implies that $u_n^\top v \leq 0$, the case $u_n^\top v > \| v \| (0.5 + \varepsilon)^{1/2}$ cannot occur in the assumption and, consequently, it must hold that $u_n^\top v < -\| v \| (0.5 + \varepsilon)^{1/2}$, finally showing that \eqref{eq:case_1_ineq} is true.

We next search still for a simpler lower bound for \eqref{eq:case_1_ineq}. Observe first that our three assumptions $(u_n^\top v)^2 > \| v \|^2 (0.5 + \varepsilon)$, $\| u_n + v \| \leq 1$ and $u_n^\top v \geq -1$ imply that $-1 \leq u_n^\top v \leq 0$ and, consequently, that $\| v \| < (0.5 + \varepsilon)^{-1/2}$. We then claim that the following inequality holds, $1 - \left\{ 1 + \| v \|^2 - \sqrt{2(1 + 2 \varepsilon)} \| v \| \right\}^{1/2} \geq \| v \| (\sqrt{0.5 + \varepsilon} - \sqrt{0.5 - \varepsilon})$. This inequality is equivalent to
\begin{align}
    1 + \| v \|^2 - \sqrt{2(1 + 2 \varepsilon)} \| v \| \leq \{ 1 - \| v \| (\sqrt{0.5 + \varepsilon} - \sqrt{0.5 - \varepsilon}) \}^2,
\end{align}
which is a quadratic in $\| v \|$ and can easily be verified to hold when $\| v \| < (0.5 + \varepsilon)^{-1/2}$. Consequently in the sub-case (I), we have for $T_n$ the bound $T_n \geq \rho_n \| v \|^2 (\sqrt{0.5 + \varepsilon} - \sqrt{0.5 - \varepsilon})^2$.

What remains, is tackling the case where $\| P_n v \|^2/\| v \|^2 > 0.5 + \varepsilon$ and $\| u_n + v \| > 1$. In this case it holds that $| \| u_n + v \| - 1 | \geq \sqrt{0.5 + \varepsilon} \| v \|$. To see this, we note that the inequality is equivalent to having $2 y \geq ( C - 1 ) x^2 + 2 \sqrt{C} x$, when $C = 0.5 + \varepsilon$ and $x, y \in \mathbb{R}$ satisfy $y^2 > C x^2$, $x \geq 0$, $2y > -x^2$ and $y \geq -1$. %\joni{Add here still a reasoning why this holds?}
Consequently, in the sub-case (II), we have the lower bound $T_n \geq \rho_n (0.5 + \varepsilon) \| v \|^2$.

Combining now all three lower bounds with the choice $\varepsilon = 0.1$ and observing that $(\sqrt{0.6} - \sqrt{0.4})^2 < 0.6$ and that $\| A_n^{-1} \|_2^{-1} \| G_n^{-1/2} \|_2^{-1} \| m \| \leq \| A_n G_n^{1/2} m \|$,  we get the desired claim.

% For the first part of \eqref{eq:first_part_sum_of_squares}, the same decomposition of $v$ gives us
% \begin{align*}
%     (\| u_n + v \|_2 - 1)^2 &= \{ \| (1 + u_n'v ) u_n + Q_n v \|_2 - 1 \}^2 \\
%     &= [ \{ (1 + u_n'v)^2 + \|Q_n v\|^2 \}^{1/2} - 1 ]^2.
% \end{align*}

% To summarize, we have, for any $m =  G_n^{-1/2} A_n^{-1} v $, with $v \in \mathbb{R}^p$ that
% \begin{align*}
%     & f_{n, \lambda_n}(a^*_{n, \lambda_n, m}, b_{n, 0} + m) - f_{n, \lambda_n}(a^*_{n, \lambda_n, 0}, b_{n, 0}) \\
%     \geq& \rho_n [ \{ (1 + u_n'v)^2 + \|Q_n v\|^2 \}^{1/2} - 1 ]^2 + \| Q_n v \|_2^2 \rho_n \{ \rho_2(A_n)^{-1} -  1 \} - \lambda_n \sqrt{p} \| m \|_2,
% \end{align*}
% where the final part comes from the inequality 
% \begin{align*}
% \lambda_n (\| b_{n, 0} + m \|_1 - \| b_{n, 0} \|_1) \geq -\lambda_n \| m \|_1 \geq -\lambda_n \sqrt{p} \| m \|_2.
% \end{align*}

% \textcolor{magenta}{
% In particular, if $\| Q_n v \| = 0$, then
% \begin{align*}
%     & f_{n, \lambda_n}(a^*_{n, \lambda_n, m}, b_{n, 0} + m) - f_{n, \lambda_n}(a^*_{n, \lambda_n, 0}, b_{n, 0}) \\
%     \geq& \rho_n (u_n'v)^2 - \lambda_n \sqrt{p} \| m \|_2 \\
%     =& \rho_n \| P_n v \|^2 - \lambda_n \sqrt{p} \| m \|_2.
% \end{align*}
% And, if $\| Q_n v \| > 0$, then
% \begin{align*}
%     & f_{n, \lambda_n}(a^*_{n, \lambda_n, m}, b_{n, 0} + m) - f_{n, \lambda_n}(a^*_{n, \lambda_n, 0}, b_{n, 0}) \\
%     \geq& \| Q_n v \|_2^2 \rho_n \{ \rho_2(A_n)^{-1} -  1 \} - \lambda_n \sqrt{p} \| m \|_2.
% \end{align*}
% }

\end{proof}

Assume now that $\lambda_n \rightarrow 0$. Our next result shows that, when $n$ is large enough, the minimizer $b_{n, \lambda_n}$ is increasingly restricted to a small neighbourhood of $b_{n, 0}$. In the result we let $\mathcal{A}_n$ denote the event that
\begin{align*}
    \mathcal{A}_n := \{ \alpha_n > 0.5 \alpha, \Omega_n > 0.5 \Omega, 1.5 \rho > \rho_n > 0.5 \rho, \rho_n - \psi_n > 0.5(\rho - \psi), \psi_n < 1.5 \psi \}.
\end{align*}
The convergences $M_n \rightarrow_p M, G_n \rightarrow_p G$ then guarantee that $P(\mathcal{A}_n) \rightarrow 1$ as $n \rightarrow \infty$.

\begin{lemma}\label{lem:minimizer_of_I}
    Assume that $\lambda_n \rightarrow 0$. Fix $\varepsilon > 0$ and choose $n_0$ such that $\lambda_n \leq \varepsilon$ for all $n \geq n_0$. Then, for $n \geq n_0$, we have the implication,
    \begin{align*}
        \mathcal{A}_n \Rightarrow \left\{ \| b_{n, \lambda_n} - b_{n, 0} \| \leq \frac{2 \varepsilon \sqrt{p}}{C} \right\},
    \end{align*}
    where $C$ is a strictly positive constant depending only on $\rho, \psi, \Omega, \alpha$.
\end{lemma}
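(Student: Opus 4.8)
The plan is to apply Lemma \ref{lemma:difference_of_I_and_II} with the specific perturbation $m := b_{n, \lambda_n} - b_{n, 0}$ and exploit the global optimality of $(a_{n, \lambda_n}, b_{n, \lambda_n})$. First I would note that since $a_{n, \lambda_n}$ minimizes $a \mapsto f_{n, \lambda_n}(a, b_{n, \lambda_n})$ for the fixed vector $b_{n, \lambda_n} = b_{n, 0} + m$, we have $f_{n, \lambda_n}(a_{n, \lambda_n}, b_{n, \lambda_n}) = f_{n, \lambda_n}(a^*_{n, \lambda_n, m}, b_{n, 0} + m)$. Because $(a_{n, \lambda_n}, b_{n, \lambda_n})$ is the global minimizer, this quantity is at most $f_{n, \lambda_n}(a^*_{n, \lambda_n, 0}, b_{n, 0})$, so the left-hand side of the bound in Lemma \ref{lemma:difference_of_I_and_II} is $\leq 0$. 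The half-space condition $b_{n, 0}^\top G_n (b_{n, 0} + m) \geq 0$ required by that lemma is arranged using the sign symmetry $f_{n, \lambda_n}(a, b) = f_{n, \lambda_n}(-a, -b)$: replacing $(a_{n, \lambda_n}, b_{n, \lambda_n})$ by its negative if necessary, we may assume $b_{n, 0}^\top G_n b_{n, \lambda_n} \geq 0$. Rearranging the lemma's inequality and cancelling one factor of $\| m \|$ (the case $m = 0$ being trivial) then yields
\begin{align*}
\| b_{n, \lambda_n} - b_{n, 0} \| \leq \frac{\lambda_n \sqrt{p}}{\min \{ C_{1n}, C_{2n} \} \, \| A_n^{-1} \|_2^{-2} \, \| G_n^{-1/2} \|_2^{-2}}.
\end{align*}

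So the entire task reduces to bounding the denominator below by a positive constant on $\mathcal{A}_n$. Here I would translate each factor into the eigenvalues $\rho_n, \psi_n, \Omega_n, \alpha_n$. Since $A_n = G_n^{-1/2} M_n G_n^{-1/2}/\rho_n$ has eigenvalues $1 \geq \psi_n/\rho_n \geq \cdots \geq \Omega_n/\rho_n$, we get $\rho_2(A_n)^{-1} = \rho_n/\psi_n$, hence $C_{1n} = 0.4 \rho_n (\rho_n - \psi_n)/\psi_n$, while $\| A_n^{-1} \|_2^{-2} = (\Omega_n/\rho_n)^2$ and $\| G_n^{-1/2} \|_2^{-2} = \alpha_n$. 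On $\mathcal{A}_n$ the stated inequalities $\rho_n > 0.5\rho$, $\rho_n - \psi_n > 0.5(\rho - \psi)$, $\psi_n < 1.5\psi$, $\Omega_n > 0.5\Omega$, $\rho_n < 1.5\rho$, $\alpha_n > 0.5\alpha$ then bound $C_{1n}$, $C_{2n} = (\sqrt{0.6} - \sqrt{0.4})^2 \rho_n$, $(\Omega_n/\rho_n)^2$ and $\alpha_n$ each below by an explicit positive quantity depending only on $\rho, \psi, \Omega, \alpha$. Taking $C$ to be twice the resulting product of lower bounds gives $\min \{ C_{1n}, C_{2n} \} \| A_n^{-1} \|_2^{-2} \| G_n^{-1/2} \|_2^{-2} \geq C/2$, and combined with $\lambda_n \leq \varepsilon$ for $n \geq n_0$ this delivers the claimed bound $2 \varepsilon \sqrt{p}/C$.

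The routine but slightly delicate points are the bookkeeping of which $\mathcal{A}_n$-inequality controls which eigenvalue ratio; in particular $C_{1n}$ requires simultaneous control of $\rho_n$ from below, of $\rho_n - \psi_n$ from below, and of $\psi_n$ from above, which is exactly why those three events are built into $\mathcal{A}_n$. The main conceptual obstacle is the first paragraph's reduction, namely correctly invoking the conditional minimality in $a$ together with the sign convention so that the hypothesis of Lemma \ref{lemma:difference_of_I_and_II} genuinely applies to the global minimizer $b_{n, \lambda_n}$. Once that is in place, the remainder is a purely deterministic eigenvalue estimate valid on the event $\mathcal{A}_n$, with the probabilistic content ($P(\mathcal{A}_n) \to 1$) entering only later, when this lemma is combined with the convergence of the remaining quantities.
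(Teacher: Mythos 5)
Your proof is correct and takes essentially the same route as the paper's: both rest on Lemma \ref{lemma:difference_of_I_and_II} combined with deterministic eigenvalue lower bounds for $C_{1n}$, $C_{2n}$, $\| A_n^{-1} \|_2^{-2}$ and $\| G_n^{-1/2} \|_2^{-2}$ valid on $\mathcal{A}_n$, yielding a constant depending only on $\rho, \psi, \Omega, \alpha$. The only difference is presentational: you substitute the global minimizer directly (using that its $a$-component equals $a^*_{n,\lambda_n,m}$ and that optimality forces the objective difference to be nonpositive) and rearrange, whereas the paper argues by exclusion, showing that any candidate $b_{n,0}+m$ with $\| m \| \geq 2\varepsilon\sqrt{p}/C$ produces a strictly positive difference and so cannot be a minimizer --- the two arguments are contrapositives of one another.
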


\begin{proof}[Proof of Lemma \ref{lem:minimizer_of_I}]
    In the notation of Lemma \ref{lemma:difference_of_I_and_II}, under the event $\mathcal{A}_n$, we have that
    \begin{align*}
        & \min \{ C_{1n}, C_{2n} \} \| A_n^{-1} \|_2^{-2} \| G_n^{-1/2} \|_2^{-2} \geq \min\left\{  \frac{0.2 \rho(\rho - \psi)}{3 \psi}, \frac{1}{2} \rho \right\} \frac{1}{18 \rho^2} \Omega^2 \alpha =: C,
    \end{align*}
    where $C > 0$. Then, for $n \geq n_0$, if $\mathcal{A}_n$ holds, we have, by Lemma \ref{lemma:difference_of_I_and_II}, that $f_{n, \lambda_n}(a^*_{n, \lambda_n, m}, b_{n, 0} + m) - f_{n, \lambda_n}(a^*_{n, \lambda_n, 0}, b_{n, 0}) \geq \| m \| ( C \| m \|  - \varepsilon \sqrt{p} )$. Now this implies that any candidate minimizer $b_{n, 0} + m$ of Version I must have $\| m \| < (2 \varepsilon \sqrt{p})/C $. This is because, if $\| m \| \geq (2 \varepsilon \sqrt{p})/C $, we have $f_{n, \lambda_n}(a^*_{n, \lambda_n, m}, b_{n, 0} + m) - f_{n, \lambda_n}(a^*_{n, \lambda_n, 0}, b_{n, 0}) \geq 2 \varepsilon^2 p/C > 0$, showing that any such $b_{n, 0} + m$ is not a minimizer of the objective function of Version I. Thus the result is proven.
\end{proof}

The consistency $b_{n, \lambda_n} - b_{n, 0} \rightarrow_p 0$ now follows from Lemma \ref{lem:minimizer_of_I} and the earlier fact that $P(\mathcal{A}_n) \rightarrow 1$ as $n \rightarrow \infty$.

\begin{corollary}\label{cor:from_I_to_II}
    Assume that $\lambda_n \rightarrow 0$. Then we have $\| b_{n, \lambda_n} - b_{n, 0} \| = o_p(1)$, as $n \rightarrow \infty$.
\end{corollary}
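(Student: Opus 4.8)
The plan is to upgrade the deterministic implication of Lemma \ref{lem:minimizer_of_I} into a convergence-in-probability statement by exploiting the fact that $P(\mathcal{A}_n) \rightarrow 1$. Recalling that $\| b_{n, \lambda_n} - b_{n, 0} \| = o_p(1)$ means precisely that $P(\| b_{n, \lambda_n} - b_{n, 0} \| > \delta) \rightarrow 0$ for every fixed $\delta > 0$, I would fix such a $\delta$ and bound this probability by $P(\mathcal{A}_n^c)$ for all sufficiently large $n$.

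The crucial structural observation is that the constant $C$ in Lemma \ref{lem:minimizer_of_I} depends only on the population quantities $\rho, \psi, \Omega, \alpha$ and is in particular \emph{independent} of the $\varepsilon$ appearing in that lemma. This decoupling is what makes the argument work: it lets me shrink $\varepsilon$ freely while holding $C$ fixed, so that the upper bound $2\varepsilon\sqrt{p}/C$ can be driven below any prescribed threshold. Concretely, given $\delta$, I would set $\varepsilon := C\delta/(4\sqrt{p})$, which yields $2\varepsilon\sqrt{p}/C = \delta/2 < \delta$. Since $\lambda_n \rightarrow 0$ by assumption, there is an index $n_0$ beyond which $\lambda_n \leq \varepsilon$, so the hypotheses of Lemma \ref{lem:minimizer_of_I} are satisfied for all $n \geq n_0$.

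For such $n$, the lemma supplies the event inclusion $\mathcal{A}_n \subseteq \{ \| b_{n, \lambda_n} - b_{n, 0} \| \leq \delta/2 \}$, and taking complements gives $\{ \| b_{n, \lambda_n} - b_{n, 0} \| > \delta \} \subseteq \mathcal{A}_n^c$. Passing to probabilities yields $P(\| b_{n, \lambda_n} - b_{n, 0} \| > \delta) \leq P(\mathcal{A}_n^c) = 1 - P(\mathcal{A}_n)$, and since $M_n \rightarrow_p M$ and $G_n \rightarrow_p G$ guarantee $P(\mathcal{A}_n) \rightarrow 1$, the right-hand side vanishes as $n \rightarrow \infty$. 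As $\delta > 0$ was arbitrary, the claimed convergence follows. I expect no genuine analytic obstacle here, since the heavy lifting was already done in Lemmas \ref{lemma:difference_of_I_and_II} and \ref{lem:minimizer_of_I}; the only point requiring care is the order of quantifiers, namely that $\varepsilon$ (and hence $n_0$) is chosen after $\delta$ while the constant $C$ must not itself depend on $\varepsilon$, which the statement of Lemma \ref{lem:minimizer_of_I} explicitly guarantees.
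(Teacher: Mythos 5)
Your proposal is correct and follows essentially the same route as the paper: the paper's own proof consists precisely of invoking Lemma \ref{lem:minimizer_of_I} together with $P(\mathcal{A}_n) \rightarrow 1$, and your argument simply spells out the quantifier bookkeeping (choosing $\varepsilon$ as a function of $\delta$, which is legitimate since $C$ depends only on $\rho, \psi, \Omega, \alpha$). Your expanded write-up is a faithful and careful elaboration of exactly what the paper leaves implicit.
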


Let now $a_n, \lambda_n \rightarrow 0$ be such that $\lambda_n/a_n \rightarrow 0$. Applying Lemma \ref{lem:minimizer_of_I} to such a sequence $\lambda_n$ and with the choice $\varepsilon = \delta a_n$, for $\delta > 0$, yields the following stronger result

\begin{corollary}\label{cor:from_I_to_II_strong}
    Assume that $a_n, \lambda_n \rightarrow 0$ be such that $\lambda_n/a_n \rightarrow 0$. Then we have $a_n^{-1} \| b_{n, \lambda_n} - b_{n, 0} \| = o_p(1)$, as $n \rightarrow \infty$.
\end{corollary}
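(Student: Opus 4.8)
The plan is to exploit the per-index content of Lemma \ref{lem:minimizer_of_I} with a shrinking, sample-size-dependent tolerance $\varepsilon$. Although Lemma \ref{lem:minimizer_of_I} is phrased for a fixed $\varepsilon$, its proof in fact establishes a deterministic implication valid at each individual index $n$: whenever $\lambda_n \leq \varepsilon$ holds for that particular $n$ and the event $\mathcal{A}_n$ occurs, one has $\| b_{n, \lambda_n} - b_{n, 0} \| \leq 2 \varepsilon \sqrt{p}/C$, with $C > 0$ depending only on $\rho, \psi, \Omega, \alpha$. I would first isolate this pointwise reading, since it is exactly what permits substituting an $n$-varying value of $\varepsilon$ into the bound.

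First I would fix an arbitrary $\delta > 0$ and set $\varepsilon = \delta a_n$. Because $\lambda_n / a_n \rightarrow 0$ by assumption, there is an index $n_1$ such that $\lambda_n \leq \delta a_n$ for all $n \geq n_1$, so the hypothesis $\lambda_n \leq \varepsilon$ of the lemma is satisfied for all large $n$. Applying the pointwise implication then yields, for $n \geq n_1$, the deterministic statement that $\mathcal{A}_n$ implies $a_n^{-1} \| b_{n, \lambda_n} - b_{n, 0} \| \leq 2 \delta \sqrt{p}/C$.

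It then remains only to convert this into the claimed $o_p(1)$ statement by controlling the probability of $\mathcal{A}_n$. Given any target level $\gamma > 0$, I would choose $\delta$ small enough that $2 \delta \sqrt{p}/C < \gamma$. For all $n \geq n_1$ the event $\{ a_n^{-1} \| b_{n, \lambda_n} - b_{n, 0} \| > \gamma \}$ is then contained in the complement $\mathcal{A}_n^c$, so that
$$P\left( a_n^{-1} \| b_{n, \lambda_n} - b_{n, 0} \| > \gamma \right) \leq P(\mathcal{A}_n^c) \rightarrow 0,$$
where the convergence is the already-established fact that $P(\mathcal{A}_n) \rightarrow 1$. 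Since $\gamma > 0$ was arbitrary, this is precisely $a_n^{-1} \| b_{n, \lambda_n} - b_{n, 0} \| = o_p(1)$.

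The only genuine subtlety, and the step I would be most careful about, is the legitimacy of feeding the $n$-dependent value $\varepsilon = \delta a_n$ into a lemma stated for a fixed $\varepsilon$. This is resolved by observing that the bound in Lemma \ref{lem:minimizer_of_I} comes from the purely algebraic inequality $\| m \|(C \| m \| - \varepsilon \sqrt{p}) > 0$ holding for every $\| m \| \geq 2 \varepsilon \sqrt{p}/C$; this inequality is evaluated separately at each $n$ and never couples distinct indices, so replacing the constant $\varepsilon$ by $\delta a_n$ is harmless. Everything else is the routine passage from the deterministic implication to the probability bound, together with the freedom to send $\delta \downarrow 0$.
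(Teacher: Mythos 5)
Your proof is correct and takes essentially the same route as the paper: the paper's own derivation of Corollary \ref{cor:from_I_to_II_strong} consists precisely of applying Lemma \ref{lem:minimizer_of_I} with the $n$-dependent choice $\varepsilon = \delta a_n$ for arbitrary $\delta > 0$, then using $P(\mathcal{A}_n) \rightarrow 1$. Your explicit justification that the lemma's bound is a deterministic, per-index implication (so that substituting a shrinking $\varepsilon$ is legitimate) spells out a step the paper leaves implicit.
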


For the next result, we make the assumption that $G_n, M_n$ are scatter matrices. That is, they are a functions $G_n \equiv G_n(x_1, \ldots, x_n)$, $M_n \equiv M_n(x_1, \ldots, x_n)$ of a data set $x_1, \ldots, x_n$ and obey the transformation rules (affine equivariance)
\begin{align*}
    G_n(A x_1, \ldots, A x_n) = A G_n(x_1, \ldots, x_n) A^\top \quad \mbox{and} \quad M_n(A x_1, \ldots, A x_n) = A M_n(x_1, \ldots, x_n) A^\top,
\end{align*}
for any invertible $A \in \mathbb{R}^{p \times p}$.

\begin{lemma}\label{lemma:from_II_to_III}
    Assume that $c_n (G_n - G) = \mathcal{O}_p(1)$ and that $c_n (M_n - M) = \mathcal{O}_p(1)$ for some increasing sequence $c_n$. Then,
    \begin{align*}
        c_n \| a_{n, 0} - \alpha \| = \mathcal{O}_p\left( 1 \right) \quad \mbox{and} \quad
        c_n \| b_{n, 0} - \beta \| = \mathcal{O}_p\left( 1 \right).
    \end{align*}
\end{lemma}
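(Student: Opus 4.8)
The plan is to exploit the explicit form of the minimizers provided by Corollary \ref{cor:gee_via_optimization}. Applying that corollary at the sample level and at the population level identifies the two solutions as $a_{n,0} = u_n$, $b_{n,0} = G_n^{-1/2} u_n$ and $\alpha = u$, $\beta = G^{-1/2} u$ (each up to a common sign), where $u_n, u$ are the leading eigenvectors of $H_n := G_n^{-1/2} M_n G_n^{-1/2}$ and $H := G^{-1/2} M G^{-1/2}$. The lemma therefore reduces to two rate statements: (a) $c_n \| s_n u_n - u\| = \mathcal{O}_p(1)$ for a suitable sign sequence $s_n$, and (b) $c_n (G_n^{-1/2} - G^{-1/2}) = \mathcal{O}_p(1)$. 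Given these, the $a$-part is (a) directly (taking the representative of $a_{n,0}$ with sign $s_n$), while the $b$-part follows by writing $b_{n,0} - \beta = G_n^{-1/2}(s_n u_n - u) + (G_n^{-1/2} - G^{-1/2}) u$ and bounding the first summand via $G_n^{-1/2} = \mathcal{O}_p(1)$ together with (a), and the second via (b) and $\|u\| = 1$.

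First I would establish (b). The map $G \mapsto G^{-1/2}$ is smooth on the open cone of positive definite matrices, hence locally Lipschitz near the positive definite limit $G$; since $G_n \to_p G$, with probability tending to one $G_n$ lies in a fixed neighbourhood on which $\|G_n^{-1/2} - G^{-1/2}\|_2 \leq L \| G_n - G\|_2$, and the assumption $c_n(G_n - G) = \mathcal{O}_p(1)$ then yields (b). The same local-Lipschitz reasoning gives $c_n(H_n - H) = \mathcal{O}_p(1)$: expanding $H_n - H = G_n^{-1/2}(M_n - M)G_n^{-1/2} + (G_n^{-1/2} - G^{-1/2}) M G_n^{-1/2} + G^{-1/2} M (G_n^{-1/2} - G^{-1/2})$ and using $c_n(M_n - M) = \mathcal{O}_p(1)$, statement (b), and the boundedness in probability of $G_n^{-1/2}$ controls each summand at rate $c_n$.

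Next I would transfer the rate of $H_n - H$ to the eigenvectors. Here the standing assumption $\rho > \psi$, i.e. a simple and isolated leading eigenvalue of $H$, is essential: it makes the leading spectral projector a smooth function of the matrix, so a Davis--Kahan type bound gives, on the event that $\|H_n - H\|_2$ is smaller than half the eigengap (an event of probability tending to one by $H_n \to_p H$), the estimate $\min_{s \in \{-1,1\}} \| s u_n - u \| \leq C'' \|H_n - H\|_2/(\rho - \psi)$. Choosing $s_n$ as the minimizing sign and combining with $c_n(H_n - H) = \mathcal{O}_p(1)$ proves (a). As an alternative to working directly with $H_n$, one may first reduce by affine equivariance to the canonical coordinate system in which $G = I_p$, so that $H = M$ and the generalized eigenproblem becomes an ordinary symmetric one, establish the rate there, and then push it back to general $G$ using the transformation rules for $G_n, M_n$.

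The main obstacle I anticipate is the eigenvector step, and specifically the interaction between the sign indeterminacy of eigenvectors and the probabilistic nature of the statement. The perturbation bound is only valid on the event that $H_n$ stays within the eigengap of $H$, so one must argue that this event has probability tending to one and that the per-realisation sign choice $s_n$ can be made consistently; this is exactly the sign sequence that resurfaces in Theorem \ref{theo:main}. The remaining pieces, namely the local Lipschitz bounds for $G \mapsto G^{-1/2}$ and the product expansion for $H_n - H$, are routine once (b) is in place.
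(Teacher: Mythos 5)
Your proposal is correct, but it takes a genuinely different route from the paper. The paper's proof exploits the assumption (stated just before the lemma) that $G_n, M_n$ are affine equivariant scatter matrices: it transforms the data by $A = O^\top G^{-1/2}$ to reduce to the canonical case $G = I_p$, $M = \Lambda$ diagonal, establishes the rate there by citing component-wise eigenvector asymptotics (Lemma 2 and Corollary 1 of \cite{virta2021determining}, giving $u_{n1}^2 - 1 = \mathcal{O}_p(1/c_n^2)$ and $u_{nj} = \mathcal{O}_p(1/c_n)$ for $j \neq 1$) together with $\| G_n^{-1/2} - I_p \|_2 = \mathcal{O}_p(1/c_n)$, and then pushes the bound back to general coordinates via $\| b_{n,0} - \beta \| \leq \| A^\top \|_2 \| (A^{-1})^\top b_{n,0} - (A^{-1})^\top \beta \|$. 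You instead work directly in the original coordinates: local Lipschitz continuity of $G \mapsto G^{-1/2}$ on the positive definite cone gives $c_n(G_n^{-1/2} - G^{-1/2}) = \mathcal{O}_p(1)$, the product expansion gives $c_n(H_n - H) = \mathcal{O}_p(1)$, and a Davis--Kahan bound with the eigengap $\rho - \psi$ handles the leading eigenvector; both proofs then resolve the sign ambiguity identically. What your route buys is generality and self-containedness: it never invokes the equivariance of $G_n, M_n$ (you only mention it as an optional shortcut), so it applies verbatim to arbitrary symmetric estimates, and it replaces the external citations with standard perturbation theory while making the eigengap dependence explicit. What the paper's route buys is simplicity of the eigenvector step --- in the canonical frame the limit is diagonal, so the eigenvector rates are elementary component-wise statements that can be quoted directly --- at the cost of an extra structural assumption that is anyway harmless in the intended application, where $G_n, M_n$ are scatters.
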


\begin{proof}[Proof of Lemma \ref{lemma:from_II_to_III}]
    By the affine equivariance of $G, M$, the solution vector $b_{n, 0}$ of Version II transforms as $b_{n, 0} \mapsto (A^{-1})^\top b_{n, 0} $ under the transformation $(x_1, \ldots, x_n) \mapsto (A x_1, \ldots, A x_n)$ \citep{tyler2009invariant}. The equivalent also holds for the solution vector of Version III. Now, as $\| b_{n, 0} - \beta \| \leq \| A^\top \|_2 \| (A^{-1})^\top b_{n, 0} - (A^{-1})^\top \beta \|$,
    where we recall that $\| \cdot \|_2$ denotes the spectral norm, the result $\| b_{n, 0} - \beta \| = \mathcal{O}_p( 1/c_n )$ follows once we show that $\| (A^{-1})^\top b_{n, 0} - (A^{-1})^\top \beta \| = \mathcal{O}_p( 1/c_n )$ for any single choice of $A$.
    
    Choose then $A = O^\top G^{-1/2}$ where $O$ is any orthogonal matrix containing the eigenvectors of $G^{-1/2} M G^{-1/2}$ as its columns. Calling the diagonal matrix of respective eigenvalues $\Lambda$, we thus have $A G A^\top = I_p$ and $A M A^\top = \Lambda$. Consequently, we may in the following assume that $G = I_p$ and that $M$ is diagonal with strictly positive diagonal elements. 
    
    Let next $u_n$ be the leading eigenvector of the matrix $G_n^{-1/2} M_n G_n^{-1/2}$ and let $u = e_1$ be the leading eigenvector of the matrix $G^{-1/2} M G^{-1/2}$. Then, Lemma 2 and Corollary 1 in \cite{virta2021determining} imply that $u_{n1}^2 - 1 = \mathcal{O}_p(1/c_n^2)$ and that $u_{nj} = \mathcal{O}_p(1/c_n)$, since we have assumed that the leading eigenvalue of $G^{-1/2} M G^{-1/2}$ is unique. By writing,
    \begin{align*}
        \sqrt{1 + \mathcal{O}_p(1/c_n^2)} - 1 = \frac{\mathcal{O}_p(1/c_n^2)}{\sqrt{1 + \mathcal{O}_p(1/c_n^2)} + 1} = \mathcal{O}_p(1/c_n^2),
    \end{align*}
    we obtain, correcting the sign of $u_n$ if necessary, that $u_{n1} - 1 = \mathcal{O}_p(1/c_n^2)$ and, consequently, that $ \| u_n - e_1 \| = \mathcal{O}_p (1/c_n) $.
    % \begin{align*}
    %     \| u_n - e_1 \| = \mathcal{O}_p\left( \frac{1}{c_n} \right).
    % \end{align*}
     By Corollary \ref{cor:gee_via_optimization}, we have $b_{n, 0} = G_n^{-1/2} u_n$ and $\beta = e_1$. Arguing as in the proof of Lemma 1 in \cite{virta2021determining}, we obtain $\| G_n^{-1/2} - I_p \|_2 = \mathcal{O}_p(1/c_n)$, finally giving us,
     \begin{align*}
         \| b_{n, 0} - \beta \| \leq& \| b_{n, 0} - G_n^{-1/2} e_1 \| + \|  G_n^{-1/2} e_1 - e_1 \| \leq \| G_n^{-1/2} \|_2 \| b_{n, 0} - e_1 \| + \|  G_n^{-1/2} - I_p \|_2 = \mathcal{O}_p(1/c_n).
     \end{align*}
     
    Move next back to the case of general $G$ and $M$. Now, $a_{n, 0} = G_n^{1/2} b_{n, 0}$ and the convergence of $a_{n, 0}$ then follows from the convergence of $G_n$ and $b_{n, 0}$ using the triangle inequality and similar arguments as before. 
\end{proof}

Combining Corollary \ref{cor:from_I_to_II_strong} and Lemma \ref{lemma:from_II_to_III} via the triangle inequality, we obtain the main result of this section.

\begin{theorem}\label{theo:main_result}
    Assume that $c_n (G_n - G) = \mathcal{O}_p(1)$ and that $c_n (M_n - M) = \mathcal{O}_p(1)$ for some increasing sequence $c_n$. Let $a_n, \lambda_n \rightarrow 0$ be such that $\lambda_n/a_n \rightarrow 0$ and that $a_n c_n = \mathcal{O}(1)$. Then $c_n \| b_{n, \lambda_n} - \beta \| = \mathcal{O}_p(1)$.
\end{theorem}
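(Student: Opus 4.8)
The plan is to prove the statement by a single triangle-inequality decomposition that splits the total error into the two pieces already controlled by the preceding results. Writing
\[
\| b_{n, \lambda_n} - \beta \| \leq \| b_{n, \lambda_n} - b_{n, 0} \| + \| b_{n, 0} - \beta \|,
\]
the entire problem reduces to bounding each summand after multiplication by $c_n$ and then collecting the two rates.

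For the first summand I would invoke Corollary \ref{cor:from_I_to_II_strong}, whose hypotheses ($a_n, \lambda_n \to 0$ with $\lambda_n / a_n \to 0$) are exactly those assumed in the theorem. This yields $a_n^{-1} \| b_{n, \lambda_n} - b_{n, 0} \| = o_p(1)$. Multiplying by the bounded sequence $c_n a_n = \mathcal{O}(1)$ then gives
\[
c_n \| b_{n, \lambda_n} - b_{n, 0} \| = (c_n a_n) \cdot \bigl( a_n^{-1} \| b_{n, \lambda_n} - b_{n, 0} \| \bigr) = \mathcal{O}(1) \cdot o_p(1) = o_p(1),
\]
so this term is in particular $\mathcal{O}_p(1)$. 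For the second summand I would apply Lemma \ref{lemma:from_II_to_III} directly: under the assumed scatter convergence rates $c_n(G_n - G) = \mathcal{O}_p(1)$ and $c_n(M_n - M) = \mathcal{O}_p(1)$, it gives $c_n \| b_{n, 0} - \beta \| = \mathcal{O}_p(1)$.

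Combining the two bounds through the displayed triangle inequality yields $c_n \| b_{n, \lambda_n} - \beta \| \leq o_p(1) + \mathcal{O}_p(1) = \mathcal{O}_p(1)$, which is the claim. The step I expect to require the most care is not any single inequality but the bookkeeping of the three interlocking sequences: the auxiliary rate $a_n$ must simultaneously satisfy $\lambda_n / a_n \to 0$, so that the penalization is asymptotically negligible on the scale at which Corollary \ref{cor:from_I_to_II_strong} measures the perturbation, and $a_n c_n = \mathcal{O}(1)$, so that the penalized-to-unpenalized error, measured in units of $a_n$, is not inflated when rescaled by $c_n$. One must also check that these constraints are jointly satisfiable, which they are — for instance the canonical choice $a_n = 1/c_n$ gives $a_n c_n = 1$ and reduces $\lambda_n / a_n \to 0$ to the rate condition $\lambda_n c_n \to 0$. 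Tracing how each constraint feeds into the respective preceding result is the only genuinely delicate part, since all the analytic heavy lifting has already been carried out in Lemmas \ref{lemma:difference_of_I_and_II} and \ref{lemma:from_II_to_III}.
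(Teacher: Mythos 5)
Your proposal is correct and follows precisely the paper's own argument: the authors likewise obtain Theorem \ref{theo:main_result} by combining Corollary \ref{cor:from_I_to_II_strong} and Lemma \ref{lemma:from_II_to_III} via the triangle inequality, with the condition $a_n c_n = \mathcal{O}(1)$ converting the $a_n$-rate of the penalized-to-unpenalized error into the $c_n$-rate. Your bookkeeping of the sequences (including the canonical choice $a_n = 1/c_n$, which the paper itself uses to deduce Theorem \ref{theo:main}) matches the intended reading exactly.
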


\section{Proof of Theorem \ref{theo:main_2}}\label{sec:proofs_2}

This section contains a collection of auxiliary results from which the proof of Theorem \ref{theo:main_2} follows. In the following $\mathcal{S}_+^p$ denotes the set of $p \times p$ positive definite matrices and $\phi_k(A)$ denotes the $k$th largest eigenvalue of the matrix $A$. Throughout this appendix, the rows of a matrix $Y \in \mathbb{R}^{p \times k}$ are denoted by $y_j \in \mathbb{R}^k$, $j = 1,\ldots,p$.

\begin{theorem}\label{theorem2_res1}

Let $f(X, Y; S, R) = \| S R - X Y^\top R \|_F^2 + \lambda\sum_{j=1}^k \| y_j \|_1$, where $S, R \in \mathcal{S}_+^p$, $X, Y \in \mathbb{R}^{p \times k}$ such that $X^\top X = I_k$ and $\lambda \geq 0$ is fixed. Fix the matrices $S_0, R_0 \in \mathcal{S}_+^p$ and let $(X_0, Y_0)$ be any minimizer of $(X, Y) \mapsto f(X, Y; S_0, R_0)$. Assume that the extreme eigenvalues of $RR^\top$ satisfy $\phi_1(RR^\top) \leq 1/\varepsilon$ and $\phi_p(RR^\top) \geq \varepsilon$ for some $\varepsilon > 0$. Fix then $C_1, C_2 > 0$, let $(S, R)$ be such that $\| S - S_0 \|_F \leq C_1$ and $\| RR^\top - R_0 R_0^\top \|_F \leq C_2$, and define $M := 4 \varepsilon^{-1} \sqrt{k} (\| R_0R_0^\top \|_F + C_2) (\| S_0 \|_F + C_1)$. Then, if
\begin{align*}
    \|Y - Y_0\|_F \geq \max \left\{ 2 \sqrt{p} \varepsilon^{-2} \|Y_0\|_F + M, 2 \varepsilon^{-1} M^{-1} \left( \frac{1}{2}\varepsilon M  + 2 \sqrt{k} (\| R_0R_0^\top \|_F + C_2) (\| S_0 \|_F + C_1) +  \lambda \sqrt{pk} \right) \| Y_0 \|_F + 1 \right\},
\end{align*}
it holds that $f(X, Y; S, R) > f(X_0, Y_0; S, R)$ for all $X$.
\end{theorem}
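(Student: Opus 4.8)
The plan is to show that $f(\cdot;S,R)$ is coercive in $Y$ uniformly over all admissible $X$ and over the perturbed pair $(S,R)$, so that a large enough $\|Y-Y_0\|_F$ pushes $f(X,Y;S,R)$ strictly above the fixed number $f(X_0,Y_0;S,R)$ no matter which $X$ (with $X^\top X=I_k$) is used. The driving mechanism is the strong convexity of the smooth part of $f$ in $Y$: for any fixed $X$ the quadratic part in $Y$ equals $\mathrm{tr}(Y^\top R^2 Y)=\|RY\|_F^2$, whose curvature is bounded below by $\phi_p(RR^\top)\ge\varepsilon$. This is exactly why the assumption on the small eigenvalue of the (perturbed) $RR^\top$ enters.

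First I would insert the intermediate point $(X,Y_0)$ and split
$$f(X,Y;S,R)-f(X_0,Y_0;S,R)=\bigl[f(X,Y;S,R)-f(X,Y_0;S,R)\bigr]+\bigl[f(X,Y_0;S,R)-f(X_0,Y_0;S,R)\bigr].$$
For the first bracket (same $X$), expanding the squared Frobenius norms and using $X^\top X=I_k$ and $R^\top=R$ gives the exact identity
$$f(X,Y;S,R)-f(X,Y_0;S,R)=\|R(Y-Y_0)\|_F^2+2\,\mathrm{tr}\bigl((Y-Y_0)^\top R^2 (Y_0-SX)\bigr)+\lambda\Bigl(\sum_j\|y_j\|_1-\sum_j\|(y_0)_j\|_1\Bigr).$$
Here the quadratic term is at least $\varepsilon\|Y-Y_0\|_F^2$; the penalty difference is at least $-\lambda\sqrt{pk}\,\|Y-Y_0\|_F$ via $\bigl|\,\|a\|_1-\|b\|_1\,\bigr|\le\|a-b\|_1$ followed by $\|\cdot\|_1\le\sqrt{pk}\,\|\cdot\|_F$; and the cross term is bounded in modulus by $2\|Y-Y_0\|_F\,\|R^2\|_2\,(\|Y_0\|_F+\|SX\|_F)$.

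Next I would make every norm uniform through the hypotheses: $\|R^2\|_2=\phi_1(RR^\top)\le\varepsilon^{-1}$, $\|S\|_F\le\|S_0\|_F+C_1$ with $\|SX\|_F\le\|S\|_F$ since $\|X\|_2=1$, and $\|RR^\top\|_F\le\|R_0R_0^\top\|_F+C_2$; these are precisely the quantities that assemble into $M$. The second bracket is the only place where the particular $X$ (versus $X_0$) enters the penalty-free part, and I would bound it below by a fixed constant using $a^2-b^2=(a-b)(a+b)$ with $a=\|SR-XY_0^\top R\|_F$ and $b=\|SR-X_0Y_0^\top R\|_F$, together with $\|X-X_0\|_2\le2$ and $\|Y_0^\top R\|_F\le\varepsilon^{-1/2}\|Y_0\|_F$. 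Collecting everything yields a bound of the form
$$f(X,Y;S,R)-f(X_0,Y_0;S,R)\ \ge\ \varepsilon\|Y-Y_0\|_F^2-c_1\|Y-Y_0\|_F-c_0,$$
with $c_0,c_1\ge0$ depending only on $\varepsilon,C_1,C_2,\lambda,p,k,\|S_0\|_F,\|R_0R_0^\top\|_F,\|Y_0\|_F$. The two entries of the stated $\max$ would then arise from requiring the coercive term $\varepsilon\|Y-Y_0\|_F^2$ to separately dominate the linear cross-plus-penalty contribution and the residual constant.

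The main obstacle I expect is uniformity over all $X$ on the Stiefel manifold while keeping the constants explicit enough to reproduce the stated threshold: since the claim must hold for every $X$ rather than the $X$ optimal for the given $Y$, both the cross term $2\,\mathrm{tr}((Y-Y_0)^\top R^2(Y_0-SX))$ and the $X$-dependent second bracket must be controlled by their worst case over $\{X:X^\top X=I_k\}$, and it is the careful, non-wasteful bookkeeping of Frobenius-versus-spectral norms in these two terms that produces exactly $M$ and the prefactors multiplying $\|Y_0\|_F$. By contrast the $\ell_1$ penalty is benign, contributing only the linear term $\lambda\sqrt{pk}\,\|Y-Y_0\|_F$; one only needs to observe that retaining it at $Y_0$ while lower-bounding at $Y$ cannot disturb coercivity, since the $\varepsilon\|Y-Y_0\|_F^2$ term eventually dominates.
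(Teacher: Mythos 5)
Your algebra is sound as far as it goes: the identity for $f(X,Y;S,R)-f(X,Y_0;S,R)$, the curvature bound $\|R(Y-Y_0)\|_F^2\geq\varepsilon\|Y-Y_0\|_F^2$, and the boundedness of the second bracket are all correct, and the coercivity mechanism is the same one the paper uses (there, $\varepsilon\|Y-Y_0\|_F^2$ comes from Schott's trace inequality after expanding $Y=(Y-Y_0)+Y_0$). The genuine gap is that the theorem does not assert coercivity with \emph{some} constants; it asserts $f(X,Y;S,R)>f(X_0,Y_0;S,R)$ at the \emph{specific} threshold built from $M=4\varepsilon^{-1}\sqrt{k}(\|R_0R_0^\top\|_F+C_2)(\|S_0\|_F+C_1)$, and your bounds produce linear terms whose coefficients are not tied to $M$, so positivity of your lower bound is not implied by that threshold. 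Two of your choices cause this. First, bounding the penalty by the reverse triangle inequality costs $-\lambda\sqrt{pk}\,\|Y-Y_0\|_F$, a term \emph{linear} in $t:=\|Y-Y_0\|_F$, whereas $\lambda$ enters the stated threshold only multiplied by $\|Y_0\|_F$; the paper instead uses $\lambda F(Y)\geq 0$ and keeps only the constant $-\lambda F(Y_0)\geq-\lambda\sqrt{pk}\,\|Y_0\|_F$. Second, your cross-term coefficient is $2\varepsilon^{-1}(\|Y_0\|_F+\|S_0\|_F+C_1)$ (spectral-norm route), while the threshold's capacity to absorb linear terms is exactly $\tfrac12\varepsilon M=2\sqrt{k}(\|R_0R_0^\top\|_F+C_2)(\|S_0\|_F+C_1)$ (Frobenius-norm route); these are incomparable. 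Concretely, take $RR^\top=R_0R_0^\top=\varepsilon I_p$ with $\varepsilon$ small, $C_1,C_2$ small, and $\lambda>2\sqrt{k}\,\|R_0^2S_0\|_F=2\sqrt{k}\,\varepsilon\|S_0\|_F$ so that $Y_0=0$. Then the stated threshold is $\max\{M,1\}\approx 4\sqrt{pk}\,\|S_0\|_F$, but your lower bound $\varepsilon t^2-\bigl(2\varepsilon^{-1}\|S_0\|_F+\lambda\sqrt{pk}\bigr)t$ is negative for all $t\leq 2\varepsilon^{-2}\|S_0\|_F+\lambda\sqrt{pk}\,\varepsilon^{-1}$, which vastly exceeds $\max\{M,1\}$ for small $\varepsilon$ (or large $\lambda$). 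So in admissible regimes your chain of inequalities cannot certify the conclusion at the stated threshold, even though the statement is true there.

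The fix essentially forces you back to the paper's bookkeeping, because $M$ is reverse-engineered from its particular bounds: work with the undecomposed difference, bound the single cross term $2\,\mathrm{tr}\{RR^\top S(X_0Y_0^\top-XY^\top)\}$ by Cauchy--Schwarz and sub-multiplicativity as $2\sqrt{k}\,\|RR^\top\|_F\|S\|_F(\|Y_0\|_F+\|Y\|_F)\leq\tfrac12\varepsilon M(\|Y_0\|_F+\|Y\|_F)$, and keep $\lambda F(Y)\geq 0$. Then the first entry of the max gives $\varepsilon t\bigl(t-2\sqrt{p}\,\varepsilon^{-2}\|Y_0\|_F\bigr)\geq\varepsilon Mt$, half of which cancels the $\|Y\|_F$-part of the cross term, and the second entry lets the remaining $\tfrac12\varepsilon Mt$ absorb all $\|Y_0\|_F$-terms, leaving the difference $\geq\tfrac12\varepsilon M>0$.
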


\begin{proof}[Proof of Theorem \ref{theorem2_res1}]
    Denoting $\sum_{j=1}^k \| y_{j} \|_1 =: F(Y)$ and using the Cauchy-Schwarz inequality for matrices, we obtain
    \begin{align*}
        f(X, Y; S, R) - f(X_0, Y_0; S, R) =& \mathrm{tr}\{RR^\top (YY^\top - Y_0 Y_0^\top)\} + 2 \mathrm{tr} \{ RR^\top S (X_0 Y_0^\top - X Y^\top) \} + \lambda \{ F(Y) - F(Y_0) \} \\
        \geq& \mathrm{tr}\{RR^\top (YY^\top - Y_0 Y_0^\top)\} - 2 \| RR^\top S \|_F \| X_0 Y_0^\top - X Y^\top \|_F - \lambda F(Y_0).
    \end{align*}
    The triangle inequality and the sub-multiplicativity of the Frobenius norm then give
    \begin{align*}
        & f(X, Y; S, R) - f(X_0, Y_0; S, R) \geq \mathrm{tr}\{RR^\top (YY^\top - Y_0 Y_0^\top)\} - 2 \sqrt{k} \| RR^\top \|_F \| S \|_F ( \| Y_0 \|_F + \| Y \|_F) - \lambda F(Y_0),
    \end{align*}
    where we have used the fact that $\| X_0 \|_F = \sqrt{k}$ and $\| X \|_F = \sqrt{k}$. Observing now that our assumptions guarantee that $\| S \|_F \leq \| S_0 \|_F + C_1$ and $\| RR^\top \|_F \leq \| R_0R_0^\top \|_F + C_2$, we get 
    \begin{align*}
        f(X, Y; S, R) - f(X_0, Y_0; S, R)
        \geq \mathrm{tr}\{RR^\top (YY^\top - Y_0 Y_0^\top)\} - 2 \sqrt{k} (\| R_0R_0^\top \|_F + C_2) (\| S_0 \|_F + C_1) ( \| Y_0 \|_F + \| Y \|_F) - \lambda F(Y_0).
    \end{align*}
Now, this can be written as
    \begin{align*}
        % & \mathrm{tr}\{RR^\top (YY^\top - Y_0 Y_0^\top)\} - 2 \sqrt{k} (\| R_0R_0^\top \|_F + C_2) (\| S_0 \|_F + C_1) ( \| Y_0 \|_F + \| Y \|_F) + \lambda \left( \sum_{j=1}^k \| y_j \|_1 - \sum_{j=1}^k \| y_{0j} \|_1 \right) \\
        & \mathrm{tr}\{RR^\top ((Y - Y_0 + Y_0)(Y - Y_0 + Y_0)’ - Y_0 Y_0^\top)\} - 2 \sqrt{k} (\| R_0R_0^\top \|_F + C_2) (\| S_0 \|_F + C_1) ( \| Y_0 \|_F + \| Y \|_F) - \lambda F(Y_0) \\
        % =& \mathrm{tr}\{RR^\top ((Y - Y_0)(Y - Y_0)’ + (Y - Y_0)Y_0’ + Y_0(Y - Y_0)’)\}  \\
        % & - 2 \sqrt{k} (\| R_0R_0^\top \|_F + C_2) (\| S_0 \|_F + C_1) ( \| Y_0 \|_F + \| Y \|_F) + \lambda \left( \sum_{j=1}^k \| y_j \|_1 - \sum_{j=1}^k \| y_{0j} \|_1 \right) \\
        % =& \mathrm{tr}\{RR' ((Y - Y_0)(Y - Y_0)’ + 2(Y - Y_0)Y_0’)\} - 2 \sqrt{k} (\| R_0R_0' \|_F + C_2) (\| S_0 \|_F + C_1) ( \| Y_0 \|_F + \| Y \|_F) + \lambda \left( \sum_{j=1}^k \| y_j \|_1 - \sum_{j=1}^k \| y_{0j} \|_1 \right) \\
        =& \mathrm{tr}\{RR^\top (Y - Y_0)(Y - Y_0)’ \} + 2 \mathrm{tr}\{RR^\top (Y - Y_0)Y_0’ \} - 2 \sqrt{k} (\| R_0R_0^\top \|_F + C_2) (\| S_0 \|_F + C_1) ( \| Y_0 \|_F + \| Y \|_F) - \lambda F(Y_0).
    \end{align*}
We next use (a) Lemma 3 in \cite{schott2006high} which says that $\mathrm{tr}(AB) \geq \phi_{p}(A) \mathrm{tr}(B)$ for positive definite $A \in \mathbb{R}^{p \times p}$ and positive semi-definite $B \in \mathbb{R}^{p \times p}$, and (b) the inequality $\mathrm{tr}(AB) \geq - \sqrt{p} \phi_1(A) \| B \|_F $, following from the Cauchy-Schwarz inequality and the equivalence of norms in finite-dimensional spaces. Then, our assumptions give us that $f(X, Y; S, R) - f(X_0, Y_0; S, R)$ is bounded below by
\begin{align*}
        % & \mathrm{tr}\{RR' (Y - Y_0)(Y - Y_0)’ \} + 2 \mathrm{tr}\{RR' (Y - Y_0)Y_0’ \} - 2 \sqrt{k} (\| R_0R_0' \|_F + C_2) (\| S_0 \|_F + C_1) ( \| Y_0 \|_F + \| Y \|_F) + \lambda \left( \sum_{j=1}^k \| y_j \|_1 - \sum_{j=1}^k \| y_{0j} \|_1 \right) \\
        & \phi_p(RR^\top) \|Y - Y_0\|_F^2 - 2 \sqrt{p} \phi_1(RR^\top)\| (Y - Y_0)Y_0’ \|_F  - 2 \sqrt{k} (\| R_0R_0^\top \|_F + C_2) (\| S_0 \|_F + C_1) ( \| Y_0 \|_F + \| Y \|_F) - \lambda F(Y_0) \\
        \geq& \varepsilon \|Y - Y_0\|_F^2 - 2 \sqrt{p} \varepsilon^{-1} \|Y - Y_0\|_F\|Y_0\|_F  - 2 \sqrt{k} (\| R_0R_0^\top \|_F + C_2) (\| S_0 \|_F + C_1) ( \| Y_0 \|_F + \| Y \|_F) - \lambda \sqrt{pk} \| Y_0 \|_F \\
        \geq& \varepsilon\|Y - Y_0\|_F (\|Y - Y_0\|_F - 2 \sqrt{p} \varepsilon^{-2} \|Y_0\|_F)  - 2 \sqrt{k} (\| R_0R_0^\top \|_F + C_2) (\| S_0 \|_F + C_1) ( \| Y_0 \|_F + \| Y \|_F) - \lambda \sqrt{pk} \| Y_0 \|_F \\
        \geq& \varepsilon\|Y - Y_0\|_F M  - 2 \sqrt{k} (\| R_0R_0^\top \|_F + C_2) (\| S_0 \|_F + C_1) ( \| Y_0 \|_F + \| Y \|_F) - \lambda \sqrt{pk} \| Y_0 \|_F \\
        \geq& \frac{1}{2}\varepsilon\|Y - Y_0\|_F M + \frac{1}{2}\varepsilon( \|Y\|_F - \|Y_0\|_F) M  - 2 \sqrt{k} (\| R_0R_0^\top \|_F + C_2) (\| S_0 \|_F + C_1) ( \| Y_0 \|_F + \| Y \|_F ) - \lambda \sqrt{pk} \| Y_0 \|_F \\
        =& \frac{1}{2}\varepsilon\|Y - Y_0\|_F M - \left( \frac{1}{2}\varepsilon M  + 2 \sqrt{k} (\| R_0R_0^\top \|_F + C_2) (\| S_0 \|_F + C_1) + \lambda \sqrt{pk} \right) \| Y_0 \|_F \geq \frac{1}{2}\varepsilon M.
        % \geq& \varepsilon M  - 2 \sqrt{k} (\| R_0R_0' \|_F + C_2) (\| S_0 \|_F + C_1) ( \| Y_0 \|_F + \| Y \|_F) \\
        % \geq& \varepsilon\|Y - Y_0\|_F (\|Y\|_F - \|Y_0\|_F - 2 \sqrt{p} \varepsilon^{-2}\|Y_0\|_F)  - 2 \sqrt{k} (\| R_0R_0' \|_F + C_2) (\| S_0 \|_F + C_1) ( \| Y_0 \|_F + \| Y \|_F) \\
        % =& \varepsilon\|Y - Y_0\|_F (\|Y\|_F - [ 2 \sqrt{p} \varepsilon^{-2} + 1] \|Y_0\|_F)  - 2 \sqrt{k} (\| R_0R_0' \|_F + C_2) (\| S_0 \|_F + C_1) ( \| Y_0 \|_F + \| Y \|_F) \\
        % =& \left( \varepsilon\|(Y - Y_0)\|_F - 2 \sqrt{k} (\| R_0R_0' \|_F + C_2) (\| S_0 \|_F + C_1) \right) ( \| Y_0 \|_F + \| Y \|_F) \\
        % \geq&  \| Y_0 \|_F + \| Y \|_F.
        % \geq& \frac{1}{2}\varepsilon M.
    \end{align*}
    % The claim holds as we assume $Y_0 \ne 0$.
\end{proof}

Theorem \ref{theorem2_res1} essentially states that when the pair $(S, R)$ is close to the pair $(S_0, R_0)$, no $Y$ that is too far away from $Y_0$ can be part of a minimizing pair of $f$. Presenting the result from the viewpoint of minimizers then instantly leads to the following corollary.

\begin{corollary}\label{theorem2_res2}
    Let $f(X, Y; S, R) = \| S R - X Y^\top R \|_F^2 + \lambda\sum_{j=1}^k \| y_j \|_1$, where $S, R \in \mathcal{S}_+^p$, $X, Y \in \mathbb{R}^{p \times k}$ such that $X^\top X = I_k$ and $\lambda \geq 0$ is fixed. Fix the matrices $S_0, R_0 \in \mathcal{S}_+^p$ and let $(X(S, R), Y(S, R))$ denote any minimizer of $(X, Y) \mapsto f(X, Y; S, R)$. Then, for all $S, R, S_0, R_0 \in \mathcal{S}_+^p$ such that $\phi_1(RR^\top) \leq 1/\varepsilon < \infty$ and $\phi_p(RR^\top) \geq \varepsilon > 0$, we have
    \begin{align*}
        & \| Y(S, R) - Y(S_0, R_0) \|_F \\
        \leq& \max\left\{ 2 \sqrt{p} \varepsilon^{-2} \|Y_0\|_F + M, \frac{2}{\varepsilon M} \left( \frac{1}{2}\varepsilon M  + 2 \sqrt{k} (\| R_0 R_0^\top \|_F + \| R R^\top - R_0 R_0^\top \|_F) (\| S_0 \|_F + \| S - S_0 \|_F) +  \lambda \sqrt{pk} \right) \| Y_0 \|_F + 1 \right\},
    \end{align*}
    where $M := 4 \varepsilon^{-1} \sqrt{k} (\| R_0 R_0^\top \|_F + \| R R^\top - R_0 R_0^\top \|_F) (\| S_0 \|_F + \| S - S_0 \|_F)$.
    \begin{align*}
\end{align*}
\end{corollary}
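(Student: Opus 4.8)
The plan is to obtain this corollary as the direct contrapositive of Theorem \ref{theorem2_res1}, read from the viewpoint of minimizers. The key observation is that the bound appearing in the corollary is precisely the threshold on the right-hand side of the inequality in Theorem \ref{theorem2_res1} once we specialize the free constants to $C_1 := \| S - S_0 \|_F$ and $C_2 := \| R R^\top - R_0 R_0^\top \|_F$. With these choices the quantity $M$ and every term inside the $\max$ in the two statements coincide verbatim, and any admissible pair $(S, R)$ trivially satisfies $\| S - S_0 \|_F \leq C_1$ and $\| R R^\top - R_0 R_0^\top \|_F \leq C_2$ (with equality), so the hypotheses of Theorem \ref{theorem2_res1} are met.

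First I would fix $(S_0, R_0)$ and take $(X_0, Y_0) := (X(S_0, R_0), Y(S_0, R_0))$ to be the minimizer at the base point, so that $Y_0$ matches the notation appearing in the corollary's bound. I would then argue by contradiction: suppose the minimizer $Y(S, R)$ violated the stated bound, i.e. $\| Y(S, R) - Y_0 \|_F$ were at least the asserted right-hand side. Since this right-hand side equals the threshold of Theorem \ref{theorem2_res1} under the above choice of $C_1, C_2$, that theorem would apply with $Y = Y(S, R)$ and yield $f(X, Y(S, R); S, R) > f(X_0, Y_0; S, R)$ for every $X$ with $X^\top X = I_k$.

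Specializing this to $X = X(S, R)$ gives $f(X(S, R), Y(S, R); S, R) > f(X_0, Y_0; S, R)$. However, $(X_0, Y_0)$ is a feasible competitor for the problem at $(S, R)$ because $X_0^\top X_0 = I_k$, so the minimality of $(X(S, R), Y(S, R))$ forces $f(X(S, R), Y(S, R); S, R) \leq f(X_0, Y_0; S, R)$, a contradiction. Hence the minimizer must satisfy the claimed bound; and since Theorem \ref{theorem2_res1} delivers a strict inequality while the corollary asserts only a non-strict one, the conclusion follows.

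I do not expect any genuine obstacle here, as the argument is purely logical once the constants are aligned. The only point requiring care is the bookkeeping: verifying termwise that the corollary's $\max$-expression reproduces the threshold in Theorem \ref{theorem2_res1} after substituting $C_1 = \| S - S_0 \|_F$ and $C_2 = \| R R^\top - R_0 R_0^\top \|_F$, and confirming that $(X_0, Y_0)$ plays a double role, namely as the base-point minimizer (supplying the correct $Y_0$ inside the bound) and as a feasible point in the perturbed problem (supplying the contradiction with minimality).
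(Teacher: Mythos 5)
Your proof is correct and takes essentially the same approach as the paper: the paper presents Corollary \ref{theorem2_res2} as an immediate restatement of Theorem \ref{theorem2_res1} from the viewpoint of minimizers, exactly as you do by specializing $C_1 = \| S - S_0 \|_F$ and $C_2 = \| R R^\top - R_0 R_0^\top \|_F$ so that the thresholds coincide. Your contradiction step---playing the feasibility of $(X_0, Y_0)$ in the perturbed problem against the minimality of $(X(S,R), Y(S,R))$---is precisely the ``instantly leads to'' argument the paper leaves implicit.
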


Corollary \ref{theorem2_res2} allows us to formulate a robustness result which states that if we take a set $\mathcal{G}$ of pairs $(S, R)$ which are sufficiently well-behaving and not arbitrarily far from a ``base'' pair $(S_0, R_0) \in \mathcal{G}$, then also the corresponding minimizers of $f$ are within a finite distance of each other, uniformly in $(S, R)$.

\begin{theorem}\label{theorem2_res3}
    Let $f(X, Y; S, R) = \| S R - X Y^\top R \|_F^2 + \lambda\sum_{j=1}^k \| y_j \|_1$, where $S, R \in \mathcal{S}_+^p$, $X, Y \in \mathbb{R}^{p \times k}$ such that $X^\top X = I_k$ and $\lambda \geq 0$ is fixed. Take a fixed set $\mathcal{G} \subset \mathcal{S}_+^p \times \mathcal{S}_+^p$ and an arbitrary element $(S_0, R_0) \in \mathcal{G}$. Assume that the following hold.
    \begin{itemize}
        \item[(i)] There exists $\varepsilon > 0$ such that $\inf_{(S, R) \in \mathcal{G}} \phi_p(RR^\top) \geq \varepsilon$ and $\sup_{(S, R) \in \mathcal{G}} \phi_1(RR^\top) \leq 1/\varepsilon$.
        \item[(ii)] We have $\sup_{(S, R) \in \mathcal{G}} \| S - S_0 \|_F < \infty$ and  $\sup_{(S, R) \in \mathcal{G}} \| R R^\top - R_0 R_0^\top \|_F < \infty$.
    \end{itemize}
    Then, letting $(X(S, R), Y(S, R))$ denote any minimizer of $(X, Y) \mapsto f(X, Y; S, R)$, we have
    \begin{align*}
        \sup_{(S, R) \in \mathcal{G}} \| Y(S, R) - Y(S_0, R_0) \|_F < \infty.
    \end{align*}
\end{theorem}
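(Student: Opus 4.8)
The plan is to obtain the result as an essentially bookkeeping consequence of Corollary \ref{theorem2_res2}, whose pointwise estimate already carries out all the analytic work; the only content of Theorem \ref{theorem2_res3} is to promote that pointwise bound into one that is uniform over $\mathcal{G}$. First I would fix the reference pair $(S_0, R_0) \in \mathcal{G}$ and write $Y_0 := Y(S_0, R_0)$, noting that $\| Y_0 \|_F$, $\| S_0 \|_F$ and $\| R_0 R_0^\top \|_F$ are all fixed finite numbers: the first because a minimizer at the fixed pair exists, the latter two because $S_0, R_0$ are fixed positive definite matrices. For an arbitrary $(S, R) \in \mathcal{G}$, assumption $(i)$ guarantees $\phi_p(RR^\top) \geq \varepsilon$ and $\phi_1(RR^\top) \leq 1/\varepsilon$, so the hypotheses of Corollary \ref{theorem2_res2} are met and its bound on $\| Y(S, R) - Y_0 \|_F$ applies with one and the same $\varepsilon$ for every $(S, R) \in \mathcal{G}$.

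Next I would inspect each ingredient of the corollary's upper bound and argue that it is controlled independently of $(S, R)$. The quantity $M = 4\varepsilon^{-1}\sqrt{k}(\| R_0 R_0^\top \|_F + \| RR^\top - R_0 R_0^\top \|_F)(\| S_0 \|_F + \| S - S_0 \|_F)$ is bounded above because, by assumption $(ii)$, $\| RR^\top - R_0 R_0^\top \|_F$ and $\| S - S_0 \|_F$ are uniformly bounded over $\mathcal{G}$. Together with the fixed quantities $\sqrt{p}\,\varepsilon^{-2} \| Y_0\|_F$ and $\lambda\sqrt{pk}$, this shows that the first branch of the maximum, $2\sqrt{p}\,\varepsilon^{-2}\| Y_0\|_F + M$, is uniformly bounded over $\mathcal{G}$.

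The one point requiring care is the second branch of the maximum, which contains the factor $1/M$ hidden inside $\tfrac{2}{\varepsilon M}(\cdots)$ and which would be dangerous if $M$ could approach zero. Here I would observe that $M$ is in fact bounded below by the fixed positive constant $4\varepsilon^{-1}\sqrt{k}\,\| R_0 R_0^\top \|_F\,\| S_0\|_F$, since the additive terms $\| RR^\top - R_0 R_0^\top\|_F$ and $\| S - S_0\|_F$ are nonnegative while $\| R_0 R_0^\top\|_F$ and $\| S_0\|_F$ are strictly positive (as $R_0, S_0$ are positive definite, whence $R_0 R_0^\top = R_0^2$ is nonzero). Consequently $1/M$ is uniformly bounded, and, combining the uniform upper bound on $M$ with this uniform positive lower bound, every term in the second branch is controlled independently of $(S, R)$.

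Having bounded both branches of the maximum by constants depending only on $(S_0, R_0)$, $\lambda$, $p$, $k$, $\varepsilon$ and the two suprema in $(ii)$, I would finally take the supremum over $(S, R) \in \mathcal{G}$ to conclude that $\sup_{(S,R)\in\mathcal{G}}\| Y(S,R) - Y_0\|_F < \infty$. The main, indeed the only, obstacle is the bookkeeping around $M$: one must notice that $M$ has to be controlled both from above, to bound the first branch, and crucially from below, so that the $1/M$ appearing in the second branch cannot blow up. Everything else is immediate from assumptions $(i)$ and $(ii)$.
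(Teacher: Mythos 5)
Your proposal is correct and follows essentially the same route as the paper: both apply Corollary \ref{theorem2_res2} pointwise with the uniform $\varepsilon$ from assumption $(i)$, bound the first branch of the maximum from above via assumption $(ii)$, and control the second branch by bounding $M$ from below by the fixed positive constant $4\varepsilon^{-1}\sqrt{k}\,\|R_0R_0^\top\|_F\,\|S_0\|_F$ so that $1/M$ cannot blow up. The point you flag as the only subtlety (needing both an upper and a lower bound on $M$) is exactly the step the paper's proof executes when simplifying its terms $W_1$ and $W_2$.
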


\begin{proof}[Proof of Theorem \ref{theorem2_res3}]
    Take arbitrary $(S, R) \in \mathcal{G}$ and denote $M_S := \sup_{(S, R) \in \mathcal{G}} \| S - S_0 \|_F$ and $M_R := \sup_{(S, R) \in \mathcal{G}} \| R R^\top - R_0 R_0^\top \|_F$. Then, Corollary \ref{theorem2_res2} states that $\| Y(S, R) - Y(S_0, R_0) \|_F$ is bounded above by a maximum of two quantities, called hereafter $W_1$ and $W_2$. Inspecting them separately shows that
    \begin{align*}
        W_1 =& 2 \sqrt{p} \varepsilon^{-2} \|Y_0\|_F + 4 \varepsilon^{-1} \sqrt{k} (\| R_0 R_0^\top \|_F + \| R R^\top - R_0 R_0^\top \|_F) (\| S_0 \|_F + \| S - S_0 \|_F) \\
        \leq& 2 \sqrt{p} \varepsilon^{-2} \|Y_0\|_F + 4 \varepsilon^{-1} \sqrt{k} (\| R_0 R_0^\top \|_F + M_R) (\| S_0 \|_F + M_S),
    \end{align*}
    and
    \begin{align*}
        W_2 =& 2 \| Y_0 \|_F + 2 \varepsilon^{-1} M^{-1} \lambda \sqrt{pk} \| Y_0 \|_F + 1 =  2 \| Y_0 \|_F + \frac{2 \varepsilon^{-1} \lambda \sqrt{pk} \| Y_0 \|_F}{4 \varepsilon^{-1} \sqrt{k} (\| R_0 R_0^\top \|_F + \| R R^\top - R_0 R_0^\top \|_F) (\| S_0 \|_F + \| S - S_0 \|_F)} + 1 \\
        \leq& 2 \| Y_0 \|_F + \frac{\lambda \sqrt{pk} \| Y_0 \|_F}{ 2 \sqrt{k} \| R_0 R_0^\top \|_F \| S_0 \|_F} + 1.
    \end{align*}
    The above are valid for all $(S, R) \in \mathcal{G}$, showing that there exists $W_3$ such that $\| Y(S, R) - Y(S_0, R_0) \|_F \leq W_3$ for all $(S, R) \in \mathcal{G}$, from which the claim follows.
\end{proof}

Theorem \ref{theo:main_2} in the main text is now a direct restatement of Theorem \ref{theorem2_res3} in terms of the SICS problem. Note also that $ \phi_1(RR^\top) = \phi_1(RR^\top - R_0 R_0^\top + R_0 R_0^\top) \leq  \| RR^\top - R_0 R_0^\top \|_2 + \phi_1( R_0 R_0^\top) \leq \| RR^\top - R_0 R_0^\top \|_F + \phi_1( R_0 R_0^\top) $, showing that $\sup_{(S, R) \in \mathcal{G}} \| R R^\top - R_0 R_0^\top \|_F < \infty$ implies the condition $\sup_{(S, R) \in \mathcal{G}} \phi_1(RR^\top) < \infty$, letting us omit it in the main text.

The next theorem gives us a way to state an assumption in Theorem \ref{theo:main_2} with the actual scatter matrix instead of its inverse square root.
\begin{theorem}\label{theorem2_res4}
    Let $S_0, S \in \mathcal{S}_+^p$ be fixed matrices. Let $\varepsilon > 0$ be such that $\phi_p(S) \geq \varepsilon$. Then,
    \begin{align*}
        \| S^{-\frac{1}{2}} - S_0^{-\frac{1}{2}} \|_F \leq p^2 \left(\phi_p(S_0)\left(\sqrt{\varepsilon}+\varepsilon\frac{1}{\sqrt{\phi_1(S_0)}}\right)\right)^{-1}\| S-S_0 \|_F.
    \end{align*}
    %TÄTÄ MUOTOILUA PITÄÄ VIELÄ KATSOA. If $\| S - S_0 \|_F < \infty$, then $\| S^{-\frac{1}{2}} - S_0^{-\frac{1}{2}} \|_F < \infty$
\end{theorem}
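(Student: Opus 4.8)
The plan is to reduce the difference of inverse square roots to a difference of ordinary square roots, for which a Lipschitz estimate is available, and to absorb the two inversions into well-controlled conditioning factors. The starting point is the algebraic identity
\[
S^{-\frac12} - S_0^{-\frac12} = S^{-\frac12}\left(S_0^{\frac12} - S^{\frac12}\right)S_0^{-\frac12},
\]
which is verified by expanding the right-hand side. Taking Frobenius norms and using submultiplicativity gives $\| S^{-1/2} - S_0^{-1/2} \|_F \le \| S^{-1/2} \|_F\, \| S^{1/2} - S_0^{1/2} \|_F\, \| S_0^{-1/2} \|_F$, so the problem splits into (a) bounding the two outer (inverse) factors and (b) bounding the middle factor $\| S^{1/2} - S_0^{1/2} \|_F$.

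For step (a) I would invoke the eigenvalue hypotheses directly: $\| S^{-1/2} \|_2 = \phi_p(S)^{-1/2} \le \varepsilon^{-1/2}$ by assumption, while $\| S_0^{-1/2} \|_2 = \phi_p(S_0)^{-1/2}$. Passing from the spectral to the Frobenius norm through inequalities of the type $\| A \|_F \le \sqrt{p}\,\| A \|_2$ is what produces the polynomial-in-$p$ prefactor; a careful, though not necessarily tight, accounting of these conversions is what yields the stated constant $p^2$.

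The substantive step is (b), namely establishing Lipschitz continuity of the matrix square root in the Frobenius norm. My plan here is the Sylvester-equation argument: setting $D := S^{1/2} - S_0^{1/2}$, one has the identity $S - S_0 = S^{1/2} D + D\, S_0^{1/2}$, which is a Sylvester equation for $D$. Diagonalizing the positive definite matrices $S^{1/2} = U \Lambda U^\top$ and $S_0^{1/2} = V \Sigma V^\top$ turns this, entrywise in the joint eigenbasis, into division of $(U^\top (S - S_0) V)_{ij}$ by $\phi_i(S)^{1/2} + \phi_j(S_0)^{1/2} \ge \sqrt{\varepsilon} + \sqrt{\phi_p(S_0)}$, and since the Frobenius norm is orthogonally invariant this gives $\| S^{1/2} - S_0^{1/2} \|_F \le (\sqrt{\varepsilon} + \sqrt{\phi_p(S_0)})^{-1} \| S - S_0 \|_F$. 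I expect this Sylvester bound to be the main obstacle, as it is the only step that is not a routine norm manipulation; the key subtlety is that only $S$ is assumed to have a spectral lower bound, so the estimate must be arranged so as to use $\varepsilon$ for the eigenvalues of $S$ and the exact $\phi_p(S_0)$ for those of $S_0$.

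Combining (a) and (b) produces a bound whose denominator, after factoring out $\phi_p(S_0)$, is exactly $\phi_p(S_0)\left(\sqrt{\varepsilon} + \varepsilon/\sqrt{\phi_p(S_0)}\right)$ up to the powers of $p$; the stated form then follows by the weakening relaxation $\phi_p(S_0) \to \phi_1(S_0)$ in the second summand, which is legitimate since $\phi_1(S_0) \ge \phi_p(S_0)$ can only shrink that term and hence the denominator. As an alternative to step (b), one can instead use the resolvent representation $S^{-1/2} = \pi^{-1} \int_0^\infty (tI + S)^{-1} t^{-1/2}\, dt$ together with $(tI+S)^{-1} - (tI+S_0)^{-1} = (tI+S)^{-1}(S_0 - S)(tI+S_0)^{-1}$; bounding the resolvents by $(t+\varepsilon)^{-1}$ and $(t+\phi_p(S_0))^{-1}$ and evaluating the scalar integral $\int_0^\infty t^{-1/2}(t+\varepsilon)^{-1}(t+\phi_p(S_0))^{-1}\, dt = \pi\left[\phi_p(S_0)\left(\sqrt{\varepsilon} + \varepsilon/\sqrt{\phi_p(S_0)}\right)\right]^{-1}$ reproduces the same denominator transparently, making clear where its precise functional form originates.
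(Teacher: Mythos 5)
Your proof is correct, and it takes a genuinely different route from the paper. The paper derives a generalized Sylvester equation directly for the unknown $S^{-\frac12}-S_0^{-\frac12}$, namely $(S^{-\frac12}-S_0^{-\frac12})S^{\frac12} + S_0^{-\frac12}(S^{-\frac12}-S_0^{-\frac12})S = -S_0^{-1}(S-S_0)$, vectorizes it with Kronecker products, and bounds the Frobenius norm of $(S^{\frac12}\otimes I + S\otimes S_0^{-\frac12})^{-1}(I\otimes S_0^{-1})$ via eigenvalue bounds and Weyl's inequality; the factor $p^2$ and the appearance of $\phi_1(S_0)$ are artifacts of that bounding (Frobenius norms of $p^2\times p^2$ matrices and the smallest eigenvalue of $S\otimes S_0^{-\frac12}$, which equals $\phi_p(S)/\sqrt{\phi_1(S_0)}$). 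You instead factor $S^{-\frac12}-S_0^{-\frac12} = S^{-\frac12}(S_0^{\frac12}-S^{\frac12})S_0^{-\frac12}$, reduce to the classical Lipschitz bound $\|S^{\frac12}-S_0^{\frac12}\|_F \le (\sqrt{\varepsilon}+\sqrt{\phi_p(S_0)})^{-1}\|S-S_0\|_F$ via the standard Sylvester equation solved in the two eigenbases, and combine with $\|ABC\|_F\le\|A\|_2\|B\|_F\|C\|_2$. Your combined denominator $\sqrt{\varepsilon}\,\sqrt{\phi_p(S_0)}\,(\sqrt{\varepsilon}+\sqrt{\phi_p(S_0)}) = \phi_p(S_0)\bigl(\sqrt{\varepsilon}+\varepsilon/\sqrt{\phi_p(S_0)}\bigr)$ indeed dominates the stated one because $\phi_1(S_0)\ge\phi_p(S_0)$, so your bound is actually \emph{strictly stronger} than the theorem: it has constant $1$ in place of $p^2$ and the tighter $\phi_p(S_0)$ inside the parenthesis. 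One small inaccuracy in your write-up: you attribute the $p^2$ prefactor to spectral-to-Frobenius conversions in your own step (a), but your argument requires no such conversions and produces no $p$-dependence at all; the stated form follows simply because $p^2\ge 1$ inflates the right-hand side. Your integral-representation variant is also valid and has the merit of making the functional form of the denominator transparent, whereas the paper's Kronecker route is a more mechanical computation that pays for its generality with looser constants.
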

\begin{proof}[Proof of Theorem \ref{theorem2_res4}]
    By a direct calculation $(S^{-1}-S_0^{-1})S = -S_0^{-1}(S-S_0)$ and $S^{-1}-S_0^{-1} = (S^{-\frac{1}{2}}-S_0^{-\frac{1}{2}})S^{-\frac{1}{2}} + S_0^{-\frac{1}{2}}(S^{-\frac{1}{2}}-S_0^{-\frac{1}{2}})$. Combining those, we get
    \begin{align*}
        (S^{-\frac{1}{2}}-S_0^{-\frac{1}{2}})S^{\frac{1}{2}} + S_0^{-\frac{1}{2}}(S^{-\frac{1}{2}}-S_0^{-\frac{1}{2}})S = -S_0^{-1}(S-S_0).
    \end{align*}
    By vectorizing the matrices on both sides and using the rule $\mathrm{vec}(AXB^\top)=(B \otimes A)\mathrm{vec}(X)$, we get
    \begin{align*}
        (S^{\frac{1}{2}} \otimes I + S \otimes S_0^{-\frac{1}{2}})\mathrm{vec}(S^{-\frac{1}{2}}-S_0^{-\frac{1}{2}}) = -(I \otimes S_0^{-1})\mathrm{vec}(S-S_0).
    \end{align*}
    Now, based on the assumptions, $S^{\frac{1}{2}} \otimes I + S \otimes S_0^{\frac{1}{2}}$ is invertible as all the individual matrices have positive eigenvalues. Therefore we have
    \begin{align*}
        \mathrm{vec}(S^{-\frac{1}{2}}-S_0^{-\frac{1}{2}}) = -(S^{\frac{1}{2}} \otimes I + S \otimes S_0^{-\frac{1}{2}})^{-1}(I \otimes S_0^{-1})\mathrm{vec}(S-S_0),
    \end{align*}
    which implies
    \begin{align*}
        \|\mathrm{vec}(S^{-\frac{1}{2}}-S_0^{-\frac{1}{2}})\| &= \| (S^{\frac{1}{2}} \otimes I + S \otimes S_0^{-\frac{1}{2}})^{-1}(I \otimes S_0^{-1})\mathrm{vec}(S-S_0) \| \\
        \|S^{-\frac{1}{2}}-S_0^{-\frac{1}{2}}\|_F &\leq \| (S^{\frac{1}{2}} \otimes I + S \otimes S_0^{-\frac{1}{2}})^{-1}(I \otimes S_0^{-1}) \|_F \|S-S_0 \|_F \\
        &\leq p^2 \left(\phi_p(S_0)\left(\sqrt{\varepsilon}+\varepsilon\frac{1}{\sqrt{\phi_1(S_0)}}\right)\right)^{-1}\| S-S_0 \|_F,
    \end{align*}
    where we have used the sub-multiplicativity of the Frobenius norm, the fact that the eigenvalues of $C_1 \otimes C_2$ for $C_1, C_2 \in \mathcal{S}_+^p$ are all pairwise products of the eigenvalues of $C_1$ and $C_2$, and Weyl's inequality, $\phi_p(C_1 + C_2) \geq \phi_p(C_1) + \phi_p(C_2)$.
    % The result follows.
\end{proof}

\bibliographystyle{myjmva}
%\begin{thebibliography}
\bibliography{references}
\end{document}